\documentclass[a4paper,11pt,oneside]{article}
\usepackage[a4paper, margin=1in]{geometry}
\usepackage{setspace}
\usepackage{textcomp}
\usepackage{enumerate}
\usepackage{amsmath,amssymb,amsfonts,amsthm,mathrsfs}
\usepackage{courier}
\usepackage{algpseudocode,algorithm}
\usepackage{cancel, cases}
\usepackage{soul}
\usepackage{float,graphicx,subcaption,epstopdf}
\usepackage{longtable,tabu,threeparttable,rotating,adjustbox}
\usepackage[normalem]{ulem}
\usepackage[dvipsnames]{xcolor}
\usepackage{tcolorbox} 

\usepackage{hyperref}

\usepackage{caption}

\usepackage[round, authoryear]{natbib}

\DeclareFontFamily{OT1}{pzc}{} \DeclareFontShape{OT1}{pzc}{m}{it}%
{<-> s * [0.900] pzcmi7t}{} \DeclareMathAlphabet{\mathpzc}{OT1}{pzc}%
{m}{it}

\newcommand\tr{\text{tr}}

\newcommand{\bi}{\begin{itemize}}
\newcommand{\ei}{\end{itemize}}

\theoremstyle{plain}
\newtheorem{theorem}{Theorem}
\newtheorem{corollary}[theorem]{Corollary}
\newtheorem{lemma}[theorem]{Lemma}
\newtheorem{proposition}[theorem]{Proposition}

\theoremstyle{definition}
\newtheorem{definition}[theorem]{Definition}
\newtheorem{assumption}[theorem]{Assumption}

\theoremstyle{remark}
\newtheorem{remark}{Remark}

\numberwithin{equation}{section}
\numberwithin{theorem}{section}

\title{Risk-Sensitive Investment Management via Free Energy--Entropy Duality}
\author{S\'ebastien Lleo\footnote{inance Department and `AI, Data Science \& Business' AE, NEOMA Business School, France} \and Wolfgang Runggaldier\footnote{University of Padova, Italy, and Fellow, Institut Louis Bachelier, Paris, France}}
\date{\today}

\begin{document}
\maketitle

\begin{abstract}
We study a benchmarked risk-sensitive portfolio problem in a factor-based setting to bring together three strands of the literature: benchmarked risk-sensitive investment management, the Kuroda–Nagai change-of-measure method, and the free energy–entropy duality of \citet{daipraConnectionsStochasticControl1996}. We show that the duality yields a direct solution of the benchmarked problem by reformulating it as a linear-quadratic-Gaussian stochastic differential game under a suitable equivalent probability measure, with an entropic regularization. The resulting value function is quadratic, the optimal controls are explicit affine feedback maps, and the optimal allocation admits two complementary interpretations: as a fractional Kelly strategy and as a Kelly portfolio adjusted via the entropic regularization. This formulation, therefore, contributes both a direct analytical route to the solution and a clearer interpretation of risk sensitivity, thereby embedding the classical Kuroda–Nagai change-of-measure approach within a more general framework.  An added benefit of this formulation is that it is suitable for implementation via an RL algorithm. A simple implementation on U.S. equity data illustrates the tractability of the framework and numerically confirms the equivalence of the two approaches.
\end{abstract}

\textbf{Keywords:} risk-sensitive control, fractional Kelly strategies, free energy--entropy duality,  stochastic games, portfolio optimization.

\textbf{JEL Classification}: C32; C44; C61; C73; G11; G12.

\textbf{MSC classes:} 91G10; 91G80; 93E20.

\section{Introduction}\label{sec:Intro}

This paper brings together three strands of literature in a single tractable framework: benchmarked risk-sensitive investment management, the Kuroda--Nagai change-of-measure method, and the free energy--relative entropy duality of \citet{daipraConnectionsStochasticControl1996}.

Over the past quarter-century, risk-sensitive control has become a well-established approach to dynamic portfolio selection. One reason for its appeal in financial applications is that it admits several useful interpretations: it connects naturally to utility theory, provides a dynamic counterpart to mean--variance optimization, converges to the Kelly criterion in the limit as the risk sensitivity parameter $\theta \to 0$, and, for $\theta>0$, the associated objective relates closely to the entropic risk measure \citep{Bielecki2003, lleoDualDominanceHow2025}. At the same time, risk-sensitive investment problems are not standard control problems. In the benchmarked setting, the state process $X_s$ is uncontrolled, while the terminal reward depends on a controlled, stochastic rate-of-return process that contains an It\^o integral. Because the criterion is exponential, this It\^o integral does not disappear under expectation. For some time, this feature made such problems difficult to solve directly \citep[see][]{bipl99}. Kuroda and Nagai later showed that a suitable change of measure recasts the risk-sensitive investment criterion into a standard linear exponential-of-quadratic Gaussian (LEQG) control problem \citep{kuna02}, which can then be solved explicitly; see also \citet{Bensoussan1992}. Together, the factor-based framework of Bielecki and Pliska and the Kuroda--Nagai change-of-measure method shaped a large subsequent literature, including models of factor-based asset management \citep{bipl99,bipl00,hataRisksensitiveAssetManagement2017a}, insider trading \citep{hataRisksensitivePortfolioOptimization2018}, benchmarked asset management \citep{dall_RSBench}, asset--liability management \citep[Chapter 4]{DavisLleoBook2014}, stochastic interest rates \citep{bipl04,hataRiskSensitiveAssetManagement2021}, jump-diffusion dynamics \citep{dall_JDRSAM_Diff,dall_JDRSAM,dall_JDBench,dall_JDRSALM}, partial observation \citet{nape02, hataRisksensitiveStochasticControl2006, LleoRunggaldierSeparation24}, and, more recently, expert opinions \citep{dall_BLcontinuous,davisRisksensitiveBenchmarkedAsset2021,dall_JDBenchAltData2024}. In parallel, risk-sensitive control also saw methodological developments, particularly in the context of model uncertainty \citet{bieleckiRiskSensitiveMarkovDecision2022a}.

The benchmarked formulation we adopt here is economically natural in delegated portfolio management, where investors are often evaluated relative to a reference portfolio rather than in absolute terms. It also substantially broadens the scope of the framework: by varying the benchmark, one can represent a continuum of investment styles, ranging from strongly active mandates to tightly benchmarked or index-tracking mandates, while retaining the same underlying factor structure. At the same time, this generalization comes at essentially no additional mathematical cost, since it amounts to replacing log wealth by the log excess return relative to the benchmark. In particular, the traditional asset-only formulation is recovered as a special case by setting the benchmark to the investor's initial wealth.

This paper makes two main contributions. First, it provides a direct solution to the benchmarked risk-sensitive portfolio problem by leveraging the free energy--entropy duality and an equivalent change of probability measure \citep[see][]{daipraConnectionsStochasticControl1996}. Second, we rigorously demonstrate that this novel solution technique is fully equivalent to the classical Kuroda–Nagai approach. Overall, the advantages of the free energy–entropy duality formulation are twofold: on the one hand, it grounds the solution method in solid theoretical principles; on the other hand, it enhances the transparency of the resulting allocation structure. Notably, the optimal portfolio can now be interpreted as a fractional Kelly strategy, or equivalently, as a Kelly portfolio corrected by a term induced by the duality. Thus, our approach not only introduces a new analytical pathway to the solution but also sharpens the understanding of risk-sensitive portfolio choice as optimization under a penalized adverse probability tilt. In a companion paper, \citet{LleoRunggaldier_EntropyRegularizationinRLandRSIM_2026} relax the known-parameters assumption and demonstrate how this framework extends naturally to reinforcement learning settings.

Section 2 develops the analytical core of the paper. Using free energy--entropy duality, we reformulate the original benchmarked problem as a linear-quadratic-Gaussian (LQG) stochastic differential game under a
suitable equivalent probability measure,  with an entropic
regularization arising from the change of measure. This reformulation makes the problem tractable: the optimal controls are explicit, affine feedback rules, the value function is quadratic, and its coefficients solve a Riccati equation together with related linear equations. It also yields structural insight. The optimal strategy can be decomposed into a fractional Kelly allocation, a benchmark-tracking portfolio, and an intertemporal hedging portfolio, or equivalently represented as a Kelly allocation corrected by a term induced by the duality. In this way, the dual formulation does not merely solve the problem; it also interprets risk sensitivity through the lens of robustness to adverse probability tilts, with entropy regularization shaping the control.

Section 3 shows that free energy--entropy duality offers a direct route to the classical solution of the benchmarked risk-sensitive investment problem. First, it establishes equivalence between the duality-based approach and the Kuroda--Nagai change-of-measure method. Second, the section shows that the change of probability measure induced by duality decomposes into the Kuroda--Nagai transformation followed by an entropy-induced change. Third, this analysis not only recovers the standard optimal investment rule but also clarifies the underlying measure-theoretic and PDE mechanisms. Thus, Section 3 both validates the direct method and highlights the advantages of reframing risk-sensitive control problems in a free energy--entropy duality framework.

Section 4 provides a simple implementation on U.S. equity data. Its purpose is illustrative rather than empirical. Even so, the exercise strengthens the paper in three respects. First, it shows that the duality-based strategy can be computed and simulated in a realistic multi-factor setting. Second, it numerically confirms that the direct duality route and the Kuroda--Nagai route yield identical outcomes, as the theory predicts. Third, by comparing the performance of the resulting allocation with that of the benchmark and of the Kelly portfolio, the model's economic content becomes more concrete and highlights the trade-off between growth and risk control. The implementation is therefore not intended as a forecasting exercise, but as a proof of concept for the mathematical framework developed in the paper.

\section{Solving the Risk-Sensitive Benchmarked Investment Management Problem via the Free Energy--Entropy Duality}\label{sec:RSBAM}

This paper connects stochastic control theory and stochastic differential games, with a view toward a formulation that can later be embedded in a reinforcement-learning framework based on policy-gradient methods. To keep the terminology consistent with that broader objective, we distinguish between \emph{strategy}, \emph{control} (or \emph{action}), and \emph{policy}. A \emph{strategy} is a stochastic process $H=(h_s)_{s\in[0,T]}$ defined on the time interval $[0,T]$. The value $h_s\in\mathbb{R}^m$ denotes the \emph{control}, or \emph{action}, applied at time $s$. Informally, a \emph{policy} specifies how the control is selected as a function of the current time and state. Formally, a policy is a measurable map
\[
h:[0,T]\times\mathbb{R}^n\to\mathbb{R}^m,
\]
and the associated control process is called \emph{Markov} when it takes the form $h_s=h(s,X_s)$--see Remark \ref{rk:Markovcontrol}(i). This terminology is important for the rest of the paper: the control problem will later be recast as a game, the optimal policy will be shown to be Markov, and this same object is the natural one to parameterize in a policy-gradient framework.

Let $\left(\Omega,\mathcal{F},(\mathcal{F}_s)_{0\le s\le T},\mathbb{P}\right)$ be a filtered complete probability space supporting a $d$-dimensional $(\mathcal{F}_s)$-Wiener process $W=(W_s)_{0\le s\le T}$, with components $W_s^i$, $i=1,\ldots,d$. We take the Brownian motion to have dimension $d=n+m+1$, where $n\geq 1$ is the number of factors, $m\geq1$ is the number of financial assets, and the additional dimension captures the component of benchmark risk that is not spanned by the traded assets\footnote{If the benchmark is perfectly spanned by the traded assets, one may instead take $d=n+m$.}.

\subsection{Model for the Financial Market}\label{sec:RSBAM:model}
Using the money market account as num\'eraire, we model the discounted prices of the $m$ risky assets by
\begin{align}\label{eq:dS}
\frac{dS_s^i}{S_s^i}
=
(a_s+A_sX_s)^i\,ds+\sum_{j=1}^d \sigma_s^{ij}\,dW_s^j,
\qquad i=1,\ldots,m,
\end{align}
with initial values $S_0^i$, $i=1,\ldots,m$. The deterministic coefficient functions
$a:[0,T]\to\mathbb{R}^m,\quad
A:[0,T]\to\mathbb{R}^{m \times n},\quad
\Sigma:[0,T]\to\mathbb{R}^{m \times d},$
where $\Sigma_s=(\sigma_s^{ij})_{1\le i\le m,\;1\le j\le d}$, are assumed to be of class $C^1$.

\begin{remark}[Notation]
Throughout the paper, we write $a_s$, $A_s$, $\Sigma_s$, and similarly for other time-dependent coefficients, in place of $a(s)$, $A(s)$, $\Sigma(s)$, and so forth.
\end{remark}

We impose the following nondegeneracy condition on the risky assets.

\begin{assumption}\label{as:sigma:posdef}
For every $s\in[0,T]$, the matrix $\Sigma_s\Sigma_s'$ is positive definite.
\end{assumption}

\begin{remark}
Under Assumption \ref{as:sigma:posdef}, the instantaneous covariance matrix of risky-asset returns is nonsingular, so the traded assets generate distinct directions of risk exposure.
\end{remark}

The discounted asset drifts depend affinely on an $n$-dimensional factor process $X=(X_s)_{s\in[0,T]}$, modeled as a Gaussian diffusion with dynamics
\begin{align}\label{eq:state}
dX_s=(b_s+B_sX_s)\,ds+\Lambda_s\,dW_s,
\end{align}
where
$b:[0,T]\to\mathbb{R}^n,\quad
B:[0,T]\to\mathbb{R}^{n \times n},\quad
\Lambda:[0,T]\to\mathbb{R}^{n \times d}$
are suitable $C^1$ functions. We refer to $X$ as the \emph{state process}. The initial condition is $X_0=x$, where $x$ may be either deterministic or random. In the latter case, we assume $X_0\sim N(\mu,P_0)$, for known $\mu\in\mathbb{R}^n$ and covariance matrix $P_0\in\mathbb{R}^{n \times n}$, and that $X_0$ is independent of the Wiener process $(W_s)_{s\in[0,T]}$.

The investor evaluates portfolio performance relative to a benchmark, such as a market index or a passive reference portfolio. We model the discounted benchmark level $L=(L_s)_{s\in[0,T]}$ by
\begin{align}\label{eq:dL}
\frac{dL_s}{L_s}
=
(c_s+C_sX_s)\,ds+\Xi_s' dW_s,
\end{align}
where
$c:[0,T]\to\mathbb{R},\quad
C:[0,T]\to\mathbb{R}^n,\quad
\Xi:[0,T]\to\mathbb{R}^d$
are $C^1$ functions.

\subsection{Risk-Sensitive Control Problem}\label{sec:RSBAM:RSC_problem}

Let $H=(h_s)_{s\in[0,T]}$ be a $\mathcal{F}_s$-adapted, self-financing investment strategy that represents the vector of portfolio weights in the risky assets. Short selling and leverage are allowed, so $H$ takes its values in $\mathbb{R}^m$. The corresponding discounted wealth process $V=(V_s)_{s\in[0,T]}$ then satisfies
\begin{align}\label{eq:V}
\frac{dV_s}{V_s}
=
h_s'(a_s+A_sX_s)\,ds+h_s'\Sigma_s\,dW_s.
\end{align}

To measure cumulative performance relative to the benchmark, we introduce the log price relative
\[
R_s:=\ln\frac{V_s}{L_s},\qquad s\in[0,T].
\]
Applying It\^o's formula to \eqref{eq:V} and \eqref{eq:dL} yields
\begin{align}\label{eq:excess_return}
dR_s
&=
\ell(s,X_s,h_s)\,ds
+\big(h_s'\Sigma_s-\Xi_s'\big)\,dW_s,
\end{align}
where
\[
\ell(s,x,h)
:=
-\frac{1}{2}h'\Sigma_s\Sigma_s'h
+h'a_s
+\frac{1}{2}\Xi_s'\Xi_s
-c_s
+\big(h'A_s-C_s\big)x.
\]
The cumulative log excess return of the portfolio relative to the benchmark over $[0,T]$ is therefore $R_T-R_0$.

The investor seeks to maximize the risk-sensitive benchmarked performance criterion
\begin{align}\label{eq:J} 
J(H,\theta)
&:= -\frac{1}{\theta} \ln \mathbf{E} \left[ e^{-\theta (R_T-R_0)} \right]
                        \nonumber\\
&= -\frac{1}{\theta} \ln \mathbf{E} \left[ \exp\left\{ -\theta \int_0^T \ell(s,X_s,h_s)\, ds 
  - \theta \int_0^T \left(h_s'\Sigma_s - \Xi_s' \right) dW_s
  \right\}\right],
\end{align}
where $\theta\in(-1,0)\cup(0,\infty)$ is the risk-sensitivity parameter and $T<\infty$ is the investment horizon. Equivalently, one may minimize the exponentially transformed criterion
\begin{align}\label{eq:I}
I(H,\theta)
&:=
e^{-\theta J(H,\theta)}
=
\mathbf{E}\left[e^{-\theta(R_T-R_0)}\right].
\end{align}
In most asset-management applications, the case $\theta>0$ is the primary one of interest. In the benchmarked asset-management interpretation, $\theta>0$ corresponds to more conservative behavior than the Kelly/log-utility benchmark. By contrast, $\theta\in(-1,0)$ leads to overbetting strategies that leverage the Kelly portfolio and therefore trade expected return for additional volatility \citep{davisRisksensitiveBenchmarkedAsset2021}.

We restrict attention to the following admissible class, which is sufficient for the well-posedness of the wealth equation and for the duality arguments developed later.

\begin{definition}\label{def:classAT:fullobs}
A strategy $H=(h_s)_{s\in[0,T]}$ belongs to the class $\mathcal{A}^H$ if:
\begin{enumerate}[(i)]
\item $h_s$ is a Markov control;
\item $\mathbb{P}\!\left(\int_0^T |h_s|^2\,ds<\infty\right)=1$.
\end{enumerate}
\end{definition}

\begin{remark}\label{rk:Markovcontrol}\,
\begin{enumerate}[(i)]
\item A Markov control is a process of the form $h_s=f(s,X_s)$ for some measurable map $f:[0,T]\times\mathbb{R}^n\to\mathbb{R}^m$. We refer the reader to p. 159 in \citet{flso06} for a formal definition. Markov controls are a natural admissibility class for the present problem. Moreover, Theorem \ref{theo:verification} below shows that the optimal strategies derived are indeed generated by affine Markov controls.
\item Condition (ii) imposes square integrability on the control process. This guarantees that the controlled wealth dynamics are well defined and supports the duality arguments developed later in the paper.
\end{enumerate}
\end{remark}

In this setup, the affine factor structure gives tractability while connecting to the empirical asset-pricing and factor-investing literatures. The benchmarked formulation focuses on relative performance and accommodates strategies ranging from active management to passive benchmark replication. Finally, the exponential performance criterion captures risk sensitivity in a form that is compatible with the free energy–entropy duality approach.

\subsection{Free Energy--Entropy Duality Relation}\label{sec:RSBAM:energy-entropy}
We begin by recalling the free energy--entropy duality from \citet{daipraConnectionsStochasticControl1996}, and then show how it applies to our risk-sensitive investment management problem.

\subsubsection{General Definitions}\label{sec:energy-entropy:defs}

The \emph{free energy} of a random variable $\psi$ with respect to the probability measure $\mathbb{P}$, whenever $e^\psi$ is integrable, is defined by
\begin{align}
\mathcal{E}^\mathbb{P}\left(\psi\right)
:=
\ln \left(\int e^\psi\, d\mathbb{P}\right).
\end{align}

Let $\mathbb{Q}$ be another probability measure. The \emph{relative entropy} of $\mathbb{Q}$ with respect to $\mathbb{P}$ is
\begin{align}
D_{\mathrm{KL}}\left(\mathbb{Q} \|\mathbb{P}\right)
:=
\int \ln \left( \frac{d\mathbb{Q}}{d\mathbb{P}} \right)\,d\mathbb{Q}.
\end{align}

The \emph{free energy--entropy duality} relation, as given in \citet[Proposition 2.3(ii)]{daipraConnectionsStochasticControl1996}, states that
\begin{align}
\mathcal{E}^\mathbb{P}\left(\psi\right)
= \sup_{\mathbb{Q} \ll \mathbb{P}} \left\{
\int \psi\, d\mathbb{Q} -  D_{\mathrm{KL}}\left(\mathbb{Q}\|\mathbb{P}\right)
\right\}.
\end{align}

\subsubsection{Applying the Free Energy--Entropy Duality to the Risk-Sensitive Benchmarked Investment Problem}\label{sec:energy-entropy:DS}

Letting $\psi = -\theta (R_T-R_0)$, the duality relation implies that, for every $H \in \mathcal{A}^H$,
\begin{align}\label{eq:RS:duality:I:interim}
    \ln I(H,\theta) 
    = \ln \mathbf{E} \left[ e^{-\theta (R_T-R_0)} \right]
    = \sup_{\mathbb{P}^\Gamma \ll \mathbb{P}} \left\{\mathbf{E}^{\mathbb{P}^\Gamma} \left[ -\theta (R_T-R_0) \right] 
    - D_{\mathrm{KL}}\left(\mathbb{P}^\Gamma \| \mathbb{P} \right) \right\},
\end{align}
where the supremum is taken over probability measures absolutely continuous with respect to $\mathbb{P}$; see \citet[Section 3.1.2]{daipraConnectionsStochasticControl1996}. Here, $\mathbf{E}^{\mathbb{P}^\Gamma}[\cdot]$ denotes expectation under $\mathbb{P}^\Gamma$.

We now make the variational representation in \eqref{eq:RS:duality:I:interim} explicit by introducing a process $\Gamma$ that generates the measure change. By Theorem 7.11 in \citet{LiShI04}, there exists an $(\mathcal{F}_s)$-progressively measurable process $\Gamma=(\gamma_s)_{s\in[0,T]}$ in $\mathbb{R}^d$ such that $\int_0^T \|\gamma_s\|^2 ds<\infty$ almost surely and
\begin{align}\label{eq:Wgamma}
W^\Gamma_s := W_s - \int_0^s \gamma_u\,du
\end{align}
is a $\mathbb{P}^\Gamma$-Wiener process. Define the density process
\begin{align}\label{eq:RNderivative:gamma}
    \chi^\Gamma_s := \exp \left\{ -\frac{1}{2} \int_0^s  \| \gamma_u \|^2 du + \int_0^s \gamma_u' dW_u \right\},    \quad s \in [0,T].
\end{align} 
By Girsanov's theorem, $\frac{d\mathbb{P}^\Gamma}{d\mathbb{P}}\Big{|}_{\mathcal{F}_T} = \chi^\Gamma_T$, 
where $\frac{d\mathbb{P}^\Gamma}{d\mathbb{P}}\Big{|}_{\mathcal{F}_T}$ is the restriction of the Radon-Nikodym derivative $\frac{d\mathbb{P}^\Gamma}{d\mathbb{P}}$ to the filtration $\mathcal{F}_T$.

Under $\mathbb{P}^\Gamma$, $W^\Gamma$ is a Brownian motion, and the drift terms are shifted accordingly. Under $\mathbb{P}^\Gamma$, the state process satisfies
\begin{align}\label{eq:state:Pgamma:FO}
dX_s
=
\left(b_s+B_sX_s+\Lambda_s\gamma_s\right)\,ds
+
\Lambda_s\,dW_s^\Gamma.
\end{align}
The corresponding dynamics of the log price relative process become
\begin{align}\label{eq:excess_return:Pgamma:FO}
dR_s
&=
\Big[
-\frac12 h_s'\Sigma_s\Sigma_s'h_s
+h_s'a_s
+\frac12 \Xi_s'\Xi_s
-c_s
+\left(h_s'\Sigma_s-\Xi_s'\right)\gamma_s
+\left(h_s'A_s-C_s\right)X_s
\Big]\,ds
\nonumber\\
&\qquad
+
\left(h_s'\Sigma_s-\Xi_s'\right)\,dW_s^\Gamma.
\end{align}

Moreover, standard entropy calculations yield
\begin{align}\label{eq:DKL:Pgamma:P}
D_{\mathrm{KL}}(\mathbb{P}^\Gamma\|\mathbb{P})
=
\mathbf{E}^{\mathbb{P}^\Gamma}\left[\ln \chi_T^\Gamma\right]
=
\frac12\mathbf{E}^{\mathbb{P}^\Gamma}\left[\int_0^T \|\gamma_s\|^2\,ds\right].
\end{align}

Therefore, we can express the \emph{free energy--entropy duality} as
\begin{align}\label{eq:RS:duality:I}
    \ln I(H,\theta) = \sup_{ \Gamma} \mathbf{E}^{\mathbb{P}^\Gamma} \left[ -\theta (R_T-R_0) 
    - \frac{1}{2} \int_0^T \|\gamma_s\|^2 ds  \right],
\end{align}
for all choices of $H \in\mathcal{A}^H$, and where the supremum is taken over progressively measurable processes $\Gamma = \left(\gamma_s\right)_{s \in [0,T]}$.

To prepare the dynamic programming analysis, we restrict attention to a Markovian admissible class for the maximizing player.

\begin{definition}\label{def:classAgammaT:fullobs}

A \emph{strategy} $\Gamma = \left(\gamma_s\right)_{s \in [0,T]} \in \mathbb{R}^{d}$  is in class $\mathcal{A}^\Gamma$ if the \emph{control} process $\gamma_s$ satisfies the following conditions:

\begin{enumerate}[(i)]

    \item  $\gamma_s$ is a Markov control;

    \item $\mathbb{P} \left(\int_{0}^{T} \left| \gamma_s \right|^2 ds < +\infty \right) = 1$;

    \item $(\chi^\Gamma_s)_{s \in [0,T]}$, defined at \eqref{eq:RNderivative:gamma} above, is an exponential martingale.

\end{enumerate}

\end{definition}
The Markovian structure of the problem suggests that an optimizer $\Gamma^*$ should lie in $\mathcal{A}^\Gamma$. We therefore impose the following restriction, which is later validated by the verification theorem.
\begin{assumption}\label{as:Gamma:Markov}
    \begin{align}\label{eq:RS:duality:I:restriction}
    \sup_{ \Gamma } \mathbf{E}^{\mathbb{P}^\Gamma} \left[ -\theta (R_T-R_0) 
    - \frac{1}{2} \int_0^T \|\gamma_s\|^2 ds  \right]
    =\sup_{ \Gamma \in \mathcal{A}^\Gamma} \mathbf{E}^{\mathbb{P}^\Gamma} \left[ -\theta (R_T-R_0) 
    - \frac{1}{2} \int_0^T \|\gamma_s\|^2 ds  \right].
\end{align}
\end{assumption}

Expanding \eqref{eq:RS:duality:I} using the dynamics of $R$ under $\mathbb{P}^\Gamma$, we obtain
\begin{align}\label{eq:RS:duality:I:2}
    & \ln I(H,\theta) 
                           \nonumber\\
    =& \sup_{ \Gamma} \mathbf{E}^{\mathbb{P}^\Gamma} \Bigg[ 
    -\theta \int_0^T \left\{
        - \frac{1}{2} h_s'\Sigma_s\Sigma_s'h_s 
        + h_s' a_s 
        + \frac{1}{2}\Xi_s'\Xi_s - c_s  
        + \left(h_s'\Sigma_s - \Xi_s' \right)\gamma_s
        + \left(h_s' A_s - C_s \right)X_s \right\}ds
                            \nonumber\\
    & - \theta \int_0^T \left(h_s'\Sigma_s - \Xi_s' \right) dW^\Gamma_s
    - \frac{1}{2} \int_0^T \|\gamma_s\|^2 ds  
    \Bigg]
                                \nonumber\\
    \underset{\text{Assumption }\ref{as:Gamma:Markov}}{=}& \sup_{ \Gamma \in \mathcal{A}^\Gamma} \mathbf{E}^{\mathbb{P}^\Gamma} \Bigg[ 
    \theta \int_0^T g\left(s,X_s,h_s,\gamma_s;\theta\right) ds \Bigg],
\end{align}
where the running payoff $g$ is defined by
\begin{align}\label{eq:g}
    g(s,x,h,\gamma;\theta) 
    :=  \frac{1}{2} h'\Sigma_s\Sigma_s'h
        - h' a_s 
        - \frac{1}{2}\Xi_s'\Xi_s 
        + c_s  
        - \left(h'\Sigma_s - \Xi_s' \right)\gamma
        - \left(h' A_s - C_s \right)x 
        - \frac{1}{2\theta} \|\gamma\|^2.
\end{align}
The last equality follows because the stochastic integral, $\int_{0}^{T} \left(h_s' \Sigma_s - \Xi_s' \right) dW^{\Gamma}_s$, is a $\mathbb{P}^\Gamma$-martingale for $H\in\mathcal{A}^H$ and $\Gamma\in\mathcal{A}^\Gamma$.

In what follows, we focus on the practically most relevant case $\theta>0$. The case $\theta\in(-1,0)$ can be handled similarly\footnote{For the criterion $J$, instead of maximizing a concave function when $\theta>0$, we minimize a convex function when $\theta<0$.}, but it leads to overbetting relative to the Kelly benchmark and is therefore of limited practical interest. Since the logarithm is strictly increasing, minimizing $I(H,\theta)$ is equivalent to minimizing $\ln I(H,\theta)$. We make this equivalence explicit in the next display.
\begin{align}\label{eq:EEDuality:inf}
    & \inf_{H \in \mathcal{A}^H}\ln I(H,\theta) 
    = \inf_{H \in \mathcal{A}^H} \sup_{ \Gamma \in \mathcal{A}^\Gamma} \mathbf{E}^{\mathbb{P}^\Gamma} \left[ \theta \int_{0}^{T} g(s,X_s,h_s,\gamma_s;\theta) ds
    \right]
                            \nonumber\\
    \Leftrightarrow&
    \ln \inf_{H \in \mathcal{A}^H} I(H,\theta) 
    = \inf_{H \in \mathcal{A}^H} \sup_{ \Gamma \in \mathcal{A}^\Gamma} \mathbf{E}^{\mathbb{P}^\Gamma} \left[ \theta \int_{0}^{T} g(s,X_s,h_s,\gamma_s;\theta) ds  \right]
                            \nonumber\\
    \Leftrightarrow&
    \inf_{H \in \mathcal{A}^H} I(H,\theta) 
    = \exp\left\{ \inf_{H \in \mathcal{A}^H} \sup_{ \Gamma \in \mathcal{A}^\Gamma} \mathbf{E}^{\mathbb{P}^\Gamma} \left[ \theta \int_{0}^{T} g(s,X_s,h_s,\gamma_s;\theta) ds
    \right] 
    \right\}
,
\end{align}
where the first equivalence follows from Lemma \ref{lem:infexp:app} in Appendix \ref{app:Lemma:Meneghini}.

The right-hand side of \eqref{eq:EEDuality:inf} admits a natural interpretation as a two-player stochastic differential game, which we now define.

\begin{definition}[The implied stochastic differential game]\label{game}
For $H=(h_s)_{s\in[0,T]}\in\mathcal{A}^{H}$ and $\Gamma=(\gamma_s)_{s\in[0,T]}\in\mathcal{A}^\Gamma$, the implied stochastic differential game is defined by
\begin{align}\label{eq:def:SDG}
\inf_{H \in \mathcal{A}^H} \sup_{ \Gamma \in \mathcal{A}^\Gamma} \mathbf{E}^{\mathbb{P}^\Gamma} \left[ \theta \int_{0}^{T} g(s,X_s,h_s,\gamma_s;\theta) ds
\right],
\end{align}
where $H$ is the minimizing player's strategy, $\Gamma$ is the maximizing player's strategy, and the state dynamics are given by \eqref{eq:state:Pgamma:FO}.
\end{definition}

\subsubsection{Solving the Stochastic Differential Game}\label{S.2.3.3}

This subsection studies the stochastic differential game introduced in Definition \ref{game}. The game is linear-quadratic-Gaussian. We show that the optimal strategies are generated by affine policies of the state variable $X$, and hence are Markov controls. Under Markov controls, the controlled state process remains Markov. Therefore, at any intermediate time $t\in[0,T)$, conditional expectations of functionals of $(X_s,h_s,\gamma_s)_{s\in[t,T]}$, given $X_t=x$, can be expressed as functions of $(t,x)$. We then show that the game's value function, defined in \eqref{eq:valuefunction:u}, is quadratic in the state variable 
$x$, with time-varying coefficients.

As in the previous subsection, we assume that $\theta>0$ and consider a generic starting time $t\in[0, T)$. For convenience, we normalize the benchmark level at time $t$ by setting $L_t=V_t$, so that the corresponding log price relative satisfies $R_t=0$. The corresponding continuation criteria are then
\begin{align}
    J(t,x;H,\theta) :=& -\frac{1}{\theta} \ln \mathbf{E}_{t,x} \left[ e^{-\theta R_T} \right]    \label{eq:J:general}\\
    I(t,x;H,\theta) 
    :=& \; e^{-\theta J(t,x;H,\theta)} 
    = \mathbf{E}_{t,x} \left[ e^{-\theta R_T} \right].
                    \label{eq:I:general}
\end{align}
The continuation version of the last relation in \eqref{eq:EEDuality:inf} is
\begin{align}\label{eq:EEDuality:inf:SDG} 
\inf_{H \in \mathcal{A}^H} I(t,x;H,\theta) 
    = \exp\left\{ \inf_{H \in \mathcal{A}^H} \sup_{ \Gamma \in \mathcal{A}^\Gamma} \mathbf{E}_{t,x}^{\mathbb{P}^\Gamma} \left[ \theta \int_{t}^{T} g(s,X_s,h_s,\gamma_s;\theta) ds
    \right]\right\},
\end{align}
which yields the continuation version of the stochastic differential game:
\begin{align}\label{eq:EEDuality:SDG:general}
    \inf_{H \in \mathcal{A}^H} \sup_{ \Gamma \in \mathcal{A}^\Gamma} \mathbf{E}_{t,x}^{\mathbb{P}^\Gamma} \left[ \theta \int_{t}^{T} g(s,X_s,h_s,\gamma_s;\theta) ds
    \right].
\end{align}
We define the associated value function $u(t,x)$ by
\begin{align}\label{eq:valuefunction:u}
    u(t,x) := \inf_{H \in \mathcal{A}^H} \sup_{ \Gamma \in \mathcal{A}^\Gamma} \mathbf{E}_{t,x}^{\mathbb{P}^\Gamma} \left[ \theta \int_{t}^{T} g(s,X_s,h_s,\gamma_s;\theta) ds \right].
\end{align}

\begin{lemma}\label{lem:valuefunctions}
The value function satisfies
\begin{align}\label{eq:u_and_utilde}
    u(t,x) 
    = \ln \inf_{H \in \mathcal{A}^H}I(t,x;H,\theta)
    = - \theta \sup_{H \in \mathcal{A}^H} J(t,x;H,\theta).
\end{align}
\end{lemma}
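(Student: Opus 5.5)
The lemma should follow from the continuation version of the duality, \eqref{eq:EEDuality:inf:SDG}, together with the definitions \eqref{eq:J:general}--\eqref{eq:I:general} and \eqref{eq:valuefunction:u}. The plan has three steps, done in order: a pointwise duality identity for each fixed strategy, passing the infimum through the logarithm, and rewriting the result in terms of $J$.

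\emph{Pointwise duality.} I first fix $(t,x)$ and a strategy $H\in\mathcal{A}^H$. I then repeat the argument of \eqref{eq:RS:duality:I:interim}--\eqref{eq:RS:duality:I:2} on the interval $[t,T]$ under the conditional law $\mathbb{P}_{t,x}$, using $\psi=-\theta R_T$ with the normalization $R_t=0$. The free energy--entropy duality of \citet[Proposition 2.3(ii)]{daipraConnectionsStochasticControl1996} gives
\[
\ln I(t,x;H,\theta)=\sup_{\mathbb{P}^\Gamma\ll\mathbb{P}}\Big\{\mathbf{E}^{\mathbb{P}^\Gamma}_{t,x}[-\theta R_T]-D_{\mathrm{KL}}(\mathbb{P}^\Gamma\|\mathbb{P})\Big\}.
\]
Next I make three substitutions:
\begin{itemize}
\item represent $\mathbb{P}^\Gamma$ through a drift $\Gamma$, as in \eqref{eq:Wgamma};
\item compute the entropy on $[t,T]$ as in \eqref{eq:DKL:Pgamma:P};
\item use Assumption \ref{as:Gamma:Markov} to restrict to $\Gamma\in\mathcal{A}^\Gamma$.
\end{itemize}
The stochastic integral $\int_t^T(h_s'\Sigma_s-\Xi_s')\,dW_s^\Gamma$ has zero $\mathbb{P}^\Gamma$-expectation, so it drops out. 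This yields
\[
\ln I(t,x;H,\theta)=\sup_{\Gamma\in\mathcal{A}^\Gamma}\mathbf{E}^{\mathbb{P}^\Gamma}_{t,x}\Big[\theta\int_t^T g(s,X_s,h_s,\gamma_s;\theta)\,ds\Big].
\]

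\emph{Infimum through the logarithm.} I take $\inf_{H\in\mathcal{A}^H}$ of both sides of the identity above. Since $\ln$ is continuous and strictly increasing on $(0,\infty)$, Lemma \ref{lem:infexp:app} gives $\inf_H\ln I=\ln\inf_H I$. The right-hand side then becomes exactly $u(t,x)$ by \eqref{eq:valuefunction:u}, which proves the first equality in \eqref{eq:u_and_utilde}. Equivalently, this step is just the logarithm of \eqref{eq:EEDuality:inf:SDG}.

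\emph{Rewriting in terms of $J$.} By \eqref{eq:I:general}, $\ln I(t,x;H,\theta)=-\theta J(t,x;H,\theta)$. Because $\theta>0$, the map $y\mapsto-\theta y$ is decreasing, so
\[
\inf_H(-\theta J)=-\theta\sup_H J.
\]
This gives the second equality.

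The only delicate point is the martingale property that kills the stochastic integral. I would justify it by the admissibility conditions in Definitions \ref{def:classAT:fullobs} and \ref{def:classAgammaT:fullobs}, as the paper does after \eqref{eq:g}; if needed, I would fall back on a localization argument combined with the Gaussian/affine structure, which gives integrability. Beyond that, I must take care with the boundary cases $I=+\infty$ or $\inf_H I=0$. These are handled by adopting the conventions $\ln 0=-\infty$ and $\ln\infty=\infty$, under which all three equalities remain valid.
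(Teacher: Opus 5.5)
Your proposal is correct and follows the paper's proof. The paper gets the first equality directly from the definition \eqref{eq:valuefunction:u} together with the continuation duality \eqref{eq:EEDuality:inf:SDG}, and the second from \eqref{eq:I:general} and $\inf_H(-\theta J)=-\theta\sup_H J$ for $\theta>0$, exactly as in your last two steps. Your first step simply re-derives \eqref{eq:EEDuality:inf:SDG} on $[t,T]$, which the paper takes as given. Your remarks on the martingale property and the boundary conventions add care but do not change the route.
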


\begin{proof}
The first equality follows directly from the definition of $u$ in \eqref{eq:valuefunction:u} together with \eqref{eq:EEDuality:inf:SDG}. For the second equality, use \eqref{eq:I:general} and the identity
\[
\inf_{H \in \mathcal{A}^H} \big(-\theta J(t,x;H,\theta)\big)
=
-\theta \sup_{H \in \mathcal{A}^H} J(t,x;H,\theta),
\qquad \theta>0.
\]
\end{proof}

\begin{remark}\label{rq:EED:control}
The free energy--entropy duality rewrites the original risk-sensitive investment criterion under $\mathbb{P}$ as the objective of an LQG game under $\mathbb{P}^\Gamma$. The running payoff $g$ depends on both the investment strategy $H$ and the change-of-measure process $\Gamma$. In particular, it contains the interaction term $h_s'\Sigma_s\gamma_s$. By contrast, the state dynamics in \eqref{eq:state:Pgamma:FO} are controlled only by the change-of-measure process $\Gamma$.
\end{remark}

The function $g$ is affine in the state variable $x$ and quadratic in the control variables $h$ and $\gamma$. Hence, the stochastic differential game associated with $u(t,x)$ is a linear-quadratic-Gaussian (LQG) game. The associated Bellman--Isaacs partial differential equation is
\begin{align}\label{eq:BellmanIsaacs:u}
    \frac{\partial u(s,x)}{\partial s}
    + \mathcal{H}^+\left(s,x,Du(s,x), D^2 u(s,x)\right)
    = 0,
\end{align}
with terminal condition $u(T,x)=0$. Here, $Du(s,x)=\left(\frac{\partial u}{\partial x_1}(s,x),\ldots,\frac{\partial u}{\partial x_n}(s,x)\right)'$
and
$D^2u(s,x)=\left[\frac{\partial^2 u}{\partial x_i\partial x_j}(s,x)\right]_{i,j=1}^n.$

The Hamiltonian is given by
\begin{align}\label{eq:Hamiltonian:H+}
   \mathcal{H}^+(s,x,p,M)
   := \inf_{h \in \mathbb{R}^m} \sup_{\gamma \in \mathbb{R}^d} \left\{ \left[b_s + B_s x + \Lambda_s \gamma \right]' p 
    + \frac{1}{2} \tr \left(\Lambda_s\Lambda_s' M \right)
    +\theta g(s,x,h,\gamma;\theta) \right\}.
\end{align}

For completeness, we also define the Hamiltonian 
\begin{align}\label{eq:Hamiltonian:H-}
   \mathcal{H}^-(s,x, p, M) := \sup_{\gamma \in \mathbb{R}^d}\inf_{h \in \mathbb{R}^m} \left\{ \left[b_s + B_s x + \Lambda_s \gamma \right]' p 
    + \frac{1}{2} \tr \left(\Lambda_s\Lambda_s' M \right)
    +\theta g(s,x,h,\gamma;\theta) \right\}. 
\end{align}

For notational convenience, let
\begin{align}
    \mathcal{P}_s^{+}(\theta)
    &:= I_d + \theta \Sigma_s'(\Sigma_s\Sigma_s')^{-1} \Sigma_s,
                    \label{def:calPplus}\\
    \mathcal{P}_s^{-}(\theta)
    &:= I_d - \frac{\theta}{\theta+1}\Sigma_s'(\Sigma_s\Sigma_s')^{-1} \Sigma_s.
                    \label{def:calPminus}
\end{align}
If we define $\Pi_s := \Sigma_s'(\Sigma_s\Sigma_s')^{-1}\Sigma_s$, then $\Pi_s^2=\Pi_s$, so $\mathcal{P}_s^{-}(\theta)\mathcal{P}_s^{+}(\theta)=I_d$. Hence,
\begin{align}\label{eq:projinverse}
\left(\mathcal{P}_s^{+}(\theta)\right)^{-1}=\mathcal{P}_s^{-}(\theta).
\end{align}

\begin{proposition}\label{prop:candidate_controls}
The Bellman--Isaacs equation \eqref{eq:BellmanIsaacs:u} yields a unique pair of candidate feedback policies $\hat h$ and $\hat\gamma$, given equivalently by either of the following two representations:
\begin{align}
        \hat{h}(s,x,Du(s,x))
            =&  \frac{1}{\theta+1}\left(\Sigma_s\Sigma_s'\right)^{-1}
        \left( a_s + A_s x
        + \theta \Sigma_s\Xi_s 
        + \Sigma_s\Lambda_s' Du(s,x)
        \right)
                            \label{eq:hhat}\\    
    \hat{\gamma}(s,x,Du(s,x)) 
=& \Lambda_s' Du(s,x) - \theta \left(\Sigma_s'\hat{h}(s,x,Du(s,x)) - \Xi_s \right)
                        \label{eq:gammahat:1}\\
    \phantom{\hat{\gamma}(s,x,Du(s,x))}
=&  \mathcal{P}_s^{-}(\theta)\Lambda_s' Du(s,x) 
        - \frac{\theta}{\theta+1} \Sigma_s'\left(\Sigma_s\Sigma_s'\right)^{-1}
        \left( a_s + A_s x \right) 
        + \theta \mathcal{P}_s^{-}(\theta) \Xi_s,
                        \label{eq:gammahat:2}
    \end{align}   
    or 
     \begin{align}
        \hat{h}(s,x,Du(s,x))
        =&  \left(\Sigma_s\Sigma_s'\right)^{-1}  
        (a_s + A_s x) + \left(\Sigma_s\Sigma_s'\right)^{-1}\Sigma_s \hat{\gamma}(s,x,Du(s,x))
                                                \label{eq:hhat:alt}\\
        \hat{\gamma}(s,x,Du(s,x)) 
    =& \mathcal{P}_s^{-}(\theta)
        \left[ 
            - \theta \Sigma_s' \left(\Sigma_s\Sigma_s'\right)^{-1} (a_s + A_s x)
            + \theta \Xi_s
            + \Lambda_s'Du(s,x)
        \right]
                        \label{eq:gammahat:alt}
    \end{align}  
     Moreover, the two Hamiltonians coincide: $\mathcal{H}^+(s,x, p, M) = \mathcal{H}^-(s,x, p, M)$. Therefore, the Isaacs condition holds, and the stochastic differential game is well defined.
\end{proposition}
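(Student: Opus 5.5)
The plan is to analyze the bracketed expression in \eqref{eq:Hamiltonian:H+} as a function of $(h,\gamma)$ for fixed $(s,x,p,M)$. Write it as
\[
\Phi(h,\gamma) := [b_s+B_sx+\Lambda_s\gamma]'p+\tfrac12\tr(\Lambda_s\Lambda_s'M)+\theta g(s,x,h,\gamma;\theta).
\]
From \eqref{eq:g}, $\Phi$ is a quadratic function of $(h,\gamma)$. Its $h$-block is $\frac{\theta}{2}h'\Sigma_s\Sigma_s'h$, which is strictly convex by Assumption \ref{as:sigma:posdef}. Its $\gamma$-block is $-\frac12\|\gamma\|^2$, which is strictly concave. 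The cross term is $-\theta h'\Sigma_s\gamma$. So $\Phi$ is strictly convex--concave on $\mathbb{R}^m\times\mathbb{R}^d$, and any critical point is a saddle point.

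\textbf{Step 1: first-order conditions.} The condition $\partial_\gamma\Phi=0$ gives
\[
\Lambda_s'p-\theta(\Sigma_s'h-\Xi_s)-\gamma=0,
\]
which is \eqref{eq:gammahat:1} with $p=Du$. The condition $\partial_h\Phi=0$ gives
\[
\theta\big(\Sigma_s\Sigma_s'h-a_s-A_sx-\Sigma_s\gamma\big)=0,
\]
and dividing by $\theta>0$ yields \eqref{eq:hhat:alt}.

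\textbf{Step 2: solve the linear system.} Substitute \eqref{eq:gammahat:1} into $\Sigma_s\Sigma_s'h=a_s+A_sx+\Sigma_s\gamma$. The $h$-terms combine into $(1+\theta)\Sigma_s\Sigma_s'h$, which gives \eqref{eq:hhat}.

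For $\gamma$, substitute \eqref{eq:hhat:alt} into \eqref{eq:gammahat:1}. This gives
\[
\mathcal{P}^+_s(\theta)\gamma=\Lambda_s'p-\theta\Sigma_s'(\Sigma_s\Sigma_s')^{-1}(a_s+A_sx)+\theta\Xi_s.
\]
Inverting with \eqref{eq:projinverse} yields \eqref{eq:gammahat:alt}. To get \eqref{eq:gammahat:2}, use $\mathcal{P}^-_s(\theta)\Sigma_s'=\frac{1}{1+\theta}\Sigma_s'$, which holds because $\Pi_s\Sigma_s'=\Sigma_s'$.

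Uniqueness follows because $\mathcal{P}^+_s(\theta)$ and $\Sigma_s\Sigma_s'$ are invertible, so the linear system has exactly one solution. Equivalently, the Hessian of $\Phi$ is nondegenerate.

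\textbf{Step 3: Isaacs condition.} This is the main point requiring care, because the domains $\mathbb{R}^m$ and $\mathbb{R}^d$ are unbounded, so a compact minimax theorem cannot be quoted directly. I would argue explicitly in both orders.

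For $\mathcal{H}^+$: for each fixed $h$, $\sup_\gamma\Phi$ is attained at $\gamma(h)=\Lambda_s'p-\theta(\Sigma_s'h-\Xi_s)$. Substituting gives a quadratic in $h$ with leading matrix
\[
\tfrac{\theta}{2}\Sigma_s\Sigma_s'+\tfrac{\theta^2}{2}\Sigma_s\Sigma_s'=\tfrac{\theta(1+\theta)}{2}\Sigma_s\Sigma_s',
\]
which is positive definite. Hence the infimum over $h$ is finite and is attained at $\hat h$.

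For $\mathcal{H}^-$: for each fixed $\gamma$, $\inf_h\Phi$ is attained at $h(\gamma)=(\Sigma_s\Sigma_s')^{-1}(a_s+A_sx+\Sigma_s\gamma)$. Substituting gives a quadratic in $\gamma$ with leading matrix
\[
-\tfrac12\big(I_d+\theta\Pi_s\big)=-\tfrac12\mathcal{P}^+_s(\theta),
\]
which is negative definite. Hence the supremum over $\gamma$ is finite and is attained at $\hat\gamma$.

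In both orders the optimizer is the same unique critical point $(\hat h,\hat\gamma)$. The saddle inequalities
\[
\Phi(\hat h,\gamma)\le\Phi(\hat h,\hat\gamma)\le\Phi(h,\hat\gamma)
\]
hold by convexity in $h$ and concavity in $\gamma$. Combined with the general inequality $\mathcal{H}^-\le\mathcal{H}^+$, they give
\[
\mathcal{H}^+\le\Phi(\hat h,\hat\gamma)\le\mathcal{H}^-,
\]
so $\mathcal{H}^+=\mathcal{H}^-$.
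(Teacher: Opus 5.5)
Your proof is correct. Its skeleton matches the paper's: you treat the bracket in $\mathcal{H}^\pm$ as a quadratic in $(h,\gamma)$ and optimize in both orders, with leading matrices $\tfrac{\theta(1+\theta)}{2}\Sigma_s\Sigma_s'$ and $-\tfrac12\mathcal{P}^+_s(\theta)$. The decisive step, however, is argued differently.

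The paper obtains the two representations as the outputs of the two iterated optimizations. It then proves $\mathcal{H}^+=\mathcal{H}^-$ by computing the optimal value in each order and checking that the two values agree, using $(\mathcal{P}^+_s(\theta))^{-1}=\mathcal{P}^-_s(\theta)$ and an algebraic expansion.

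You instead obtain both representations by solving the joint first-order system in two ways. You get the Isaacs condition from the convex--concave structure: the unique critical point satisfies $\Phi(\hat h,\gamma)\le\Phi(\hat h,\hat\gamma)\le\Phi(h,\hat\gamma)$. Combined with weak duality $\mathcal{H}^-\le\mathcal{H}^+$, this forces equality without ever computing the value.

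Your route is shorter and avoids the expansion the paper only sketches ("develop the expression"). It also directly yields the pointwise saddle inequalities of Remark~\ref{rk:candidate:saddle}. In fact, the saddle inequalities plus weak duality already give finiteness and equality on their own. The order-by-order computations in your Step 3 therefore only confirm that each iterated optimizer is the same point.

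The paper's route, by contrast, produces the explicit minimax value of the Hamiltonian. That value is exactly what is substituted into the Bellman--Isaacs equation to derive the Riccati and linear equations of Theorem~\ref{theo:sol_IsaacPDE}, so the extra computation is reused there.
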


Technically, we only need to solve the problem \eqref{eq:BellmanIsaacs:u} with the Hamiltonian $\mathcal{H}^+$ given at \eqref{eq:Hamiltonian:H+}.  Therefore, Proposition \ref{prop:candidate_controls}  gives us a stronger result than required and justifies the interpretation of the regularized problem as a stochastic differential game. 

\begin{proof} 
See Appendix \ref{app:proof:prop:candidate_controls}.
\end{proof}

\begin{remark}\label{rk:candidate:saddle}
The candidate policies satisfy the following local saddle-point condition:
\begin{align}\label{eq:candidate:saddle}
&\left[b_s + B_s x + \Lambda_s \gamma \right]' Du(s,x) 
    + \frac{1}{2} \tr \left(\Lambda_s\Lambda_s' D^2u(s,x) \right)
    +\theta g\left(s,x,\hat{h}(s,x,Du(s,x)),\gamma;\theta\right)
                                        \nonumber\\
\leq&
\left[b_s + B_s x + \Lambda_s \hat{\gamma}(s,x,Du(s,x)) \right]' Du(s,x) 
    + \frac{1}{2} \tr \left(\Lambda_s\Lambda_s' D^2u(s,x) \right)
                                        \nonumber\\
&\quad
    +\theta g\left(s,x,\hat{h}(s,x,Du(s,x)),\hat{\gamma}(s,x,Du(s,x));\theta\right)
                                    \nonumber\\
\leq&
\left[b_s + B_s x + \Lambda_s \hat{\gamma}(s,x,Du(s,x)) \right]' Du(s,x) 
    + \frac{1}{2} \tr \left(\Lambda_s\Lambda_s' D^2u(s,x) \right)
    +\theta g\left(s,x,h,\hat{\gamma}(s,x,Du(s,x));\theta\right),
\end{align}
for all $h\in\mathbb{R}^m$ and $\gamma\in\mathbb{R}^d$. We use this pointwise saddle-point property later in the proof of Verification Theorem \ref{theo:verification} to establish the corresponding global saddle-point condition.
\end{remark}

\begin{remark}[Measure $\mathbb{P}^{\hat\Gamma}$]
Since the candidate control strategies in Proposition \ref{prop:candidate_controls} admit equivalent representations, we use here the first pair \eqref{eq:hhat}--\eqref{eq:gammahat:1}. The corresponding density increment over $[t,T]$ is
\begin{align}\label{eq:RNderivative:gammahat}
\chi^{\hat\Gamma}_{[t,T]}
:=& \exp \bigg\{ -\frac{1}{2} \int_t^T  \bigg\| \Lambda_s' Du(s,X_s) - \theta \left(\Sigma_s'\hat{h}(s,X_s,Du(s,X_s)) - \Xi_s \right) \bigg\|^2 ds
                                                \nonumber\\
     & +\int_t^T \left[\Lambda_s' Du(s,X_s) - \theta \left(\Sigma_s'\hat{h}(s,X_s,Du(s,X_s)) - \Xi_s \right)\right]' dW_s \bigg\}.
\end{align}
Substituting \eqref{eq:gammahat:2} yields the equivalent expression
\begin{align*}
\chi^{\hat\Gamma}_{[t,T]}
=& \exp \Bigg\{ -\frac{1}{2} \int_t^T  \bigg\| 
        - \frac{\theta}{\theta+1}\Sigma_s'\left(\Sigma_s\Sigma_s'\right)^{-1}
        \left( a_s + A_s X_s \right)
        + \theta \mathcal{P}_s^{-}(\theta)\Xi_s 
        + \mathcal{P}_s^{-}(\theta)\Lambda_s' Du(s,X_s)
        \bigg\|^2 ds 
                                                \nonumber\\
     & + \int_t^T \Bigg[
        - \frac{\theta}{\theta+1}\Sigma_s'\left(\Sigma_s\Sigma_s'\right)^{-1}
        \left( a_s + A_s X_s \right)
        + \theta \mathcal{P}_s^{-}(\theta)\Xi_s 
        + \mathcal{P}_s^{-}(\theta)\Lambda_s' Du(s,X_s)
            \Bigg]' dW_s \Bigg\}.
\end{align*}
\end{remark}

\begin{remark}[Special case of the Kelly portfolio]\label{rk:Kelly:1}
Formally, as $\theta \to 0$, one recovers the Kelly or growth-optimal portfolio. In that limit, the density increment over $[t,T]$ simplifies to
\begin{align}\label{eq:RNderivative:gammahat:theta_so_0}
    \chi^{\hat\Gamma}_{[t,T]}
     = \exp \left\{ -\frac{1}{2} \int_t^T  \| \Lambda_s' Du(s,X_s) \|^2 ds + \int_t^T \left[\Lambda_s' Du(s,X_s) \right]' dW_s \right\}.
\end{align}
\end{remark}

\begin{theorem}\label{theo:sol_IsaacPDE}
The Bellman--Isaacs equation \eqref{eq:BellmanIsaacs:u}, associated with the candidate strategy pair $(\hat H,\hat\Gamma)$ defined in Proposition \ref{prop:candidate_controls}, admits a solution of the form 
\begin{align}\label{eq:sol:u}
u(t,x)=-\theta\left(\frac12 x'Q_t x + q_t'x + k_t\right),
\end{align}
where $Q:[0,T]\to\mathbb R^{n \times n}$, $q:[0,T]\to\mathbb R^n$, and $k:[0,T]\to\mathbb R$ are deterministic functions such that:
    \begin{enumerate}[1.]
    \item $Q$ is the unique symmetric positive semidefinite solution to the matrix Riccati equation
    \begin{align}\label{eq:Q:Riccati}
        &\dot{Q}_s
            - \theta  Q_s\Lambda_s \mathcal{P}_s^{-}(\theta) \Lambda_s'Q_s
            + \left(B_s' - \frac{\theta}{(\theta+1)} A_s'(\Sigma_s\Sigma_s')^{-1} \Sigma_s\Lambda_s' 
            \right)Q_s 
                                            \nonumber\\
        &+ Q_s\left(B_s - \frac{\theta}{(\theta+1)} \Lambda_s\Sigma_s'(\Sigma_s\Sigma_s')^{-1}A_s \right)
            + \frac{1}{(\theta+1)} A_s' (\Sigma_s\Sigma_s')^{-1} A_s
        = 0, \quad Q(T)=0;
    \end{align}
    \item $q$ solves the linear ODE
    \begin{align}\label{eq:q:ODE}
        & \dot{q}_s
        + \left(B_s'- \frac{\theta}{\theta+1} A_s'(\Sigma_s\Sigma_s')^{-1}\Sigma_s\Lambda_s' \right)q_s
        - \theta Q_s \Lambda_s\mathcal{P}_s^{-}(\theta)\Lambda_s' q_s
                                        \nonumber\\
        &   + Q_s \left[b_s - \frac{\theta}{\theta+1}  \Lambda_s\Sigma_s'(\Sigma_s\Sigma_s')^{-1}
    \left( a_s + \theta \Sigma_s\Xi_s \right) + \theta \Lambda_s\Xi_s\right]
                                        \nonumber\\
        &   - C_s'
        + \frac{1}{\theta+1} A_s'(\Sigma_s\Sigma_s')^{-1}
    \left( a_s + \theta \Sigma_s\Xi_s \right)
        = 0, \quad q(T)=0.
    \end{align}
    \item $k$ is given by the integral
    \begin{align}\label{eq:k:integral}
        & k_t
        = \int_t^T \Bigg\{ 
        \frac{\theta}{2} q_s'\Lambda_s \mathcal{P}_s^{-}(\theta)\Lambda_s' q_s
        - \left[b_s'+ \theta \Xi_s'\Lambda_s' - \frac{\theta}{\theta+1} \left( a_s + \theta \Sigma_s\Xi_s \right)'(\Sigma_s\Sigma_s')^{-1}\Sigma_s\Lambda_s' \right]q_s 
        - \frac{1}{2} \tr \left(\Lambda_s\Lambda_s' Q_s \right)
                                        \nonumber\\
        &   - \frac{1}{2(\theta+1)} \left(a_s + \theta \Sigma_s\Xi_s\right)'(\Sigma_s\Sigma_s')^{-1} \left(a_s + \theta \Sigma_s\Xi_s\right)
        + c_s
        + \frac{\theta-1}{2}\Xi_s'\Xi_s \Bigg\} ds.                               
    \end{align}    
    \end{enumerate}    
\end{theorem}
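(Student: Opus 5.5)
Substitute the quadratic ansatz into the Bellman--Isaacs equation, evaluate the Hamiltonian at the saddle point from Proposition \ref{prop:candidate_controls}, and match coefficients in $x$.

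\textbf{Step 1: the ansatz.} Set $u(t,x)=-\theta\left(\tfrac12 x'Q_tx+q_t'x+k_t\right)$. Then
\[
Du=-\theta(Q_sx+q_s), \qquad D^2u=-\theta Q_s .
\]
This makes $\hat h$ and $\hat\gamma$ in \eqref{eq:hhat}--\eqref{eq:gammahat:2} affine in $x$.

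\textbf{Step 2: value of the Hamiltonian at the saddle point.} Since $\mathcal H^+=\mathcal H^-$ and the inner objective is jointly quadratic (strictly convex in $h$, strictly concave in $\gamma$ for $\theta>0$), $\mathcal H^+(s,x,p,M)$ equals the objective at $(\hat h,\hat\gamma)$. I would compute it in two stages rather than substituting directly.

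First, maximize over $\gamma$ for fixed $h$. The maximizer is $\gamma=\Lambda_s'p-\theta(\Sigma_s'h-\Xi_s)$, and it contributes
\[
\tfrac{1}{2\theta}\,\|\Lambda_s'p-\theta(\Sigma_s'h-\Xi_s)\|^2 .
\]
Second, minimize the resulting quadratic in $h$. Its Hessian is proportional to $\theta(1+\theta)\Sigma_s\Sigma_s'$, so completing the square gives
\[
-\frac{1}{2(\theta+1)}\,v'(\Sigma_s\Sigma_s')^{-1}v, \qquad v=a_s+A_sx+\theta\Sigma_s\Xi_s+\Sigma_s\Lambda_s'p ,
\]
with the sign to be verified.

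Then substitute $p=-\theta(Q_sx+q_s)$. Collect the terms in $\|\Lambda_s'p\|^2$ left over from the $\gamma$-stage together with the $h$-stage correction $\Lambda_s\Sigma_s'(\Sigma_s\Sigma_s')^{-1}\Sigma_s\Lambda_s'$. Using the identity $I-\tfrac{\theta}{\theta+1}\Pi_s=\mathcal P_s^-(\theta)$ from \eqref{eq:projinverse}, these combine into $\theta^2 Q_s\Lambda_s\mathcal P_s^-(\theta)\Lambda_s'Q_s$ after dividing by $-\theta$.

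\textbf{Step 3: match coefficients.} Divide the PDE by $-\theta$ and equate the coefficients of $x'(\cdot)x$, of $x$, and the constant term to zero. This should yield \eqref{eq:Q:Riccati}, \eqref{eq:q:ODE}, and $\dot k_s=-\{\cdots\}$. The terminal condition $u(T,\cdot)=0$ gives $Q_T=0$, $q_T=0$, $k_T=0$, hence the integral form \eqref{eq:k:integral}. The linear cross terms $B_s'Q_s$ and $Q_sB_s$ and the $A$-coupling terms come from $(b_s+B_sx)'p$ and from the $x$-part of $v$. The constant term gathers $c_s$, $\tfrac{\theta-1}{2}\Xi_s'\Xi_s$ (from $-\tfrac12\Xi'\Xi$ and the $\theta\|\Xi\|^2/2$ produced by the $\gamma$-stage), the trace term, and the $q$-terms.

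\textbf{Step 4: the Riccati equation.} This is the step I expect to be the main obstacle. Write \eqref{eq:Q:Riccati} backward in time with $\tau=T-s$. It reads
\[
\tfrac{d}{d\tau}Q=\tilde B'Q+Q\tilde B-\theta\,Q\Lambda\mathcal P^-\Lambda'Q+\tfrac{1}{\theta+1}A'(\Sigma\Sigma')^{-1}A ,
\]
with the sign checked in Step 3. The quadratic term has the negative-semidefinite sign, since $\mathcal P^-_s(\theta)$ is positive definite with eigenvalues $1$ and $1/(1+\theta)$. The constant term is positive semidefinite.

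This is the standard LQ-control (not game) Riccati structure. Global existence on $[0,T]$, symmetry and positive semidefiniteness then follow from the classical result for Riccati equations with $Q(T)=0$, positive semidefinite forcing and negative semidefinite quadratic coefficient (e.g.\ Wonham). One way to see this is to identify $\tfrac12x'Q_tx$ with the value of an auxiliary deterministic LQ regulator, which gives a priori bounds $0\le Q_t\le$ (solution of the linear Lyapunov equation) and hence rules out blow-up. Uniqueness follows from local Lipschitz continuity of the right-hand side.

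Once $Q$ exists on $[0,T]$, \eqref{eq:q:ODE} is a linear ODE with continuous coefficients, so it has a unique global solution, and $k$ is given explicitly by integration.
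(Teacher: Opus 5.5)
Your Steps 1--3 follow the paper's proof. The paper also gets the saddle value of $F$ by maximizing in $\gamma$ first and then minimizing in $h$ (this is \eqref{eq:proof:candidate_controls:F:min:h}). It then substitutes $Du=-\theta(Q_sx+q_s)$ and $D^2u=-\theta Q_s$, divides by $-\theta$, and sets the three coefficients to zero. Two small bookkeeping points:
\begin{itemize}
\item The identity $I_d-\frac{\theta}{\theta+1}\Pi_s=\mathcal{P}_s^{-}(\theta)$ is the definition \eqref{def:calPminus}, not \eqref{eq:projinverse}.
\item After dividing by $-\theta$, the gradient-squared term contributes $-\theta Q_s\Lambda_s\mathcal{P}_s^{-}(\theta)\Lambda_s'Q_s$, not a $\theta^2$ term.
\end{itemize}

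Your Step 4 goes beyond the paper. The paper never proves that \eqref{eq:Q:Riccati} has a global symmetric positive semidefinite solution on $[0,T]$; it only calls it a Riccati equation. Your argument is sound:
\begin{itemize}
\item $\mathcal{P}_s^{-}(\theta)$ has eigenvalues $1$ and $1/(1+\theta)$, so the quadratic coefficient is $\Lambda_s R_s^{-1}\Lambda_s'$ with $R_s=\theta^{-1}\mathcal{P}_s^{+}(\theta)>0$.
\item The forcing $\frac{1}{\theta+1}A_s'(\Sigma_s\Sigma_s')^{-1}A_s$ is positive semidefinite.
\item So this is a standard regulator Riccati equation. The regulator representation gives $Q_t\ge 0$, the Lyapunov bound rules out blow-up, local Lipschitz continuity gives uniqueness, and transposition invariance gives symmetry.
\end{itemize}

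The one claim you cannot complete as written is the last step of Step 3, that integrating yields \eqref{eq:k:integral}. Setting the constant coefficient to zero gives $\dot k_s=-G_s$, with $w_s:=a_s+\theta\Sigma_s\Xi_s$ and
\begin{align*}
G_s={}&b_s'q_s+\theta\Xi_s'\Lambda_s'q_s-\tfrac{\theta}{\theta+1}w_s'(\Sigma_s\Sigma_s')^{-1}\Sigma_s\Lambda_s'q_s-\tfrac{\theta}{2}q_s'\Lambda_s\mathcal{P}_s^{-}(\theta)\Lambda_s'q_s\\
&+\tfrac12\tr(\Lambda_s\Lambda_s'Q_s)+\tfrac{1}{2(\theta+1)}w_s'(\Sigma_s\Sigma_s')^{-1}w_s-c_s-\tfrac{\theta-1}{2}\Xi_s'\Xi_s .
\end{align*}
With $k_T=0$ this forces $k_t=\int_t^T G_s\,ds$, which is exactly the negative of the right-hand side of \eqref{eq:k:integral}. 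The paper's proof makes the same slip: it effectively writes $k_t=\int_t^T\dot k_s\,ds$.

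A quick check confirms the stated formula is wrong. Take $\Lambda\equiv0$, $A\equiv0$, $C\equiv0$, $\Xi\equiv0$ and constant $a,\Sigma,c$. Then $Q\equiv0$ and $q\equiv0$, and the Hamiltonian reduces to $\theta c-\frac{\theta}{2(\theta+1)}a'(\Sigma\Sigma')^{-1}a$. Hence
\[
k_t=(T-t)\Bigl[\tfrac{1}{2(\theta+1)}a'(\Sigma\Sigma')^{-1}a-c\Bigr],
\]
which is the certainty-equivalent excess growth $\sup_H J$, as it must be here. Formula \eqref{eq:k:integral} returns its negative.

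So carry out the constant-term computation explicitly and record the corrected sign for $k$ rather than asserting agreement with \eqref{eq:k:integral}. Your $Q$ and $q$ equations do agree with \eqref{eq:Q:Riccati} and \eqref{eq:q:ODE}.
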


\begin{proof} 
See Appendix \ref{app:proof:theo:sol_IsaacPDE}.
\end{proof}

\begin{remark}\label{rk:gammahat:nonlinearterm} 
The term $\Lambda_s'Du(s,x)$ in the definition of $\hat{\gamma}$ at \eqref{eq:gammahat:1} generates, after optimization over $\gamma$, a quadratic gradient term in the Bellman--Isaacs PDE. In the present benchmarked risk-sensitive setting, this term appears through the matrix
$\Lambda_s \mathcal{P}_s^{-}(\theta) \Lambda_s'$. Consequently, the equations \eqref{eq:Q:Riccati}--\eqref{eq:k:integral} contain terms that are characteristic of risk-sensitive control problems and absent from standard LQG problems, namely, 
$-\theta Q_s \Lambda_s\mathcal{P}_s^-(\theta)\Lambda_s' Q_s$ in \eqref{eq:Q:Riccati}, $-\theta Q_s \Lambda_s\mathcal{P}_s^-(\theta)\Lambda_s' q_s$ in \eqref{eq:q:ODE}, and $\frac{\theta}{2} q_s'\Lambda_s \mathcal{P}_s^{-}(\theta) \Lambda_s' q_s$ in \eqref{eq:k:integral}.
\end{remark}

\begin{theorem}[Verification]\label{theo:verification}
\;
Let $u$ be the value function defined at \eqref{eq:valuefunction:u} and having the quadratic expression in \eqref{eq:sol:u}. Let $H^* = \left(h^*(s,X_s) \right)_{s \in [t,T]} := \left(\hat{h}(s,X_s,Du(s,X_s)) \right)_{s \in [t,T]}$ and $\Gamma^* = \left(\gamma^*(s,X_s) \right)_{s \in [t,T]}  := \left(\hat{\gamma}(s,X_s,Du(s,X_s)) \right)_{s \in [t,T]}$, where $\hat{h}(\cdot)$ and $\hat{\gamma}(\cdot)$ are the candidate policies given in Proposition \ref{prop:candidate_controls}, and let 
\[
h_s^* := h^*(s,X_s), \qquad \gamma_s^* := \gamma^*(s,X_s).
\] 
Then,
\begin{enumerate}[1.]
    \item The pair of strategies $(H^*,\Gamma^*)$ is admissible in the sense that they belong to the classes $\mathcal{A}^H$ and $\mathcal{A}^{\Gamma}$ respectively. Moreover, their defining policies $h^*(\cdot)$ and $\gamma^*(\cdot)$ are Borel-measurable maps.

       \item The value function $u$ solves the stochastic differential game with the objective \eqref{eq:EEDuality:SDG:general}, in the sense that the saddle point condition
        \begin{align}\label{eq:verification:saddle}
            \mathbf{E}_{t,x}^{\mathbb{P}^\Gamma} \left[ \theta \int_{t}^{T} g(s,X_s,h^*_s,\gamma_s;\theta) ds \right]
            \leq 
            u(t,x) 
            \leq 
            \mathbf{E}_{t,x}^{\mathbb{P}^{\Gamma^*}} \left[ \theta \int_{t}^{T} g(s,X_s,h_s,\gamma^*_s;\theta) ds \right]
        \end{align}
        holds for any admissible strategies $H = \left(h_s\right)_{s\in[t,T]} \in \mathcal{A}^H$ and $\Gamma = \left(\gamma_s\right)_{s \in [t,T]} \in \mathcal{A}^\Gamma$, and that
        \begin{align}
            u(t,x) =\mathbf{E}_{t,x}^{\mathbb{P}^{\Gamma^*}} \left[ \theta \int_{t}^{T} g(s,X_s,h^*_s,\gamma^*_s;\theta) ds \right].
        \end{align} 
    
    \item The strategies $H^*$ and $\Gamma^* $ are optimal.

\end{enumerate}
    
\end{theorem}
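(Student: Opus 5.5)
The proof follows the standard verification route for LQG games, built on the pointwise saddle-point inequalities of Remark \ref{rk:candidate:saddle} and the quadratic solution $u$ of Theorem \ref{theo:sol_IsaacPDE}.

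\textbf{Admissibility (item 1).} Since $Du(s,x)=-\theta(Q_sx+q_s)$ with $Q,q$ continuous on $[0,T]$, the formulas \eqref{eq:hhat} and \eqref{eq:gammahat:1} show that $h^*(s,x)$ and $\gamma^*(s,x)$ are affine in $x$ with continuous, hence bounded, coefficients. Both maps are therefore Borel measurable and generate Markov controls. Under $\mathbb{P}$, and also under $\mathbb{P}^{\Gamma^*}$, the state $X$ solves a linear SDE with bounded coefficients, so $X$ is Gaussian with continuous paths. This gives $\int_0^T(|h^*_s|^2+|\gamma^*_s|^2)ds<\infty$ almost surely.

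The substantive admissibility point is that $\chi^{\Gamma^*}$ is a true martingale. Novikov's condition fails globally for a drift that is linear in a Gaussian state. I would instead use the Bene\v{s}-type criterion for linear-growth Girsanov kernels: if $|\gamma(s,x)|\le K(1+|x|)$ and $X$ solves an SDE with linear-growth coefficients, then the stochastic exponential is a martingale (Liptser--Shiryaev, or Karatzas--Shreve Cor.~3.5.16). Alternatively, I would apply Novikov on small subintervals $[t_i,t_{i+1}]$ where $\mathbf{E}\exp(\epsilon|X|^2)<\infty$, and chain the pieces through conditional expectations.

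\textbf{Saddle point (item 2).} Fix an admissible $\Gamma\in\mathcal{A}^\Gamma$ and play $h^*$ against it. I would apply It\^o's formula to $u(s,X_s)$ under $\mathbb{P}^\Gamma$, where $X$ has dynamics \eqref{eq:state:Pgamma:FO}:
\[
u(T,X_T)-u(t,x)=\int_t^T\Big[\partial_s u+(b_s+B_sX_s+\Lambda_s\gamma_s)'Du+\tfrac12\tr(\Lambda_s\Lambda_s'D^2u)\Big]ds+\int_t^T Du'\Lambda_s\,dW^\Gamma_s .
\]
The Bellman--Isaacs equation together with the first inequality in \eqref{eq:candidate:saddle} bounds the drift from above by $-\theta g(s,X_s,h^*_s,\gamma_s;\theta)$. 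Using $u(T,\cdot)=0$ and taking $\mathbf{E}^{\mathbb{P}^\Gamma}_{t,x}$ then yields the left inequality of \eqref{eq:verification:saddle}, provided the stochastic integral has zero mean. The symmetric argument under $\mathbb{P}^{\Gamma^*}$ with an arbitrary $h$ uses the second inequality in \eqref{eq:candidate:saddle} and gives the right inequality. Taking $(h^*,\gamma^*)$ turns both inequalities into equalities, which gives the representation of $u$.

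\textbf{Main obstacle: the local martingale term.} Here $Du$ is affine in $X$, but under an arbitrary admissible $\mathbb{P}^\Gamma$ we only know that $X$ has finite paths. My plan is to localize with stopping times $\tau_k=\inf\{s\ge t:|X_s|\ge k\}\wedge T$. On $[t,\tau_k]$ the stochastic integral is a true martingale, and the inequality holds with $T$ replaced by $\tau_k$ and the term $u(\tau_k,X_{\tau_k})$ retained. I would then pass to the limit $k\to\infty$.

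The lower bound on $u(\tau_k,\cdot)$ does not come for free. Because $Q\ge0$, we have $u\le -\theta(q'x+k)$, which is not bounded below, so some uniform integrability must be supplied. I would restrict attention to strategies with finite expected cost, which is implicit in the game formulation; otherwise the inequality is trivial. I would then use the coercive term $-\frac{1}{2\theta}\|\gamma\|^2$ in $g$, together with the entropy identity \eqref{eq:DKL:Pgamma:P}, to bound $\mathbf{E}^{\mathbb{P}^\Gamma}\sup_s|X_s|^2$ via Gronwall. Given that bound, dominated convergence applies to $u(\tau_k,X_{\tau_k})$, which is quadratic in $X$, and monotone or dominated convergence handles the running cost. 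For $\Gamma^*$, the Gaussianity of $X$ under $\mathbb{P}^{\Gamma^*}$ gives these moments directly, since its drift is linear.

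\textbf{Optimality (item 3).} The saddle inequalities give $\sup_\Gamma(\cdot)(h^*,\Gamma)\le u\le\inf_H(\cdot)(H,\Gamma^*)$, and combined with the definition \eqref{eq:valuefunction:u} this shows that the upper and lower values coincide and are attained at $(H^*,\Gamma^*)$. Lemma \ref{lem:valuefunctions} then transfers this optimality to the original criteria $I$ and $J$, so $H^*$ is optimal for the risk-sensitive benchmarked problem.
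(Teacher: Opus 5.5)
Your proposal follows the paper's proof closely. Affine feedback maps give Borel measurability and the Markov property. It\^o/Dynkin applied to the quadratic $u$, together with the pointwise saddle inequalities of Remark \ref{rk:candidate:saddle} and $u(T,\cdot)=0$, gives the two-sided bound, and item 3 follows from that bound.

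Where you depart from the paper you are more careful, and in one place you repair it. For condition (iii) of $\mathcal{A}^\Gamma$, the paper argues that bounded coefficients make Novikov's condition hold. That is false in general for a kernel affine in an unbounded Gaussian state: already $\mathbf{E}\exp\{\frac{c^2}{2}\int_0^T W_s^2\,ds\}=\infty$ once $|c|T\ge\pi/2$. Your Bene\v{s}-type linear-growth criterion, or Novikov on short subintervals chained by conditioning, is the correct argument. Likewise, the paper invokes Dynkin's formula with no localization. You localize, and you correctly note that $u$ is unbounded below, so $\mathbf{E}^{\mathbb{P}^\Gamma}_{t,x}[u(\tau_k,X_{\tau_k})]\to 0$ needs uniform integrability on the maximizer's side. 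On the minimizer's side your argument is complete: the $\mathbb{P}^{\Gamma^*}$-dynamics of $X$ do not depend on $h$, and $X$ is Gaussian there.

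The step that does not work as written is how you obtain that integrability. The coercive bound $\theta g(s,x,h^*(s,x),\gamma;\theta)\le C(1+|x|^2)-\frac14\|\gamma\|^2$ turns a finite expected cost only into $\mathbf{E}^{\mathbb{P}^\Gamma}\int_t^T\|\gamma_s\|^2ds\le C\big(1+\mathbf{E}^{\mathbb{P}^\Gamma}\int_t^T|X_s|^2ds\big)$. Gronwall gives $\mathbf{E}^{\mathbb{P}^\Gamma}\sup_s|X_s|^2\le C\big(1+|x|^2+\mathbf{E}^{\mathbb{P}^\Gamma}\int_t^T\|\gamma_s\|^2ds\big)$. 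The constants are not small, so the two estimates feed into each other and never close. Also, ``otherwise the inequality is trivial'' only covers a cost of $-\infty$; an undefined or $+\infty$ expectation is exactly what must be ruled out.

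The clean fix is to work with $\Gamma$ of finite relative entropy, $D_{\mathrm{KL}}(\mathbb{P}^\Gamma\|\mathbb{P})=\frac12\mathbf{E}^{\mathbb{P}^\Gamma}\int_t^T\|\gamma_s\|^2ds<\infty$. State this as a convention or build it into $\mathcal{A}^\Gamma$; nothing is lost, since infinite-entropy measures contribute $-\infty$ in the duality. Gronwall alone then gives $\mathbf{E}^{\mathbb{P}^\Gamma}\sup_s|X_s|^2<\infty$. Dominated convergence handles both $u(\tau_k,X_{\tau_k})$ and the running cost, because $|\theta g(s,X_s,h^*_s,\gamma_s;\theta)|\le C(1+|X_s|^2+\|\gamma_s\|^2)$. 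No coercivity argument is needed.
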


\begin{proof}
See Appendix \ref{app:proof:theo:verif}.
\end{proof}

\begin{remark}
    The pair of strategies $(H^*, \Gamma^*)$ forms a \emph{saddle point} of the stochastic differential game, as evidenced by  Remark \ref{rk:candidate:saddle} and equation \eqref{eq:verification:saddle}. Equivalently, it is a \emph{Nash equilibrium in pure strategies} for the associated stochastic game.
\end{remark}

We conclude this section with a well-known decomposition\footnote{We refer the reader to \citep{dall_RSBench} and Chapter 3 in \citep{DavisLleoBook2014}, among others.} of the optimal investment strategy as a fractional Kelly strategy. The strategy combines three constituent portfolios: the Kelly or log-optimal portfolio, a benchmark-tracking portfolio, and an intertemporal hedging portfolio.

\begin{proposition}[Fractional Kelly strategy]\label{prop:PFKS}
Let $f:=\frac{1}{\theta+1}$. Then, for $s\in[t,T]$, the optimal investment strategy admits the decomposition
\begin{align}\label{eq:PFKS:decomp}
h^*(s,X_s)
=
f\,h^K(s,X_s)
+
(1-f)\,h_s^\mathrm{Bench}
-
(1-f)\,h^\mathrm{IHP}(s,X_s),
\end{align}
where
\begin{align}
h^K(s,X_s)
&=
\left(\Sigma_s\Sigma_s'\right)^{-1}\left(a_s+A_sX_s\right),
\label{eq:Kelly}\\
h_s^\mathrm{Bench}
&=
\left(\Sigma_s\Sigma_s'\right)^{-1}\Sigma_s\Xi_s,
\label{eq:BenchTracking}\\
h^\mathrm{IHP}(s,X_s)
&=
\left(\Sigma_s\Sigma_s'\right)^{-1}\Sigma_s\Lambda_s'\left(q_s+Q_sX_s\right).
\label{eq:IHP}
\end{align}
\end{proposition}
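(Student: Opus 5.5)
The plan is a direct substitution: combine the closed-form feedback policy in \eqref{eq:hhat} with the quadratic value function of Theorem \ref{theo:sol_IsaacPDE}, then split the result into the three constituent portfolios. By Theorem \ref{theo:verification}, the optimal control is $h^*(s,X_s)=\hat h(s,X_s,Du(s,X_s))$. Differentiating \eqref{eq:sol:u} and using the symmetry of $Q_s$ gives
\[
Du(s,x) = -\theta\left(Q_s x + q_s\right).
\]

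Next I substitute this gradient into \eqref{eq:hhat}. This gives
\[
h^*(s,X_s)=\frac{1}{\theta+1}\left(\Sigma_s\Sigma_s'\right)^{-1}\left(a_s+A_sX_s+\theta\Sigma_s\Xi_s-\theta\Sigma_s\Lambda_s'(q_s+Q_sX_s)\right).
\]
By linearity of $(\Sigma_s\Sigma_s')^{-1}$, I split this into three terms. The first is $\frac{1}{\theta+1}(\Sigma_s\Sigma_s')^{-1}(a_s+A_sX_s)$, which equals $f\,h^K(s,X_s)$ by \eqref{eq:Kelly}. The remaining two terms carry the factor $\frac{\theta}{\theta+1}$, so I record the identity
\[
1-f = 1-\frac{1}{\theta+1} = \frac{\theta}{\theta+1}.
\]
The second term is $\frac{\theta}{\theta+1}(\Sigma_s\Sigma_s')^{-1}\Sigma_s\Xi_s$, which equals $(1-f)\,h_s^{\mathrm{Bench}}$ by \eqref{eq:BenchTracking}. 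The third term is $-\frac{\theta}{\theta+1}(\Sigma_s\Sigma_s')^{-1}\Sigma_s\Lambda_s'(q_s+Q_sX_s)$, which equals $-(1-f)\,h^{\mathrm{IHP}}(s,X_s)$ by \eqref{eq:IHP}. Summing the three terms yields \eqref{eq:PFKS:decomp}.

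The argument is pure algebra, so there is no real obstacle. The only points needing care are using the symmetry of $Q_s$ (from Theorem \ref{theo:sol_IsaacPDE}) when differentiating the quadratic form, and tracking the sign coming from the factor $-\theta$ in $u$, which turns the hedging term into a subtraction.
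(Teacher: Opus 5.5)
Your proposal is correct and follows essentially the same route as the paper: substitute $Du(s,X_s)=-\theta(Q_sX_s+q_s)$ into \eqref{eq:hhat}, use $1-f=\frac{\theta}{\theta+1}$, and read off the three terms. Your remarks on the symmetry of $Q_s$ and on the sign coming from the factor $-\theta$ are the right points of care.
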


\begin{proof}
Using \eqref{eq:hhat} together with $Du(s,X_s)=-\theta\left(Q_sX_s+q_s\right)$,    
we obtain
\begin{align*}
h^*(s,X_s)
=
\frac{1}{\theta+1}\left(\Sigma_s\Sigma_s'\right)^{-1}(a_s+A_sX_s)
+
\frac{\theta}{\theta+1}\left(\Sigma_s\Sigma_s'\right)^{-1}\Sigma_s\Xi_s
-
\frac{\theta}{\theta+1}\left(\Sigma_s\Sigma_s'\right)^{-1}\Sigma_s\Lambda_s'(q_s+Q_sX_s),    
\end{align*}
which is exactly \eqref{eq:PFKS:decomp} for the definitions \eqref{eq:Kelly}--\eqref{eq:IHP}.

\end{proof}

The free energy--entropy duality also yields a complementary interpretation of the optimal allocation. In addition to the fractional Kelly decomposition above, the optimal strategy can be written as the Kelly portfolio corrected by a term induced by the optimal change-of-measure control $\gamma^*$, as shown in the next corollary.

\begin{corollary}[Optimal Allocation as a Regularized Kelly Strategy]\label{coro:regularizedKelly}
The optimal investment strategy satisfies
\begin{align}\label{eq:regularized_Kelly_strategy}
h^*(s,X_s)
=
h^K(s,X_s)
+
\left(\Sigma_s\Sigma_s'\right)^{-1}\Sigma_s\,\gamma^*(s,X_s).
\end{align}
\end{corollary}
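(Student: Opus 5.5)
This is essentially a direct consequence of the alternative representation \eqref{eq:hhat:alt} in Proposition \ref{prop:candidate_controls}, evaluated along the optimal feedback. By Theorem \ref{theo:verification}, the optimal strategies are $h^*(s,X_s)=\hat h(s,X_s,Du(s,X_s))$ and $\gamma^*(s,X_s)=\hat\gamma(s,X_s,Du(s,X_s))$, with $Du(s,x)=-\theta(Q_sx+q_s)$ from Theorem \ref{theo:sol_IsaacPDE}. Substituting these into \eqref{eq:hhat:alt} and recognising the Kelly portfolio \eqref{eq:Kelly} in the first term gives \eqref{eq:regularized_Kelly_strategy} immediately.

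Since the proposition asserts that the two representations are equivalent, I would make that consistency explicit rather than rely on it silently. Starting from \eqref{eq:hhat} and \eqref{eq:gammahat:1}, write $p=Du(s,x)$ and $v=\Sigma_s'\hat h$. Equation \eqref{eq:gammahat:1} gives
\[
\Sigma_s\hat\gamma=\Sigma_s\Lambda_s'p-\theta\Sigma_s\Sigma_s'\hat h+\theta\Sigma_s\Xi_s .
\]
Multiplying \eqref{eq:hhat} by $(\theta+1)\Sigma_s\Sigma_s'$ gives
\[
(\theta+1)\Sigma_s\Sigma_s'\hat h=a_s+A_sx+\theta\Sigma_s\Xi_s+\Sigma_s\Lambda_s'p .
\]
Subtracting the first relation from the second yields
\[
\Sigma_s\Sigma_s'\hat h=a_s+A_sx+\Sigma_s\hat\gamma .
\]
Inverting $\Sigma_s\Sigma_s'$, which is legitimate under Assumption \ref{as:sigma:posdef}, gives \eqref{eq:hhat:alt}. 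This is precisely the first-order condition in $h$ of the Hamiltonian \eqref{eq:Hamiltonian:H+} at fixed $\gamma$: from \eqref{eq:g}, $\partial_h g=\Sigma_s\Sigma_s'h-a_s-A_sx-\Sigma_s\gamma$.

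Evaluating at $x=X_s$ and $p=Du(s,X_s)$ then gives the corollary. There is no real obstacle here; the only point requiring care is to check that the same pair $(\hat h,\hat\gamma)$ is used in both places, i.e.\ the saddle point, so that the $\gamma$ appearing in the correction term is the optimal $\gamma^*$ and not an arbitrary $\gamma$.
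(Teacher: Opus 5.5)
Your proposal is correct and follows the paper's proof, which reads the identity directly off the alternative representation \eqref{eq:hhat:alt} together with the definition \eqref{eq:Kelly} of $h^K$. Your explicit check that \eqref{eq:hhat} and \eqref{eq:gammahat:1} imply $\Sigma_s\Sigma_s'\hat h=a_s+A_sx+\Sigma_s\hat\gamma$ (the first-order condition in $h$) is a worthwhile addition: it confirms that the $\gamma^*$ in the correction term is the same one used in the verification theorem, instead of relying silently on the asserted equivalence of the two representations.
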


\begin{proof}
The identity follows immediately from \eqref{eq:hhat:alt} and the definition of $h^K$ in \eqref{eq:Kelly}.

\end{proof}

Equation \eqref{eq:regularized_Kelly_strategy} mirrors the structure of the free energy--entropy duality. At the criterion level, the duality rewrites the original risk-sensitive problem as an LQG problem under a transformed probability measure, with an additional quadratic penalty on the change-of-measure control $\gamma$. At the control level, the optimal allocation is the Kelly (or log-optimal) strategy corrected by the term $(\Sigma_s\Sigma_s')^{-1}\Sigma_s\,\gamma^*(s,X_s)$ induced by the optimal antagonistic control $\gamma^*$. This correction also admits a natural geometric interpretation: it is the image of the optimal antagonistic control $\gamma^*(s,X_s)\in\mathbb{R}^d$ under the linear map $(\Sigma_s\Sigma_s')^{-1}\Sigma_s$.

\section{Understanding the Free Energy–Entropy Duality Through the Lens of the Kuroda–Nagai Change of Probability Measure}\label{sec:Kuroda--Nagai}

This section analyzes the role of the change of measure in the application of the free energy–entropy duality to the benchmarked risk-sensitive investment management problem. To this end, we compare the direct one-step approach developed in the previous section with a two-step route in which one first performs a change of measure \emph{\`a la} \citet{kuna02} and then applies the duality under the transformed probability measure. Because the Kuroda–Nagai measure change underlies many of the standard solution techniques in the risk-sensitive investment management literature \citep[see, e.g.,][and the references therein]{DavisLleoBook2014,davisRisksensitiveBenchmarkedAsset2021,dall_JDBenchAltData2024}, it provides a natural benchmark against which to interpret the direct duality-based construction. The comparison reveals how the change of measure induced by the duality in the approach proposed in Section \ref{sec:RSBAM} relates to the classical Kuroda–Nagai transformation, and thereby clarifies the role of the entropy penalty in \eqref{eq:DKL:Pgamma:P}.   

\subsection{A Two-Step Kuroda–Nagai/Duality Route}\label{sec:Ph_and_EEentropy:general}
This subsection shows that the direct solution derived in Section \ref{sec:RSBAM:energy-entropy} is equivalent to a sequential two-step procedure in which one first performs the Kuroda–Nagai change of measure and then applies the free energy–entropy duality under the transformed measure. Consistently with the derivations in subsections \ref{sec:energy-entropy:DS}-\ref{S.2.3.3}, and in light of the results proved in Section \ref{S.2.3.3}, we will focus here on classes of Markov controls. 

\subsubsection{Kuroda--Nagai Change of Measure from $\mathbb{P}$ to $\mathbb{P}^H$}\label{sec:Ph_and_EEentropy:KurodaNagaichange}

We start by expressing the risk-sensitive benchmarked criterion in  \eqref{eq:J:general} as follows: 
\begin{align}\label{eq:J:twostep:2}
J(t,x;H,\theta)
&:= -\frac{1}{\theta} \ln \mathbf{E}_{t,x} \left[ e^{-\theta R_T} \right]
= -\frac{1}{\theta} \ln \mathbf{E}_{t,x} \left[    
\exp \left\{ \theta \int_t^T g_1(s,X_s,h_s;\theta) ds \right\} \chi^H_{[t,T]}
\right],
\end{align}
where
\begin{align}\label{eq:g:twostep}
g_1(s,x,h;\theta)
&:= \frac{\theta+1}{2}h'\Sigma_s\Sigma_s'h 
- h' \left( a_s + A_s x \right)
- \theta h'\Sigma_s\Xi_s
+ \left(c_s + C_s x\right)
+ \frac{\theta -1}{2}\Xi_s'\Xi_s,
\end{align}
and the Dol\'eans-Dade exponential $\chi^H_{[t,T]}$ is defined via  
\begin{align}\label{eq:expmart:twostep}
    \chi^H_{[t,T]} 
    &:= \exp \bigg\{ -\theta \int_t^T\left(h_s' \Sigma_s - \Xi_s' \right) dW_s  
        -\frac{1}{2} \theta^2     \int_t^T \left(h_s' \Sigma_s - \Xi_s' \right)
        \left(\Sigma_s' h_s - \Xi_s \right)ds 
    \bigg\}.
\end{align}

Here, we need to assume that the investor's strategy $H = \left(h_s\right)_{s \in [t,T]}$ is in class $\mathcal{A}^{{HN}}$ defined below.

\begin{definition}\label{def:classAT:KN:fullobs:twostep}

$H = \left(h_s\right)_{s \in [t,T]}$ with values in $\mathbb{R}^{m}$ is in class $\mathcal{A}^{{HN}}$ if the control $h_s$ satisfies the following conditions:

\begin{enumerate}[(i)]
    \item  $h_s$ is a Markov control;

    \item $P\left(\int_t^T \left| h_s \right|^2 ds < +\infty \right) = 1$;

    \item the Dol\'eans exponential $\chi^H_{[t,T]}$ defined as \eqref{eq:expmart:twostep} is an exponential martingale.
\end{enumerate}

\end{definition}

For $H \in \mathcal{A}^{{HN}}$, let $\mathbb{P}^H$ be the probability measure on $(\Omega,\mathcal{F}_T)$ defined via the Radon-Nikodym derivative
\begin{align}\label{eq:Ph:twostep}
    \frac{d\mathbb{P}^H}{d\mathbb{P}}\Big{|}_{\mathcal{F}_T}:= \chi^H_{[t,T]}.
\end{align}
Moreover,
\begin{align}\label{eq:def:Wh}
    W^H_s = W_s +\theta \int_t^s \left(\Sigma_u' h_u - \Xi_u \right) du    
\end{align}
is a standard Wiener process under $\mathbb{P}^H$. The $\mathbb{P}^H$-dynamics of the state process $X_s$ is 
\begin{align}\label{eq:state:Ph:FO:twostep}
    d X_s
    = \left[b_s + B_s X_s -\theta\Lambda_s \left( \Sigma_s' h_s - \Xi_s\right)\right] ds
                + \Lambda_s dW^H_s
        , \;  s \in [t,T].
\end{align}
Finally, the $\mathbb{P}^H$-risk-sensitive criterion $J^H$ and exponentially-transformed criterion $I^H$ are
\begin{align}
    J^H(t,x;H,\theta) &:= -\frac{1}{\theta} \ln \mathbf{E}_{t,x}^H \left[    
\exp \left\{ \theta \int_t^T g_1(s,X_s,h_s;\theta) ds \right\}
\right]        \label{eq:J:Ph:FO} \\
    I^H(t,x;H,\theta) 
        &:= e^{-\theta J^H(t,x;H,\theta)}
        = \mathbf{E}_{t,x}^H \left[    
    \exp \left\{ \theta \int_t^T g_1(s,X_s,h_s;\theta) ds \right\} 
    \right],     \label{eq:I:Ph:FO:twostep}
\end{align}
where $\mathbf{E}_{t,x}^H \left[ \cdot \right]$ denotes the expectation taken with respect to the probability measure $\mathbb{P}^H$ and with initial condition $(t,X_t=x)$.

\begin{remark}\label{rk:nonstandard}
Under the probability measure $\mathbb{P}$, the benchmarked investment management problem is not a standard (risk-sensitive) stochastic control problem due to two important features. First, the state dynamics at \eqref{eq:state} is uncontrolled. Second, the terminal reward  $-\theta R_T$ in the exponential contains an It\^o integral with respect to the $\mathbb{P}$-Brownian motion $W_s$. The main interest of the Kuroda--Nagai change of measure is to transform this non-standard control problem into a standard linear exponential-of-quadratic Gaussian (LEQG) control problem with a controlled state process under the probability measure $\mathbb{P}^H$ \citep[see][]{kuna02,dall_RSBench,DavisLleoBook2014}.     
\end{remark}

\subsubsection{Applying the Free Energy–Entropy Duality Under $\mathbb{P}^H$}\label{sec:Ph_and_EEentropy:EEentropy}

To apply the free energy--entropy argument, we introduce the probability measure $\mathbb{P}^N$ defined from the probability measure $\mathbb{P}^H$ via the Radon-Nikodym derivative
\begin{align}\label{eq:RNderivative:nu:twostep}
    \frac{d\mathbb{P}^N}{d\mathbb{P}^H}\Big{|}_{\mathcal{F}_T} = \exp \left\{ -\frac{1}{2} \int_t^T  \| \nu_s \|^2 ds + \int_t^T \nu_s' dW^H_s \right\} =:  \chi^N_{[t,T]},
\end{align} 
for a strategy $N$ in class $\mathcal{A}^N$ defined as:

\begin{definition}\label{def:classAgammaT:fullobs:twostep}

A \emph{strategy} $N = \left(\nu_s\right)_{s \in [t,T]}$ with values in $\mathbb{R}^{d}$ is in class $\mathcal{A}^N$ if the \emph{control} $\nu_s$ satisfies the following conditions:

\begin{enumerate}[(i)]
    \item  $\nu_s$ is a Markov control;

    \item $P^H\left(\int_t^T \left| \nu_s \right|^2 ds < +\infty \right) = 1$;
    
    \item $ \chi^N_{[t,T]}$, defined at \eqref{eq:RNderivative:nu:twostep}, is an exponential martingale.

\end{enumerate}

\end{definition}

The $\mathbb{P}^N$-standard Wiener process is
\begin{align}
    W^N_s = W^H_s - \int_t^s \nu_u du
\end{align}
and the state process $X_s$ satisfies
\begin{align}\label{eq:state:Pgamma:twostep}
    d X_s
    = \left\{b_s + B_s X_s -\Lambda_s \left[\theta\left( \Sigma_s' h_s - \Xi_s \right)- \nu_s\right] \right\}ds
    + \Lambda_s dW^N_s
        , \;  s \in [t,T]. 
\end{align}

The \emph{free energy--entropy duality relation} establishes the following relation between the $\mathbb{P}^H$-criterion $I^H$ introduced at \eqref{eq:I:Ph:FO:twostep} and the $\mathbb{P}^N$-expectation of the function $g_1$ plus a regularization penalty:
\begin{align}\label{eq:RS:duality:I:twostep}
    \ln I^H(t,x;H,\theta) = \sup_{N \in \mathcal{A}^N} \mathbf{E}_{t,x}^N \left[  \int_t^T \left\{ \theta g_1(s,X_s,h_s;\theta) - \frac{1}{2} \| \nu_s \|^2 \right\}ds 
    \right].
\end{align}

As in Section \ref{sec:RSBAM:energy-entropy}, we focus on the case $\theta >0$. By Lemma \ref{lem:infexp:app} in Appendix \ref{app:Lemma:Meneghini},
\begin{align}\label{eq:RS:duality:I:thetapos:twostep}
    \inf_{H \in \mathcal{A}^{HN}} I^H(t,x;H,\theta) 
    &= \inf_{H \in \mathcal{A}^{{HN}}}\exp \left\{ \sup_{N \in \mathcal{A}^N} \mathbf{E}_{t,x}^N \left[  \int_t^T \left\{ \theta g_1(s,X_s,h_s;\theta) - \frac{1}{2} \| \nu_s \|^2 \right\} ds 
    \right]\right\}
    \nonumber\\
    &= \exp \left\{ \inf_{H \in \mathcal{A}^{{HN}}} \sup_{N \in \mathcal{A}^N} \mathbf{E}_{t,x}^N \left[ \int_t^T \left\{ \theta g_1(s,X_s,h_s;\theta) - \frac{1}{2} \| \nu_s \|^2 \right\}ds 
    \right]\right\}, 
\end{align}
where the state process $X$ is given at \eqref{eq:state:Pgamma:twostep}.  Using relation \eqref{eq:RS:duality:I:twostep}, we express $J^H(t,x;H,\theta)$ as
\begin{align}\label{eq:RS:duality:J:twostep}
    J^H(t,x;H,\theta) 
    = -\frac{1}{\theta} \ln I^H(t,x;H,\theta)
    = \inf_{N \in \mathcal{A}^N} \mathbf{E}_{t,x}^N \left[ 
    \int_t^T \left\{-g_1(s,X_s,h_s;\theta) 
    +\frac{1}{2\theta} \| \nu_s \|^2 \right\} ds 
    \right].
\end{align}
Hence,
\begin{align}\label{eq:RS:duality:I:thetapos:alt:twostep}
    \sup_{H \in \mathcal{A}^{{HN}}} J^H(t,x;H,\theta) 
    &= \sup_{H \in \mathcal{A}^{{HN}}} \inf_{N \in \mathcal{A}^N} \mathbf{E}_{t,x}^N \left[ 
    \int_t^T \left\{- g_1(s,X_s,h_s;\theta) 
    +\frac{1}{2\theta} \| \nu_s \|^2 \right\} ds 
    \right].    
\end{align}

We have therefore arrived again at a stochastic differential game formulation, now under the probability measure $\mathbb P^N$, in which the state process is controlled jointly by $h_s$ and $\nu_s$. As in subsection \ref{S.2.3.3}, we shall obtain optimal strategies defined by policies that are affine functions of the state variable $X$.

We may then define a value function $U(t,x)$ as
\begin{align}\label{eq:valuefunction:U}
    U(t,x) := \sup_{H \in \mathcal{A}^{{HN}}} \inf_{N \in \mathcal{A}^N} \mathbf{E}_{t,x}^N \left[ 
    \int_{t}^{T} \left\{- g_1(s,X_s,h_s;\theta) 
    +\frac{1}{2\theta} \| \nu_s \|^2 \right\} ds 
    \right],
\end{align}
to which we associate the Bellman--Isaacs PDE 
\begin{align}\label{eq:BellmanIsaacs:twostep}
   \frac{\partial U(s,x)}{\partial{s}} + \mathcal{H}^+(s,x,DU(s,x),D^2U(s,x)) = 0,
   \qquad U(T,x) = 0, \forall x \in \mathbb{R}^n,
\end{align}
where $DU$ and $D^2U$ are defined as in Section \ref{S.2.3.3}, and
\begin{align}\label{eq:H:twostep}
   & \mathcal{H}^+(s,x,p,M) 
    \nonumber\\
   :=& \sup_{h \in \mathbb{R}^m}\inf_{\nu \in \mathbb{R}^d}
   \Bigg\{ \left\{b_s + B_s x -\Lambda_s \left[\theta\left( \Sigma_s' h - \Xi_s\right) - \nu\right]\right\}' p 
    + \frac{1}{2} \tr \left(\Lambda_s\Lambda_s' M \right)
    - g_1(s,x,h;\theta)
    + \frac{1}{2\theta}\nu'\nu
    \Bigg\}
\end{align}
for $p \in \mathbb{R}^n,$ $M \in \mathbb{R}^{n \times n}$. 

Proceeding as in the proof of Proposition \ref{prop:candidate_controls}, we obtain the candidate policies $\hat{h}(s,x,DU(s,x))$ and $\hat{\nu}(s,x,DU(s,x))$ as
\begin{align}\label{eq:candidate_h_nu}
\begin{cases}
    \hat{h}(s,x,DU(s,x))
    = \frac{1}{\theta+1} \left(\Sigma_s\Sigma_s'\right)^{-1}
      \left[a_s + A_s x + \theta\Sigma_s\Xi_s - \theta\Sigma_s\Lambda_s'DU(s,x) \right], \\
    \hat{\nu}(s,x,DU(s,x))
    = -\theta \Lambda_s'DU(s,x).
\end{cases}
\end{align}
Then, we define the candidate strategies $\hat{H}:= \left(\hat{h}_s = \hat{h}(s,X_s,DU(s,X_s))\right)_{s \in [t,T]}$ and $\hat{N}:= \left(\hat{\nu}_s = \hat{\nu}(s,X_s,DU(s,X_s))\right)_{s \in [t,T]}$.

Since the Hamiltonian in \eqref{eq:H:twostep} is strictly concave in $h$ and strictly convex in $\nu$, these candidate policies define the pointwise maximizer and minimizer, respectively. Moreover, we can check that
\begin{align}
    & \mathcal{H}^+(s,x,p, M) 
                \nonumber\\
    &= \inf_{\nu \in \mathbb{R}^d}\sup_{h \in \mathbb{R}^m}
   \Bigg\{ \left\{b_s + B_s x -\Lambda_s \left[\theta\left( \Sigma_s' h - \Xi_s\right) - \nu\right]\right\}' p 
    + \frac{1}{2} \tr \left(\Lambda_s\Lambda_s' M \right)
    - g_1(s,x,h;\theta)
    + \frac{1}{2\theta}\nu'\nu
    \Bigg\}
                \nonumber\\
    &=: \mathcal{H}^-(s,x,p, M),
\end{align}
along the same lines as in the Proof of Proposition \ref{prop:candidate_controls}. Therefore, the Bellman--Isaacs minimax condition is satisfied, and $U$ is well defined as the value function for the stochastic differential game. In light of the structure of risk-sensitive control problems, the following Lemma and  Proposition show, respectively, that the value function $U$ is quadratic in the state, and that the candidate policies are optimal.

\begin{lemma}\label{lem:twosteps:sol_IsaacPDE}
    The value function $U(t,x)$ is of the form 
    \begin{align}\label{eq:sol:U}
        U(t,x) = \frac{1}{2} x' Q_t x + q_t'x + k_t,
    \end{align}
    where $Q: [t,T] \to \mathbb{R}^{n \times n}, q : [t,T] \to \mathbb{R}^n$, and $k: [t,T] \to \mathbb{R}$ solve \eqref{eq:Q:Riccati}, \eqref{eq:q:ODE}, and \eqref{eq:k:integral} respectively.
\end{lemma}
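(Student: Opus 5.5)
The plan is to proceed by the usual quadratic ansatz for the Bellman--Isaacs equation \eqref{eq:BellmanIsaacs:twostep}, and then to identify the resulting classical solution with the value function $U$. For the identification I will reuse Section \ref{sec:RSBAM} rather than redo a full verification argument.

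\textbf{Step 1 (reduced Hamiltonian).} Evaluate $\mathcal{H}^+$ in \eqref{eq:H:twostep} at the candidate policies \eqref{eq:candidate_h_nu}, with $p$ in place of $DU$.
\begin{itemize}
\item Minimizing over $\nu$ the terms $\nu'\Lambda_s'p+\frac{1}{2\theta}\|\nu\|^2$ gives $\hat\nu=-\theta\Lambda_s'p$ and the contribution $-\frac{\theta}{2}p'\Lambda_s\Lambda_s'p$.
\item Maximizing over $h$ the terms $-\frac{\theta+1}{2}h'\Sigma_s\Sigma_s'h+h'(a_s+A_sx+\theta\Sigma_s\Xi_s-\theta\Sigma_s\Lambda_s'p)$ gives
\[
\frac{1}{2(\theta+1)}\,\|a_s+A_sx+\theta\Sigma_s\Xi_s-\theta\Sigma_s\Lambda_s'p\|^2_{(\Sigma_s\Sigma_s')^{-1}} .
\]
\end{itemize}
Expanding this square, the term quadratic in $p$ is $\frac{\theta^2}{2(\theta+1)}p'\Lambda_s\Pi_s\Lambda_s'p$. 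Combined with $-\frac{\theta}{2}p'\Lambda_s\Lambda_s'p$, it yields exactly $-\frac{\theta}{2}p'\Lambda_s\mathcal{P}_s^-(\theta)\Lambda_s'p$, by the definition \eqref{def:calPminus}. The cross terms in $p$ are $-\frac{\theta}{\theta+1}(a_s+A_sx+\theta\Sigma_s\Xi_s)'(\Sigma_s\Sigma_s')^{-1}\Sigma_s\Lambda_s'p$. The remaining pieces are the drift terms $(b_s+B_sx+\theta\Lambda_s\Xi_s)'p$, the trace term, and the $p$-free terms $-(c_s+C_sx)-\frac{\theta-1}{2}\Xi_s'\Xi_s$. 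The result is an explicit Hamiltonian that is quadratic in $(x,p)$.

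\textbf{Step 2 (coefficient matching).} Substitute $U=\frac12x'Q_tx+q_t'x+k_t$, so that $DU=Q_sx+q_s$ and $D^2U=Q_s$, into \eqref{eq:BellmanIsaacs:twostep}. Then collect the terms that are quadratic in $x$, linear in $x$, and constant.
\begin{itemize}
\item The $x'(\cdot)x$ terms give \eqref{eq:Q:Riccati}. The cross terms supply $B_s'-\frac{\theta}{\theta+1}A_s'(\Sigma_s\Sigma_s')^{-1}\Sigma_s\Lambda_s'$ and its transpose. The square supplies $\frac{1}{\theta+1}A_s'(\Sigma_s\Sigma_s')^{-1}A_s$ and $-\theta Q_s\Lambda_s\mathcal{P}_s^-\Lambda_s'Q_s$.
\item The linear terms give \eqref{eq:q:ODE}, including $-C_s'$ and $\frac{1}{\theta+1}A_s'(\Sigma_s\Sigma_s')^{-1}(a_s+\theta\Sigma_s\Xi_s)$.
\item The constants give $\dot k_s=-\{\cdots\}$, whose integral from $t$ to $T$ with $k_T=0$ is \eqref{eq:k:integral}.
\end{itemize}
The terminal conditions $Q_T=0$, $q_T=0$, $k_T=0$ come from $U(T,x)=0$. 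Existence and uniqueness of the symmetric positive semidefinite Riccati solution, and of $q$ and $k$, are already provided by Theorem \ref{theo:sol_IsaacPDE}. This step is bookkeeping. It mirrors Appendix \ref{app:proof:theo:sol_IsaacPDE}, and consistently with that, $U=-u/\theta$ for $u$ in \eqref{eq:sol:u}.

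\textbf{Step 3 (identification with the value function).} I expect this step to be the main obstacle. Showing that the quadratic classical solution \emph{is} the value function $U$ in \eqref{eq:valuefunction:U} requires a saddle-point argument. I would first try the shortcut through the original problem.
\begin{itemize}
\item By construction \eqref{eq:J:twostep:2}, $J^H(t,x;H,\theta)=J(t,x;H,\theta)$ for every $H\in\mathcal{A}^{HN}$.
\item Therefore $U=\sup_H J^H$ by \eqref{eq:RS:duality:I:thetapos:alt:twostep}.
\item Lemma \ref{lem:valuefunctions} and Theorem \ref{theo:verification} give $\sup_H J=-u/\theta$.
\end{itemize}
The delicate point is that the supremum is over $\mathcal{A}^{HN}\subseteq\mathcal{A}^H$. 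I would handle it by checking that the optimal affine $H^*$ of Theorem \ref{theo:verification} lies in $\mathcal{A}^{HN}$. The exponential-martingale property of $\chi^{H^*}$ follows because the integrand is affine in the Gaussian state $X$, so a Beneš-type linear-growth criterion applies. With that, the two suprema agree. If this route stalls, the fallback is to replicate the verification argument of Theorem \ref{theo:verification}:
\begin{itemize}
\item apply Itô's formula to $U(s,X_s)$ under $\mathbb{P}^N$;
\item use the pointwise saddle inequalities from the strict concavity in $h$ and strict convexity in $\nu$;
\item kill the stochastic integrals using the Gaussian moment bounds, since $DU$ is affine.
\end{itemize}
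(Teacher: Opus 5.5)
Your Steps 1 and 2 are the paper's own argument: substitute \eqref{eq:candidate_h_nu} and the quadratic ansatz into \eqref{eq:BellmanIsaacs:twostep} and match coefficients. Your reduced Hamiltonian is correct, including the merger $-\frac{\theta}{2}p'\Lambda_s\Lambda_s'p+\frac{\theta^2}{2(\theta+1)}p'\Lambda_s\Pi_s\Lambda_s'p=-\frac{\theta}{2}p'\Lambda_s\mathcal{P}_s^{-}(\theta)\Lambda_s'p$, and it does yield \eqref{eq:Q:Riccati} and \eqref{eq:q:ODE}.

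The real difference is Step 3. The paper's sketch only shows that the quadratic solves the PDE. It leaves the identification with $U$ in \eqref{eq:valuefunction:U} to Proposition \ref{prop:optimality:policies}, i.e.\ to a second verification argument inside the two-step game. You instead import the identification from Section \ref{sec:RSBAM}. You use $J^H=J$ on $\mathcal{A}^{HN}$ (by \eqref{eq:J:twostep:2}), the inclusion $\mathcal{A}^{HN}\subseteq\mathcal{A}^H$, and $H^*\in\mathcal{A}^{HN}$, which give $\sup_{\mathcal{A}^{HN}}J\le\sup_{\mathcal{A}^{H}}J=J(H^*)\le\sup_{\mathcal{A}^{HN}}J$.

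That sandwich is sound and not circular: it does not use Theorem \ref{theo:twostep_and_onestep:soloptimcontrol}(i), which the paper derives from this lemma. It avoids Dynkin arguments under $\mathbb{P}^N$ against arbitrary opponents. The cost is reliance on Section \ref{sec:RSBAM}'s standing hypotheses, namely Assumption \ref{as:Gamma:Markov} and the Markov-class duality \eqref{eq:RS:duality:J:twostep}. It also turns Steps 1--2 into a consistency check rather than a needed ingredient. Your Bene\v{s} argument for $\chi^{H^*}$ is more accurate than the paper's appeal to Novikov, which can fail over the full horizon for an integrand affine in a Gaussian state.

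One concrete correction: your sentence claiming that the constant terms integrate to \eqref{eq:k:integral} is false as printed. Your Step 1 gives, for the $x$-free part of the PDE, $\dot k_s+\{\cdots\}_s=0$ with
\begin{align*}
\{\cdots\}_s={}&(b_s+\theta\Lambda_s\Xi_s)'q_s+\tfrac12\tr(\Lambda_s\Lambda_s'Q_s)-c_s-\tfrac{\theta-1}{2}\Xi_s'\Xi_s\\
&+\tfrac{1}{2(\theta+1)}(a_s+\theta\Sigma_s\Xi_s)'(\Sigma_s\Sigma_s')^{-1}(a_s+\theta\Sigma_s\Xi_s)\\
&-\tfrac{\theta}{\theta+1}(a_s+\theta\Sigma_s\Xi_s)'(\Sigma_s\Sigma_s')^{-1}\Sigma_s\Lambda_s'q_s-\tfrac{\theta}{2}q_s'\Lambda_s\mathcal{P}_s^{-}(\theta)\Lambda_s'q_s .
\end{align*}
Hence $k_t=\int_t^T\{\cdots\}_s\,ds$, which is exactly minus the right-hand side of \eqref{eq:k:integral}. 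The appendix intermediate step \eqref{eq:proof:sol_IsaacPDE:interm_final} agrees with you; the printed formula integrates $\dot k$ rather than $-\dot k$. The same slip sits in Theorem \ref{theo:sol_IsaacPDE}, so your Step 3 shortcut inherits it if you read $k$ off that theorem.

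Sanity check: take $\Lambda\equiv0$, $A\equiv0$, $C\equiv0$, $\Xi\equiv0$, $c\equiv0$. Then $Q=q=0$, and directly $U(t,x)=\int_t^T\sup_h\{-g_1(s,x,h;\theta)\}\,ds=\int_t^T\frac{1}{2(\theta+1)}a_s'(\Sigma_s\Sigma_s')^{-1}a_s\,ds>0$ whenever $a\not\equiv0$. Formula \eqref{eq:k:integral} returns the negative of this. So write the constant term out explicitly and state the lemma with the sign-corrected $k$.
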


\begin{proof}[Sketch of proof]
    The proof follows along the lines of that of Theorem \ref{theo:sol_IsaacPDE}. Substitute the candidate policies at \eqref{eq:candidate_h_nu} and the quadratic form \eqref{eq:sol:U} into \eqref{eq:BellmanIsaacs:twostep}. Reorganize as a quadratic form in $x$ to recover equations \eqref{eq:Q:Riccati}, \eqref{eq:q:ODE}, and \eqref{eq:k:integral} for $Q_t$, $q_t$, and $k_t$.  

\end{proof}

\begin{proposition}[Optimality of the Candidate Policies]\label{prop:optimality:policies}
\;
Let $U$ be the value function defined at \eqref{eq:valuefunction:U} and having the quadratic expression in \eqref{eq:sol:U}. Let $H^* = \left(h^*(s,X_s) \right)_{s \in [t,T]} := \left(\hat{h}(s,X_s,DU(s,X_s)) \right)_{s \in [t,T]}$ and $N^* = \left(\nu^*(s,X_s) \right)_{s \in [t,T]}  := \left(\hat{\nu}(s,X_s,DU(s,X_s)) \right)_{s \in [t,T]}$, where $\hat{h}(\cdot)$ and $\hat{\nu}(\cdot)$ are the candidate policies given at \eqref{eq:candidate_h_nu} . Then, the strategies $H^*$ and $N^*$ are admissible according to Definitions \ref{def:classAT:KN:fullobs:twostep} and \ref{def:classAgammaT:fullobs:twostep} respectively and are optimal.
\end{proposition}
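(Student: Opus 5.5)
The argument mirrors the proof of the Verification Theorem \ref{theo:verification}, now for the sup--inf game \eqref{eq:valuefunction:U} under $\mathbb{P}^N$. By Lemma \ref{lem:twosteps:sol_IsaacPDE}, $DU(s,x)=Q_sx+q_s$. Substituting this into \eqref{eq:candidate_h_nu} makes $h^*(s,x)$ and $\nu^*(s,x)=-\theta\Lambda_s'(Q_sx+q_s)$ affine in $x$, with continuous deterministic coefficients on $[t,T]$. They are therefore Borel-measurable Markov policies, and condition (i) of Definitions \ref{def:classAT:KN:fullobs:twostep} and \ref{def:classAgammaT:fullobs:twostep} holds.

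\textbf{Admissibility.} Under $\mathbb{P}^H$, and under $\mathbb{P}^N$ with the candidate controls plugged in, the state equations \eqref{eq:state:Ph:FO:twostep} and \eqref{eq:state:Pgamma:twostep} become linear SDEs with bounded continuous coefficients. Hence $X$ is Gaussian with continuous paths, and $\int_t^T|h^*_s|^2ds$ and $\int_t^T|\nu^*_s|^2ds$ are a.s.\ finite, which gives condition (ii). Condition (iii), that $\chi^H_{[t,T]}$ and $\chi^N_{[t,T]}$ are true martingales, is the main obstacle, since Novikov's condition can fail globally for integrands affine in a Gaussian state. The plan is to use the standard local Novikov/Bene\v{s} argument for linear drifts. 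Split $[t,T]$ into subintervals $t=t_0<\dots<t_K=T$ short enough that $\mathbf{E}\exp\{\frac{1}{2}\int_{t_{k}}^{t_{k+1}}\|\text{integrand}\|^2ds\}<\infty$. This is possible because $\mathbf{E}e^{\epsilon\|X_s\|^2}$ is bounded uniformly in $s$ for small $\epsilon$, by Gaussianity with bounded covariance and Jensen's inequality in time. Then chain the resulting conditional martingale properties, as in Liptser--Shiryaev (Example 3, Ch.~6) or \citet{kuna02}. Alternatively, apply Bene\v{s}' criterion directly, since the drift of $X$ grows linearly.

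\textbf{Saddle point.} Fix admissible $H\in\mathcal{A}^{HN}$ and $N\in\mathcal{A}^N$. Apply It\^o's formula to $U(s,X_s)$ under $\mathbb{P}^N$, where $X$ follows \eqref{eq:state:Pgamma:twostep} driven by $(h_s,\nu_s)$. The pointwise saddle property of the Hamiltonian \eqref{eq:H:twostep} is strict concavity in $h$ and strict convexity in $\nu$, together with the Isaacs equality $\mathcal{H}^+=\mathcal{H}^-$. Combined with the PDE \eqref{eq:BellmanIsaacs:twostep}, it gives the generator inequalities
\[
\partial_sU+\mathcal{L}^{h^*,\nu}U-g_1(\cdot,h^*)+\tfrac{1}{2\theta}\|\nu\|^2\ge 0,\qquad \partial_sU+\mathcal{L}^{h,\nu^*}U-g_1(\cdot,h)+\tfrac{1}{2\theta}\|\nu^*\|^2\le 0.
\]
Integrate from $t$ to $T$ and use $U(T,\cdot)=0$. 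Taking expectations requires the stochastic integral $\int (Q_sX_s+q_s)'\Lambda_s\,dW^N_s$ to have zero mean. I would handle this by localizing with stopping times $\tau_n=\inf\{s:|X_s|\ge n\}\wedge T$, then passing to the limit. That limit uses the quadratic growth of $U$ together with uniform integrability of $\sup_s\|X_s\|^2$ under $\mathbb{P}^N$. For general admissible $(h,\nu)$, where moments are not automatic, I would restrict via Fatou in the appropriate direction and impose the same integrability as used for Theorem \ref{theo:verification}. The result is
\[
\mathbf{E}^{N^*}_{t,x}\Big[\int_t^T\{-g_1(h)+\tfrac{1}{2\theta}\|\nu^*\|^2\}ds\Big]\le U(t,x)\le \mathbf{E}^{N}_{t,x}\Big[\int_t^T\{-g_1(h^*)+\tfrac{1}{2\theta}\|\nu\|^2\}ds\Big],
\]
with equality at $(H^*,N^*)$.

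\textbf{Optimality.} Taking $\inf_N$ and then $\sup_H$ in the saddle inequality yields $U(t,x)=\sup_H\inf_N(\cdot)$, attained at $(H^*,N^*)$. By \eqref{eq:RS:duality:I:thetapos:alt:twostep}, $H^*$ therefore maximizes $J^H$. Since $J^H=J$ for $H\in\mathcal{A}^{HN}$ by \eqref{eq:J:twostep:2}, $H^*$ is optimal for the original criterion, and $N^*$ attains the dual supremum in \eqref{eq:RS:duality:I:twostep}.
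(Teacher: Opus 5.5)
Your proposal has the same structure as the paper's sketch. The quadratic $U$ gives affine Markov policies with $DU=Q_sx+q_s$, the state is Gaussian, and you check the exponential-martingale property. You then apply Dynkin's formula with the pointwise saddle inequalities of the separable Hamiltonian \eqref{eq:H:twostep}, run the usual $\sup\inf$ sandwich, and identify $J^{H}=J$ through \eqref{eq:J:twostep:2}. Your generator inequalities and the direction of both saddle bounds are correct.

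The real difference lies in condition (iii). The paper asserts that bounded $C^1$ coefficients on a finite horizon give Novikov's condition for $\chi^{H^*}_{[t,T]}$ and $\chi^{N^*}_{[t,T]}$ over the whole interval. You instead use a piecewise Novikov (or Bene\v{s}) argument, and yours is the version that actually holds. Global Novikov can fail for integrands affine in a Gaussian state: for a scalar Brownian motion $w$, $\mathbf{E}\exp\{\lambda\int_0^T w_s^2\,ds\}=\infty$ once $\lambda\ge\pi^2/(8T^2)$, even though $\exp\{c\int w\,dw-\frac{c^2}{2}\int w^2\,ds\}$ is a true martingale for every $c$. Your localization of the stochastic integral in Dynkin's formula is likewise a step the paper leaves implicit.

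Two points need tightening.
\begin{enumerate}
\item The order of the checks matters. Check $\chi^{H^*}_{[t,T]}$ first, under $\mathbb{P}$, where $X$ is the uncontrolled Gaussian process \eqref{eq:state}; at that stage $\mathbb{P}^{H^*}$ is not yet known to be a probability measure. Only then check $\chi^{N^*}_{[t,T]}$, under $\mathbb{P}^{H^*}$, where \eqref{eq:state:Ph:FO:twostep} with $h=h^*$ is linear.
\item The saddle inequalities involve mixed pairs, and your Gaussian argument covers neither. For the left inequality you pair $N^*$ with an arbitrary $H\in\mathcal{A}^{HN}$, which requires $\chi^{N^*}$ to be a true martingale under $\mathbb{P}^H$. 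For the right inequality, $(H^*,N)$ with general $N$, you need enough moments of $X$ under $\mathbb{P}^N$ to pass to the limit in the localized Dynkin formula. The proof of Theorem~\ref{theo:verification} imposes no integrability you could borrow, so you must state these conditions on competing pairs explicitly. The paper's sketch passes over the same point.
\end{enumerate}
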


\begin{proof}[Sketch of proof]  
The proof follows along the lines of that of Theorem \ref{theo:verification}. Showing that $H^*$ is admissible also requires showing that $ \chi^{H^*}_{[t,T]}$, defined at \eqref{eq:expmart:twostep} for $h=h^*$, is an exponential martingale. To see this, note that $U$ is quadratic in $x$ by  Lemma \ref{lem:twosteps:sol_IsaacPDE}, so $\hat{h}(s,x,DU(s,x))$ is affine in $x$. Moreover, under the candidate strategy $H^*$, the state process $X$ with $\mathbb{P}^H$ dynamics \eqref{eq:state:Ph:FO:twostep} is Gaussian. Since $h^*(s,X_s)$ is affine in the Gaussian state $X_s$, it is square-integrable on $[t,T]$. Under the standing $C^1$ assumptions on the deterministic coefficients and the finite horizon, the coefficients are bounded and the Novikov condition holds; therefore, $\chi^{H^*}_{[t,T]}$ is a true martingale. Hence, condition (iii) in Definition \ref{def:classAT:KN:fullobs:twostep} is satisfied. It remains to show that $N^*$ is admissible in the sense of Definition \ref{def:classAgammaT:fullobs:twostep}, and specifically, that $\chi^{N^*}_{[t,T]}$, defined at \eqref{eq:RNderivative:nu:twostep} for $\nu=\nu^*$, is an exponential martingale. This is shown in full analogy to the proof of Theorem \ref{theo:verification}, in particular, where it is shown that $\Gamma^*\in \mathcal{A}^{\Gamma}$.   

\end{proof}  

The following theorem shows that the sequential route consisting of the Kuroda–Nagai change of measure followed by the free energy–entropy duality yields exactly the same value function and optimal investment strategy as the direct one-step approach of Sections \ref{sec:energy-entropy:DS}–\ref{S.2.3.3}.

\begin{theorem}\label{theo:twostep_and_onestep:soloptimcontrol}
\hfill
\begin{enumerate}[(i)]
    \item The value functions $u$ defined at \eqref{eq:valuefunction:u} and $U$ defined at \eqref{eq:valuefunction:U} are related by
    \begin{align}\label{eq:U_so_u}
        u(t,x) = -\theta U(t,x);
    \end{align}
    
    \item The definitions of the asset allocation $\hat{h}(s,x,Du(s,x))$ at \eqref{eq:hhat}
    and $\hat{h}(s,x,DU(s,x))$ at \eqref{eq:candidate_h_nu} are equivalent, that is,
    \begin{align}
        \hat{h}(s,x,Du(s,x))
        =
        \hat{h}(s,x,DU(s,x))
        =: \hat{h}(s,x),
        \quad \forall (s,x)\in[t,T]\times\mathbb{R}^n.
    \end{align}
    Hence, the strategy $H^*:=\left(\hat{h}(s,X_s)\right)_{s\in[t,T]}$ is optimal for the original problem \eqref{eq:valuefunction:u} and the two-step problem \eqref{eq:valuefunction:U}.
    
    \item The policies $\hat{\gamma}$ defined at \eqref{eq:gammahat:1} and $\hat{\nu}$ defined at \eqref{eq:candidate_h_nu} are related by
\begin{align}
    \hat{\gamma}(s,x,Du(s,x))
    =
    \hat{\nu}(s,x,DU(s,x))
    -\theta\left(\Sigma_s'\hat{h}(s,x,DU(s,x))-\Xi_s\right),
    \quad \forall (s,x)\in[t,T]\times\mathbb{R}^n.
\end{align}
\end{enumerate}
\end{theorem}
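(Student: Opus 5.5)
The plan is to compare the two quadratic representations directly. Theorem \ref{theo:sol_IsaacPDE} gives $u(t,x)=-\theta\left(\frac12 x'Q_tx+q_t'x+k_t\right)$. Lemma \ref{lem:twosteps:sol_IsaacPDE} gives $U(t,x)=\frac12 x'Q_tx+q_t'x+k_t$, with the same $(Q,q,k)$ solving \eqref{eq:Q:Riccati}, \eqref{eq:q:ODE} and \eqref{eq:k:integral}. Once these two facts are in hand, (i) follows because $Q$ is the unique symmetric positive semidefinite solution of the Riccati equation with $Q(T)=0$. The linear ODE for $q$ with $q(T)=0$ then has a unique solution, and $k$ is given by an explicit integral, so the coefficients coincide and $u=-\theta U$.

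Since Lemma \ref{lem:twosteps:sol_IsaacPDE} was only sketched, I will check it by a change of unknown. I set $U=-u/\theta$ in \eqref{eq:BellmanIsaacs:twostep} and verify that the resulting equation is exactly \eqref{eq:BellmanIsaacs:u}.
\begin{itemize}
\item First I compute the $\nu$-infimum in \eqref{eq:H:twostep}. This gives $\hat\nu=-\theta\Lambda_s'p$ and the quadratic gradient term $-\frac{\theta}{2}p'\Lambda_s\Lambda_s'p$.
\item Next I compute the $h$-supremum, a strictly concave quadratic, which reproduces \eqref{eq:candidate_h_nu}.
\item I then substitute $p=DU=-Du/\theta$ and $M=D^2U=-D^2u/\theta$, multiply the equation by $-\theta$, and compare term by term with the reduced Hamiltonian $\mathcal{H}^+$ from \eqref{eq:Hamiltonian:H+}, evaluated at $(\hat h,\hat\gamma)$ as computed in the proof of Proposition \ref{prop:candidate_controls}.
\end{itemize}
The key algebraic identity is the following. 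The cross term $-(h'\Sigma_s-\Xi_s')\gamma-\frac1{2\theta}\|\gamma\|^2$ in $g$, together with the drift $\Lambda_s\gamma$, becomes after completing the square in $\gamma$ the sum of three pieces: $\frac{\theta}{2}\|\Sigma_s'h-\Xi_s\|^2$, the cross term $-\theta(\Sigma_s'h-\Xi_s)'\Lambda_s'p$, and $\frac{\theta}{2}\|\Lambda_s'p\|^2$ in the $u$-variables. Adding $\frac{\theta}{2}\|\Sigma_s'h-\Xi_s\|^2$ to the $h$-quadratic part of $g$ produces exactly $g_1$ in \eqref{eq:g:twostep}. The terms $\frac12\Xi'\Xi$ and $C_sx$ reconcile to $\frac{\theta-1}{2}\Xi'\Xi$ and $+C_sx$, and the cross term matches the drift shift $-\theta\Lambda_s(\Sigma_s'h-\Xi_s)$ in \eqref{eq:state:Ph:FO:twostep}.

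I expect this bookkeeping of the completed square to be the main obstacle. It is precisely the identity showing that the $\gamma$-optimization absorbs the Kuroda--Nagai density $\chi^H$. Once it is done, the two PDEs are identical in $u=-\theta U$, and the uniqueness argument above gives (i).

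For (ii), I substitute $DU=-Du/\theta$ into \eqref{eq:candidate_h_nu}. The term $-\theta\Sigma_s\Lambda_s'DU$ becomes $\Sigma_s\Lambda_s'Du$, which is exactly \eqref{eq:hhat}. Optimality of $H^*$ for both problems then follows from Theorem \ref{theo:verification} and Proposition \ref{prop:optimality:policies}, since the two policies are the same map.

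For (iii), I note that $\hat\nu=-\theta\Lambda_s'DU=\Lambda_s'Du$. Inserting this into \eqref{eq:gammahat:1} and using (ii) gives $\hat\gamma=\hat\nu-\theta(\Sigma_s'\hat h-\Xi_s)$. This matches the composition of the densities: $W^\Gamma=W^N$, because $W^N_s=W_s+\int_t^s[\theta(\Sigma_u'h_u-\Xi_u)-\nu_u]\,du$.
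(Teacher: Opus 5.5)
Your proposal is correct and follows the paper's proof. For (i) you identify the common coefficients $(Q,q,k)$ from Theorem \ref{theo:sol_IsaacPDE} and Lemma \ref{lem:twosteps:sol_IsaacPDE}; for (ii) you substitute $DU=-Du/\theta$ into \eqref{eq:candidate_h_nu}; for (iii) you insert $\hat\nu=-\theta\Lambda_s'DU=\Lambda_s'Du$ into \eqref{eq:gammahat:1}. Your extra check that the two Bellman--Isaacs equations coincide under $U=-u/\theta$, via $g_1=g|_{\gamma=0}+\frac{\theta}{2}\|\Sigma_s'h-\Xi_s\|^2$, usefully fills in the paper's sketched Lemma \ref{lem:twosteps:sol_IsaacPDE}. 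One small slip: your list of completed-square pieces mixes normalizations; in the $U$-equation they are $-\frac{\theta}{2}\|\Sigma_s'h-\Xi_s\|^2$, $-\theta(\Sigma_s'h-\Xi_s)'\Lambda_s'DU$ and $-\frac{\theta}{2}\|\Lambda_s'DU\|^2$, which does not affect the conclusion.
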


\begin{proof}
\hfill
\begin{enumerate}[(i)]
    \item     
    From Theorem \ref{theo:sol_IsaacPDE} and Lemma \ref{lem:twosteps:sol_IsaacPDE}, 
    \begin{align}
        u(t,x) 
        = -\theta \left( \frac{1}{2} x' Q_t x + q_t'x + k_t \right) 
        = -\theta U(t,x) 
    \end{align}
    where $Q: [t,T] \to \mathbb{R}^{n \times n}, q : [t,T] \to \mathbb{R}^n$, and $k: [t,T] \to \mathbb{R}$ solve \eqref{eq:Q:Riccati}, \eqref{eq:q:ODE}, and \eqref{eq:k:integral} respectively.

    \item From \eqref{eq:candidate_h_nu},
    \begin{align}
        \hat{h}(s,x,DU(s,x)) 
        =&  \frac{1}{\theta+1} \left(\Sigma_s\Sigma_s'\right)^{-1} \left[a_s + A_s x + \theta\Sigma_s\Xi_s - \theta\Sigma_s\Lambda_s'DU(s,x) \right]     
                                    \nonumber\\
        \underset{\text{by \eqref{eq:U_so_u}}}{=}&  \frac{1}{\theta+1} \left(\Sigma_s\Sigma_s'\right)^{-1} \left[a_s + A_s x + \theta\Sigma_s\Xi_s + \Sigma_s\Lambda_s'Du(s,x) \right],
    \end{align}
    which is \eqref{eq:hhat}. Therefore, the strategy $H^* := \left(\hat{h}(s,X_s,Du(s,X_s))\right)_{s \in [t,T]} =  \left(\hat{h}(s,X_s,DU(s,X_s))\right)_{s \in [t,T]}$ is optimal for the original problem \eqref{eq:valuefunction:u} and the two-step problem \eqref{eq:valuefunction:U}.

    \item Start from \eqref{eq:gammahat:1}:
    \begin{align}
        \hat{\gamma}(s,x,Du(s,x))
        =&  
        -\theta\Lambda_s'\left(Q_s x + q_s\right) 
                                \nonumber\\
        &- \theta \left[\frac{1}{\theta+1} \Sigma_s'\left(\Sigma_s\Sigma_s'\right)^{-1}
        \left( a_s + A_s x + \theta \Sigma_s\Xi_s - \theta \Sigma_s\Lambda_s' (Q_s x+q_s) \right) - \Xi_s \right]
                                \nonumber\\
        \underset{\text{by \eqref{eq:candidate_h_nu} and Lemma \ref{lem:twosteps:sol_IsaacPDE}}}{=}&
            \hat{\nu}(s,x,DU(s,x))
            - \theta \left(\Sigma_s'\hat{h}(s,x,DU(s,x)) - \Xi_s \right).
    \end{align}
\end{enumerate}

\end{proof}

\subsection{Analysis of the Change of Measure in the Two-Step Approach}\label{sec:energy-entropy:Discussion:MeasureChange}

The candidate policy $\hat{\gamma}(s,x,Du(s,x))$ defined at \eqref{eq:gammahat:1} induces a change of measure from $\mathbb{P}$ to a probability measure $\mathbb{P}^{\hat{\Gamma}}$ via the Radon-Nikodym derivative at \eqref{eq:RNderivative:gamma}.
Proposition \ref{prop:twostep_and_onestep:measure:decomposition} below decomposes this change of measure into two sequential changes of measure. The first is the Kuroda--Nagai change of measure introduced at \eqref{eq:Ph:twostep}. The second is equivalent to the change of measure \eqref{eq:RNderivative:nu:twostep} induced by the duality under $\mathbb{P}^H$, as shown in Corollary \ref{coro:twostep_and_onestep:measure:decomposition} below.

\begin{proposition}\label{prop:twostep_and_onestep:measure:decomposition}
For the pair of candidate strategies $\left(\hat{H},\hat{\Gamma}\right)$, the Radon-Nikodym derivative $\frac{d\mathbb{P}^{\hat{\Gamma}}}{d\mathbb{P}}\big{|}_{\mathcal{F}_T}$ admits the following decomposition:
\begin{align}
        \frac{d\mathbb{P}^{\hat{\Gamma}}}{d\mathbb{P}}\Big{|}_{\mathcal{F}_T}
        = \frac{d \mathbb{P}^{\hat H}}{d \mathbb{P}} \Big{|}_{\mathcal{F}_T}
        \times \frac{d \mathbb{P}^{\hat{\Gamma}}}{d \mathbb{P}^{\hat H}}\Big{|}_{\mathcal{F}_T},
\end{align}
where  (i) $\frac{d\mathbb{P}^{\hat{\Gamma}}}{d\mathbb{P}}\big{|}_{\mathcal{F}_T}$ is, for $\Gamma=\hat\Gamma$, defined at \eqref{eq:RNderivative:gamma} and \eqref{eq:RNderivative:gammahat}, (ii)  $\frac{d \mathbb{P}^{\hat H}}{d \mathbb{P}}\big{|}_{\mathcal{F}_T}$ is defined at \eqref{eq:Ph:twostep} via the Dol\'eans-Dade exponential $\chi^H_{[t,T]}$  at \eqref{eq:expmart:twostep}, with $H$ evaluated at $\hat{H}$, and (iii) $\frac{d \mathbb{P}^{\hat{\Gamma}}}{d \mathbb{P}^{\hat H}}\big{|}_{\mathcal{F}_T}$ is defined as
\begin{align}\label{eq:expmart:chi_h-to-gamma:discussion}
    \frac{d \mathbb{P}^{\hat{\Gamma}}}{d \mathbb{P}^{\hat H}}\Big{|}_{\mathcal{F}_T}  
    &= \exp \bigg\{ \int_t^T Du(s,X_s)'\Lambda_s dW^{\hat{H}}_s  
    -\frac{1}{2} \int_t^T \| \Lambda_s' Du(s,X_s)\|^2 ds \bigg\}.
\end{align} 
\end{proposition}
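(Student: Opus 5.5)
We will prove the decomposition by direct computation at the level of the exponents. Write $\hat h_s := \hat h(s,X_s,Du(s,X_s))$ and $\hat\gamma_s := \hat\gamma(s,X_s,Du(s,X_s))$. We use the representation \eqref{eq:gammahat:1}:
\[
\hat\gamma_s = \Lambda_s' Du(s,X_s) + \beta_s, \qquad \beta_s := -\theta\left(\Sigma_s'\hat h_s - \Xi_s\right).
\]
With this notation, the Kuroda--Nagai density \eqref{eq:expmart:twostep} evaluated at $\hat H$ is
\[
\chi^{\hat H}_{[t,T]} = \exp\left\{\int_t^T \beta_s' dW_s - \tfrac12\int_t^T\|\beta_s\|^2 ds\right\}.
\]
By \eqref{eq:def:Wh}, $dW^{\hat H}_s = dW_s - \beta_s ds$. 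The whole argument therefore reduces to the algebra of composing two Girsanov drifts $\beta$ and $\alpha_s := \Lambda_s'Du(s,X_s)$. This composition is additive, and the identity we need is exactly $\hat\gamma = \alpha+\beta$.

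\textbf{Step 1.} Substitute $dW_s = dW^{\hat H}_s + \beta_s ds$ into the exponent of \eqref{eq:expmart:chi_h-to-gamma:discussion}:
\[
\int \alpha_s' dW^{\hat H}_s - \tfrac12\int\|\alpha_s\|^2 ds = \int \alpha_s' dW_s - \int \alpha_s'\beta_s\, ds - \tfrac12\int \|\alpha_s\|^2 ds .
\]
Adding the exponent of $\chi^{\hat H}_{[t,T]}$ gives
\[
\int (\alpha_s+\beta_s)' dW_s - \tfrac12\int\left(\|\alpha_s\|^2 + 2\alpha_s'\beta_s + \|\beta_s\|^2\right) ds
= \int \hat\gamma_s' dW_s - \tfrac12\int\|\hat\gamma_s\|^2 ds .
\]
This is precisely the exponent of $\chi^{\hat\Gamma}_{[t,T]}$ in \eqref{eq:RNderivative:gammahat}, so the product formula holds pathwise. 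Note that the proof uses only \eqref{eq:gammahat:1}; it does not need the quadratic form of $u$.

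\textbf{Step 2.} Justify that each factor is a genuine Radon--Nikodym derivative, so that the middle factor can legitimately be written as $d\mathbb{P}^{\hat\Gamma}/d\mathbb{P}^{\hat H}$.
\begin{itemize}
\item For $\chi^{\hat H}$: $\hat H\in\mathcal{A}^{HN}$ by Proposition \ref{prop:optimality:policies}, so this factor is a true martingale.
\item For $\chi^{\hat\Gamma}$: it is a true martingale because $\hat\Gamma\in\mathcal{A}^\Gamma$ by Theorem \ref{theo:verification}.
\item For the middle factor: it equals the ratio $\chi^{\hat\Gamma}_{[t,T]}/\chi^{\hat H}_{[t,T]}$ by Step 1, and it is strictly positive. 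To confirm it is a $\mathbb{P}^{\hat H}$-martingale with unit expectation, observe that $E^{\hat H}[\chi^{\hat\Gamma}/\chi^{\hat H}] = E[\chi^{\hat\Gamma}] = 1$. Alternatively, note that $\alpha_s = -\theta\Lambda_s'(Q_sX_s+q_s)$ is affine in the Gaussian state under $\mathbb{P}^{\hat H}$, and use the same Novikov/Beneš-type argument employed for $N^*$ in Proposition \ref{prop:optimality:policies}. In fact $\alpha = \hat\nu$ by Theorem \ref{theo:twostep_and_onestep:soloptimcontrol}(iii) together with $u=-\theta U$.
\end{itemize}

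The main obstacle is Step 2, specifically the martingale property of the middle density. The pathwise identity of Step 1 is routine. We would settle Step 2 via the expectation identity above, which uses only the two martingale facts already established, and thereby avoid a separate Novikov estimate.
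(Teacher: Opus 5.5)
Your proposal is correct and follows essentially the same route as the paper: both arguments expand $\|\Lambda_s'Du(s,X_s) + \beta_s\|^2$ with $\beta_s = -\theta(\Sigma_s'\hat h_s - \Xi_s)$, and use $dW^{\hat H}_s = dW_s - \beta_s\,ds$ to regroup the exponent of $\chi^{\hat\Gamma}_{[t,T]}$ into the Kuroda--Nagai density times the second factor; you simply run the computation from the product back to $\chi^{\hat\Gamma}_{[t,T]}$. Your Step 2 certifies the middle factor as a genuine density via $\mathbf{E}^{\mathbb{P}^{\hat H}}[\chi^{\hat\Gamma}_{[t,T]}/\chi^{\hat H}_{[t,T]}] = \mathbf{E}[\chi^{\hat\Gamma}_{[t,T]}] = 1$, a sound addition that the paper leaves implicit when it simply divides.
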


\begin{proof}
Start from the definition of $\frac{d\mathbb{P}^{\hat{\Gamma}}}{d\mathbb{P}}\big{|}_{\mathcal{F}_T}$ at \eqref{eq:RNderivative:gammahat} and use the compact notation $\hat{h}_s := \hat{h}(s,X_s,DU(s,X_s)), s\in [t,T]$ introduced in Theorem \ref{theo:twostep_and_onestep:soloptimcontrol} (ii):

\begin{align}
    &\frac{d\mathbb{P}^{\hat{\Gamma}}}{d\mathbb{P}}\Big{|}_{\mathcal{F}_T}
                                    \nonumber\\
    =& \exp \bigg\{  
        -\frac{1}{2} \theta^2     \int_t^T 
        \| \Sigma_s' \hat{h}_s - \Xi_s \|^2 ds
        -\frac{1}{2} \int_t^T \| \Lambda_s' Du(s,X_s)\|^2 ds
        + \theta \int_t^T Du(s,X_s)'\Lambda_s \left( \Sigma_s' \hat{h}_s - \Xi_s\right) ds
                                        \nonumber\\
    \phantom{=}&  
        + \int_t^T Du(s,X_s)'\Lambda_s dW_s  
        -\theta \int_t^T\left(\hat{h}'(s) \Sigma_s - \Xi_s' \right) dW_s \bigg\}
                                        \nonumber\\
    =& \exp \bigg\{  
        -\theta \int_t^T\left(\hat{h}'(s) \Sigma_s - \Xi_s' \right) dW_s  
        -\frac{1}{2} \theta^2     \int_t^T 
        \| \Sigma_s' \hat{h}_s - \Xi_s \|^2 ds
                                        \nonumber\\
    \phantom{=}&  
        +  \int_t^T Du(s,X_s)'\Lambda_s \underbrace{\left[ \theta \left( \Sigma_s' \hat{h}_s - \Xi_s\right) ds + dW_s \right]}_{=dW^{\hat{H}}_s \mathrm{by \eqref{eq:def:Wh}}}  \bigg\}
        -\frac{1}{2} \int_t^T \| \Lambda_s' Du(s,X_s)\|^2 ds
                                        \nonumber\\
    =& \frac{d \mathbb{P}^{\hat H}}{d \mathbb{P}}\Big{|}_{\mathcal{F}_T} 
        \times \exp \bigg\{ \int_t^T Du(s,X_s)'\Lambda_s dW^{\hat{H}}_s  
    -\frac{1}{2} \int_t^T \| \Lambda_s' Du(s,X_s)\|^2 ds \bigg\}
\end{align}

Since $\frac{d\mathbb{P}^{\hat{\Gamma}}}{d\mathbb{P}}\big{|}_{\mathcal{F}_T}= \frac{d \mathbb{P}^{\hat H}}{d \mathbb{P}}\big{|}_{\mathcal{F}_T} \times \frac{d \mathbb{P}^{\hat{\Gamma}}}{d \mathbb{P}^{\hat H}}\big{|}_{\mathcal{F}_T}$, we conclude that $\frac{d \mathbb{P}^{\hat{\Gamma}}}{d \mathbb{P}^{\hat H}}\Big{|}_{\mathcal{F}_T}
= \exp \bigg\{ \int_t^T Du(s,X_s)'\Lambda_s dW^{\hat H}_s
-\frac{1}{2} \int_t^T \| \Lambda_s' Du(s,X_s)\|^2 ds \bigg\}$.

\end{proof}

\begin{corollary}\label{coro:twostep_and_onestep:measure:decomposition}
   At the pair of candidate strategies $\left(\hat{H},\hat{\Gamma}\right)$ defined via \eqref{eq:hhat} and \eqref{eq:gammahat:2}, the change of measure induced by the Radon-Nikodym derivatives $\frac{d\mathbb{P}^{\hat{N}}}{d\mathbb{P}^{\hat H}}\big{|}_{\mathcal{F}_T}$, defined according to \eqref{eq:RNderivative:nu:twostep}, and 
    $\frac{d \mathbb{P}^{\hat{\Gamma}}}{d \mathbb{P}^{\hat H}}\big{|}_{\mathcal{F}_T}$, given at \eqref{eq:expmart:chi_h-to-gamma:discussion}, induce the same probability measure on $(\Omega, \mathcal{F}_T)$, that is $\mathbb{P}^{\hat{N}} = \mathbb{P}^{\hat{\Gamma}}$.
\end{corollary}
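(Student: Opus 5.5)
The plan is to compare the two Radon--Nikodym densities with respect to $\mathbb{P}^{\hat H}$ directly; if they coincide as random variables on $(\Omega,\mathcal{F}_T)$, the measures they define coincide. By \eqref{eq:RNderivative:nu:twostep}, evaluated at $\nu=\hat\nu(s,X_s,DU(s,X_s))$,
\[
\frac{d\mathbb{P}^{\hat N}}{d\mathbb{P}^{\hat H}}\Big|_{\mathcal{F}_T}
=\exp\left\{-\frac12\int_t^T\|\hat\nu_s\|^2ds+\int_t^T\hat\nu_s'\,dW^{\hat H}_s\right\},
\]
and by \eqref{eq:expmart:chi_h-to-gamma:discussion} the other density has the same form with integrand $\Lambda_s'Du(s,X_s)$ in place of $\hat\nu_s$. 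It therefore suffices to prove the pathwise identity $\hat\nu(s,X_s,DU(s,X_s))=\Lambda_s'Du(s,X_s)$ for all $s\in[t,T]$.

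To get this identity, use \eqref{eq:candidate_h_nu}, which gives $\hat\nu(s,x,DU(s,x))=-\theta\Lambda_s'DU(s,x)$. By Theorem \ref{theo:twostep_and_onestep:soloptimcontrol}(i), $u=-\theta U$, so $Du=-\theta DU$. Equivalently, from Theorem \ref{theo:sol_IsaacPDE} and Lemma \ref{lem:twosteps:sol_IsaacPDE}, both gradients equal multiples of $Q_sx+q_s$, with $Du(s,x)=-\theta(Q_sx+q_s)$ and $DU(s,x)=Q_sx+q_s$. Hence $\hat\nu(s,x,DU(s,x))=\Lambda_s'Du(s,x)$ for every $(s,x)$. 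As a consistency check, this matches Theorem \ref{theo:twostep_and_onestep:soloptimcontrol}(iii) combined with \eqref{eq:gammahat:1}: there $\hat\gamma=\Lambda_s'Du+(\text{Kuroda--Nagai shift})$, and the shift $-\theta(\Sigma_s'\hat h-\Xi_s)$ is exactly the drift absorbed by passing from $W$ to $W^{\hat H}$ in \eqref{eq:def:Wh}.

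It remains to make sure both densities are true martingale densities under the same reference measure $\mathbb{P}^{\hat H}$, driven by the same Brownian motion $W^{\hat H}$, and evaluated along the same state path. The state path is the same because both are functions of $X$ on $\Omega$, and the stochastic integrals against $W^{\hat H}$ agree pathwise once the integrands agree. Admissibility holds by Proposition \ref{prop:optimality:policies}: $\hat N\in\mathcal{A}^N$ and $\hat H\in\mathcal{A}^{HN}$, so $\chi^{\hat N}$ is a true martingale and $\mathbb{P}^{\hat N}$ is a probability measure. Proposition \ref{prop:twostep_and_onestep:measure:decomposition} then identifies \eqref{eq:expmart:chi_h-to-gamma:discussion} as $d\mathbb{P}^{\hat\Gamma}/d\mathbb{P}^{\hat H}$. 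Since the two densities are equal almost surely, $\mathbb{P}^{\hat N}=\mathbb{P}^{\hat\Gamma}$ on $\mathcal{F}_T$.

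The only delicate point is this last bookkeeping step: the decomposition must be read with $\hat H$ being the same strategy in both constructions. This is guaranteed by Theorem \ref{theo:twostep_and_onestep:soloptimcontrol}(ii), which shows that $\hat h(s,x,Du)$ and $\hat h(s,x,DU)$ coincide.
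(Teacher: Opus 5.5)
Your proposal is correct and follows essentially the paper's argument. The paper likewise rewrites the integrand $\Lambda_s'Du(s,X_s)$ in \eqref{eq:expmart:chi_h-to-gamma:discussion} as $-\theta\Lambda_s'(Q_sX_s+q_s)=-\theta\Lambda_s'DU(s,X_s)=\hat\nu(s,X_s,DU(s,X_s))$ and then reads off \eqref{eq:RNderivative:nu:twostep} at $N=\hat N$; your extra remarks on admissibility and on using the same $\hat H$ in both constructions are harmless additions that the paper leaves implicit.
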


\begin{proof}
Start from \eqref{eq:expmart:chi_h-to-gamma:discussion}:
\begin{align}
\frac{d \mathbb{P}^{\hat{\Gamma}}}{d \mathbb{P}^{\hat H}}\Big{|}_{\mathcal{F}_T}
    &= \exp \bigg\{ \int_t^T Du(s,X_s)'\Lambda_s dW^{\hat{H}}_s  
    -\frac{1}{2} \int_t^T \| \Lambda_s' Du(s,X_s)\|^2 ds \bigg\}
                                            \nonumber\\
    &\underset{\text{by Theo. \ref{theo:sol_IsaacPDE}}}{=} \exp \bigg\{ \int_t^T -\theta (Q_s X_s + q_s)'\Lambda_s dW^{\hat{H}}_s  
    -\frac{1}{2} \int_t^T \| -\theta \Lambda_s'(Q_s X_s + q_s) \|^2 ds \bigg\}    
                                            \nonumber\\
    &\underset{\text{by \eqref{eq:sol:U}}}{=} \exp \bigg\{ \int_t^T - \theta  DU(s,X_s)' \Lambda_s dW^{\hat{H}}_s  
    -\frac{1}{2} \int_t^T \| -\theta \Lambda_s' DU(s,X_s) \|^2 ds \bigg\}
                                            \nonumber\\
    &\underset{\text{by \eqref{eq:candidate_h_nu}}}{=} \exp \bigg\{ \int_t^T \hat{\nu}(s,X_s,DU(s,X_s))' dW^{\hat{H}}_s  
    -\frac{1}{2} \int_t^T \| \hat{\nu}(s,X_s,DU(s,X_s)) \|^2 ds \bigg\}.
\end{align}
The conclusion follows from the definition of $\frac{d\mathbb{P}^{\hat{N}}}{d\mathbb{P}^{\hat H}}\big{|}_{\mathcal{F}_T}$ via \eqref{eq:RNderivative:nu:twostep} evaluated at $N = \hat{N}$.

\end{proof}

The first change of measure, via the Radon-Nikodym derivative $\frac{d \mathbb{P}^{\hat H}}{d \mathbb{P}}\big{|}_{\mathcal{F}_T}$ is used to rewrite the non-standard risk-sensitive benchmarked investment management problem into an LEQG problem, as shown in Section \ref{sec:Ph_and_EEentropy:KurodaNagaichange}. 

The second change of measure, via the Radon-Nikodym derivative
$\frac{d \mathbb{P}^{\hat{\Gamma}}}{d \mathbb{P}^{\hat H}}\Big{|}_{\mathcal{F}_T}
=
\frac{d \mathbb{P}^{\hat{N}}}{d \mathbb{P}^{\hat H}}\Big{|}_{\mathcal{F}_T}$,
is specific to the free energy--entropy duality, which recasts the LEQG control problem under the probability measure $\mathbb{P}^{\hat H}$ into an LQG game under the probability measure $\mathbb{P}^{\hat{\Gamma}}$. From a PDE perspective, plugging $\hat{\nu}_s$ into the Bellman--Isaacs equation \eqref{eq:BellmanIsaacs:twostep} yields the quadratic gradient term
$-\frac{\theta}{2}\,DU(s,x)'\Lambda_s\Lambda_s'DU(s,x)$,
which characterizes the HJB PDE for the LEQG control problem \citep[see the last paragraph of Section 2.4.2 in][]{DavisLleoBook2014}. Consequently, at $\nu_s = \hat{\nu}_s$, the Bellman--Isaacs equation for the stochastic differential game reduces to the HJB PDE for the LEQG stochastic control problem. This interpretation also explains the observation in Remark \ref{rk:gammahat:nonlinearterm}.

Performing the two changes of measure sequentially leads to a surprisingly simple situation. The terms inside the $\sup\inf$ of the Hamiltonian $\mathcal{H}^+$ at \eqref{eq:H:twostep}  are still quadratic in the controls, but without interaction, that is, without a term of the form $h'M\nu$ for some matrix $M$. Why did we get this result? The control $h_s$ enters the dynamics of the state process $X$ through the first change of measure from $\mathbb{P}$ to $\mathbb{P}^H$. So, $h_s$ only affects the drift of $X$, not its diffusion. Consequently, in the second change of measure, the additional drift component depends only on $\nu_s$, so no bilinear interaction term between $h_s$ and $\nu_s$ appears in the Hamiltonian.

\begin{corollary}[Special Case of the Kelly Portfolio]\label{coro:Kelly:2}
Taking the limit as $\theta \to 0$ yields the Kelly or growth-optimal portfolio. In this case, notice that the probability measure $\mathbb{P}^{\hat H}$ is identical to the probability measure $\mathbb{P}$, but the Radon–Nikodym derivative defining $\mathbb{P}^{\hat\Gamma}$ remains of the same form. Hence, 
\begin{align}\label{eq:expmart:chi_h-to-gamma:discussion:2}
    \frac{d \mathbb{P}^{\hat{\Gamma}}}{d \mathbb{P}^{\hat H}} \Big{|}_{\mathcal{F}_T}
        &= \exp \bigg\{ \int_t^T Du(s,X_s)'\Lambda_s dW^{\hat{H}}_s  
        -\frac{1}{2} \int_t^T \| \Lambda_s' Du(s,X_s)\|^2 ds \bigg\} = \frac{d \mathbb{P}^{\hat{\Gamma}}}{d \mathbb{P}}\Big{|}_{\mathcal{F}_T},
\end{align}
consistently with Remark \ref{rk:Kelly:1}.
\end{corollary}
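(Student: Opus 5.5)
To establish Corollary \ref{coro:Kelly:2}, I would pass to the limit $\theta\to0$ in the two factors of the decomposition in Proposition \ref{prop:twostep_and_onestep:measure:decomposition}, handling each separately. The first factor is the Kuroda--Nagai density $\chi^{\hat H}_{[t,T]}$ in \eqref{eq:expmart:twostep}. Both its exponent terms carry the prefactors $\theta$ and $\theta^2$. Meanwhile, by \eqref{eq:hhat} together with $Du=-\theta(Qx+q)$, the candidate allocation $\hat h$ tends to the Kelly portfolio $h^K$ of \eqref{eq:Kelly}, which is affine in $X_s$ with bounded deterministic coefficients. Since $X$ is Gaussian, $\int_t^T\|\Sigma_s'\hat h_s-\Xi_s\|^2ds$ is finite almost surely and in $L^1$, uniformly for $\theta$ in a neighbourhood of $0$. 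Hence both the It\^o integral term and the $ds$ term vanish (in probability, or in $L^2$ for the stochastic integral) as $\theta\to0$. This gives $\chi^{\hat H}_{[t,T]}\to1$, that is $\mathbb{P}^{\hat H}=\mathbb{P}$ in the limit. The same argument shows $W^{\hat H}\to W$ by \eqref{eq:def:Wh}.

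The second factor is the density \eqref{eq:expmart:chi_h-to-gamma:discussion}. Its form is unchanged: it is always the stochastic exponential of $\int Du(s,X_s)'\Lambda_s\,dW^{\hat H}_s$. Substituting $W^{\hat H}=W$ in the limit turns it into exactly the right-hand side of \eqref{eq:RNderivative:gammahat:theta_so_0}, which is the limit of $d\mathbb{P}^{\hat\Gamma}/d\mathbb{P}$ recorded in Remark \ref{rk:Kelly:1}. Combining this with the decomposition and $\chi^{\hat H}\to1$ yields \eqref{eq:expmart:chi_h-to-gamma:discussion:2}.

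The main obstacle is the meaning of ``$Du$'' in the limit. Since $Du=-\theta(Q_sx+q_s)$, the exponent literally vanishes unless $Du$ is read as a formal symbol, as Remark \ref{rk:Kelly:1} does. I would follow the paper's formal convention: keep $Du$ as a fixed expression and take limits only in the explicit $\theta$-dependence coming from the Kuroda--Nagai factor and from $\hat h$. Where rigour is needed, I would note that $Q,q$ depend continuously on $\theta$ through the Riccati equation \eqref{eq:Q:Riccati} and the linear ODE \eqref{eq:q:ODE} with smooth coefficients on the compact interval $[0,T]$. The identity \eqref{eq:expmart:chi_h-to-gamma:discussion:2} then holds at the level of these formal expressions, and all limits are justified by continuity of stochastic exponentials under $L^2$ convergence of the integrands.
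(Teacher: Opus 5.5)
Your proposal is correct and follows the paper's own reasoning, which is embedded in the statement itself. In the decomposition of Proposition \ref{prop:twostep_and_onestep:measure:decomposition}, the Kuroda--Nagai factor carries the prefactors $\theta$ and $\theta^2$, so it reduces to $1$ and $W^{\hat H}=W$, while the second factor keeps its form, which gives \eqref{eq:RNderivative:gammahat:theta_so_0}. Your uniform $L^2$ bounds add rigour the paper omits, and your caveat is accurate: since $Du=-\theta(Q_sx+q_s)\to 0$, the identity holds only formally in the sense of Remark \ref{rk:Kelly:1}, and taken literally every density involved tends to $1$, so it still holds, trivially.
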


\section{Implementation}\label{sec:Implementation}

This section provides a simple computational illustration of the model using market data. We consider a U.S. equity portfolio manager with a five-year investment horizon and risk sensitivity parameter $\theta = 1$. The benchmark allocates 90\% to the S\&P 500, which comprises the 500 largest publicly traded firms by market capitalization, and 10\% to the S\&P 400, which comprises 400 mid-capitalization firms. The manager uses a six-factor model consisting of the five Fama--French factors \citep{FF2015} together with Momentum \citep{Carhart1997}.

The investment universe contains $m = 13$ U.S. exchange-traded funds (ETFs). The first 11 ETFs track the S\&P 500 sectors defined by the Global Industry Classification Standard (GICS), so the S\&P 500 can be replicated by holding these ETFs at their sector weights. The twelfth ETF is the iShares Core S\&P 400 Mid-Cap ETF (ticker: IJH), which tracks 400 mid-cap stocks. The thirteenth ETF is the iShares Core S\&P 600 Small-Cap ETF (ticker: IJR), which tracks the S\&P 600 Small-Cap Index. Since the S\&P 600 is not included in the benchmark, any allocation to small-capitalization stocks is purely tactical.

Model parameters are estimated from daily returns over June 20, 2018, to December 31, 2024, giving $\mathcal{T} = 1644$ observations and $\Delta t := \frac{1}{252}$. June 20, 2018, is the earliest date for which all series are available, because a GICS revision changed sector definitions and composition. The sample therefore spans a volatile period in the U.S. equity market, including the rise in 2019, the COVID-19 crash in March 2020, and the subsequent rally. Factor data and money market returns are obtained from Kenneth French's data library\footnote{The data are available online at \url{http://mba.tuck.dartmouth.edu/pages/faculty/ken.french/Data_Library/f-f_factors.html}.}, while ETF returns are taken from the CRSP database\footnote{©2026 Center for Research in Security Prices (CRSP), LLC.}.

Because the objective of this section is illustrative rather than predictive, we keep the implementation deliberately simple and assume constant model parameters throughout. Drift coefficients are estimated by applying standard regression methods to a discretized version of the SDE for $X_t$. Diffusion matrices are estimated from quadratic and cross variations using resampling methods. Here, the benchmark is itself a combination of the $m$ traded assets, so $L_t$ is modeled as a fixed-weight portfolio corresponding to the benchmark weights.

Using these estimates, we run 5,000 simulations over 1,260 daily time steps. This yields terminal outcomes for the portfolio and the benchmark, together with 6,300,000 simulated daily log excess returns over the money market rate, which are used to estimate return distributions and performance measures.

Figure \ref{fig:rangeoftheta:excessreturn} reports the distribution of daily log excess returns over the money market rate for the benchmark, the Kelly portfolio, the portfolio obtained via the Kuroda--Nagai change-of-measure approach, and the portfolio obtained via the Free Energy--Entropy Duality. Table \ref{table:rangeoftheta:excessreturn} reports summary statistics, tail-risk measures, and risk-adjusted performance measures for the same portfolios. Consistent with the equivalence results established in Section \ref{sec:Kuroda--Nagai}, the Free Energy--Entropy Duality and the Kuroda--Nagai change-of-measure approach yield identical simulated performance.

\begin{figure}
\begin{center}
\includegraphics[width=0.75\textwidth]{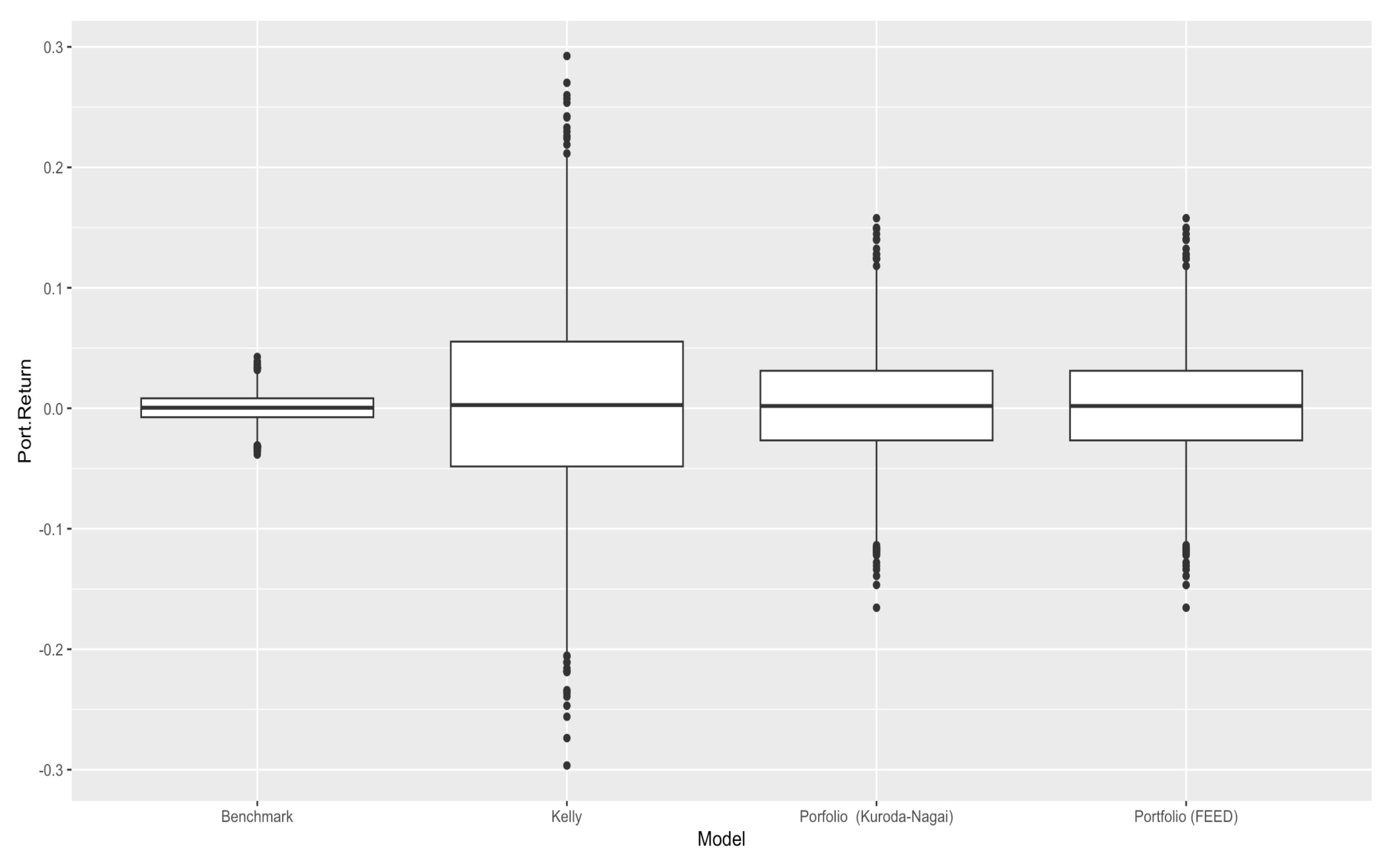}\caption{Box-and-whisker plot of the distribution of daily log excess returns over the money market rate for the benchmark, the Kelly portfolio, the portfolio obtained via the Kuroda--Nagai change-of-measure approach, and the portfolio obtained via the Free Energy--Entropy Duality.}
	\label{fig:rangeoftheta:excessreturn}
    \end{center}
\end{figure}

\begin{table}
\begin{center}
\begin{tabular}{l rrrr}
\hline
 & Benchmark & Portfolio (KN) & Portfolio (FEED) & Kelly \\
\hline\hline
Mean & 0.0507 \% & 0.2437 \% & 0.2437 \% & 0.3078 \% \\
Standard deviation & 1.1682 \% & 4.3154 \% & 4.3154 \% & 7.8217 \% \\
Semideviation & 0.8569 \% & 3.0500 \% & 3.0500 \% & 5.5284 \% \\
Skewness & -0.5895 & 0.0035 & 0.0035 & 0.0030 \\
Kurtosis & 17.4858 & 0.0401 & 0.0401 & 0.0450 \\
\\
VaR 95\%  & 1.6773 \% & 7.0890 \% & 7.0890 \% & 12.8507 \% \\
CVaR 95\% & 2.8334 \% & 8.9186 \% & 8.9186 \% & 16.1727 \% \\
\\
Sharpe & 0.0434 & 0.0565 & 0.0565 & 0.0394 \\
Sortino & 0.0607 & 0.0836 & 0.0836 & 0.0575 \\
Mean-to-VaR 95\% & 0.0302 & 0.0344 & 0.0344 & 0.0240 \\
Mean-to-CVaR 95\% & 0.0179 & 0.0273 & 0.0273 & 0.0190 \\
\hline
\end{tabular}
\end{center}
\caption{Performance measures for daily log excess returns over the money market rate for the benchmark, the portfolio obtained via the Free Energy--Entropy Duality, the portfolio obtained via the Kuroda--Nagai change-of-measure approach, and the Kelly portfolio. The reported measures include summary statistics (mean, standard deviation, semideviation, skewness, and kurtosis), tail-risk measures (VaR and CVaR at the 95\% confidence level, expressed as percentage losses relative to the mean return), and risk-adjusted performance measures (Sharpe ratio, Sortino ratio, mean-to-VaR, and mean-to-CVaR).}\label{table:rangeoftheta:excessreturn}
\end{table}

\section{Conclusion}\label{sec:Conclusion}

This paper develops a new formulation for the solution of risk-sensitive benchmarked investment management through the free energy--entropy duality. The new formulation yields a direct, explicit solution to the benchmarked risk-sensitive portfolio problem, embedding the classical Kuroda--Nagai change-of-measure approach within a more general framework. Specifically, it transforms a non-standard risk-sensitive control problem into a LQG game problem under a suitable equivalent probability measure, with an entropic regularization. When the model parameters are known or can be estimated with high accuracy, this leads to an implementation that is both tractable and implementable. In a companion paper, we relax the assumption of known model parameters and show that the present framework is naturally suited to a reinforcement learning implementation.

%
%


\begin{thebibliography}{}

\bibitem[Bensoussan, 1992]{Bensoussan1992}
Bensoussan, A. (1992).
\newblock {\em Stochastic Control of Partially Observable Systems}.
\newblock Cambridge University Press.

\bibitem[Bielecki \& Pliska, 1999]{bipl99}
Bielecki, T. \& Pliska, S. (1999).
\newblock Risk-sensitive dynamic asset management.
\newblock {\em Applied Mathematics and Optimization}, 39, 337--360.

\bibitem[Bielecki \& Pliska, 2000]{bipl00}
Bielecki, T. \& Pliska, S. (2000).
\newblock Risk sensitive asset management with transaction costs.
\newblock {\em Finance and Stochastics}, 4, 1--33.

\bibitem[Bielecki \& Pliska, 2003]{Bielecki2003}
Bielecki, T. \& Pliska, S. (2003).
\newblock Economic properties of the risk sensitive criterion for portfolio management.
\newblock {\em The Review of Accounting and Finance}, 2(2), 3--17.

\bibitem[Bielecki \& Pliska, 2004]{bipl04}
Bielecki, T. \& Pliska, S. (2004).
\newblock Risk sensitive intertemporal \textsc{CAPM}.
\newblock {\em \textsc{IEEE} Transactions on Automatic Control}, 49(3), 420--432.

\bibitem[Bielecki et~al., 2022]{bieleckiRiskSensitiveMarkovDecision2022a}
Bielecki, T.~R., Chen, T., \& Cialenco, I. (2022).
\newblock Risk-{{Sensitive Markov Decision Problems}} under {{Model Uncertainty}}: {{Finite Time Horizon Case}}.
\newblock In G. Yin \& T. Zariphopoulou (Eds.), {\em Stochastic {{Analysis}}, {{Filtering}}, and {{Stochastic Optimization}}: {{A Commemorative Volume}} to {{Honor Mark H}}. {{A}}. {{Davis}}'s {{Contributions}}}  (pp.\ 33--52). Cham: Springer International Publishing.

\bibitem[Carhart, 1997]{Carhart1997}
Carhart, M. (1997).
\newblock On persistence in mutual fund performance.
\newblock {\em The Journal of Finance}, 52(1), 57--82.

\bibitem[Dai~Pra et~al., 1996]{daipraConnectionsStochasticControl1996}
Dai~Pra, P., Meneghini, L., \& Runggaldier, W.~J. (1996).
\newblock Connections between stochastic control and dynamic games.
\newblock {\em Mathematics of Control, Signals and Systems}, 9(4), 303--326.

\bibitem[Davis \& Lleo, 2008]{dall_RSBench}
Davis, M. \& Lleo, S. (2008).
\newblock Risk-sensitive benchmarked asset management.
\newblock {\em Quantitative Finance}, 8(4), 415--426.

\bibitem[Davis \& Lleo, 2011]{dall_JDRSAM_Diff}
Davis, M. \& Lleo, S. (2011).
\newblock Jump-diffusion risk-sensitive asset management {I}: Diffusion factor model.
\newblock {\em SIAM Journal on Financial Mathematics}, 2, 22--54.

\bibitem[Davis \& Lleo, 2013a]{dall_BLcontinuous}
Davis, M. \& Lleo, S. (2013a).
\newblock Black-{L}itterman in continuous time: The case for filtering.
\newblock {\em Quantitative Finance Letters}, 1(1), 30--35.

\bibitem[Davis \& Lleo, 2013b]{dall_JDRSAM}
Davis, M. \& Lleo, S. (2013b).
\newblock Jump-diffusion risk-sensitive asset management ii: Jump-diffusion factor model.
\newblock {\em SIAM Journal on Control and Optimization}, 51(2), 1441.

\bibitem[Davis \& Lleo, 2013c]{dall_JDBench}
Davis, M. \& Lleo, S. (2013c).
\newblock Jump-diffusion risk-sensitive benchmarked asset management.
\newblock In H. Gassmann \& W. Ziemba (Eds.), {\em Stochastic Programming: Applications in Finance, Energy, Planning and Logistics}  (pp.\ 97--128).: World Scientific Publishing.

\bibitem[Davis \& Lleo, 2014]{DavisLleoBook2014}
Davis, M. \& Lleo, S. (2014).
\newblock {\em Risk-Sensitive Investment Management}, volume~19 of {\em Advanced Series on Statistical Science and Applied Probability}.
\newblock World Scientific Publishing.

\bibitem[Davis \& Lleo, 2015]{dall_JDRSALM}
Davis, M. \& Lleo, S. (2015).
\newblock Jump-diffusion asset-liability management via risk-sensitive control.
\newblock {\em OR Spectrum}, 37(3), 655--675.

\bibitem[Davis \& Lleo, 2021]{davisRisksensitiveBenchmarkedAsset2021}
Davis, M. \& Lleo, S. (2021).
\newblock Risk-sensitive benchmarked asset management with expert forecasts.
\newblock {\em Mathematical Finance}, 31(4), 1162--1189.

\bibitem[Davis \& Lleo, 2024]{dall_JDBenchAltData2024}
Davis, M. \& Lleo, S. (2024).
\newblock Jump-diffusion risk-sensitive benchmarked asset management with traditional and alternative data.
\newblock {\em Annals of Operations Research}, 336, 661--689.

\bibitem[Fama \& French, 2015]{FF2015}
Fama, E. \& French, K.~R. (2015).
\newblock A five-factor asset pricing model.
\newblock {\em Journal of Financial Economics}, 116(1), 1--22.

\bibitem[Fleming \& Soner, 2006]{flso06}
Fleming, W. \& Soner, H. (2006).
\newblock {\em Controlled Markov Processes and Viscosity Solutions}, volume~25 of {\em Stochastic Modeling and Applied Probability}.
\newblock Springer-Verlag, 2nd edition.

\bibitem[Hata, 2017]{hataRisksensitiveAssetManagement2017a}
Hata, H. (2017).
\newblock Risk-sensitive asset management in a general diffusion factor model: Risk-seeking case.
\newblock {\em Japan Journal of Industrial and Applied Mathematics}, 34(1), 59--98.

\bibitem[Hata, 2018]{hataRisksensitivePortfolioOptimization2018}
Hata, H. (2018).
\newblock Risk-sensitive portfolio optimization problem for a large trader with inside information.
\newblock {\em Japan Journal of Industrial and Applied Mathematics}, 35(3), 1037--1063.

\bibitem[Hata, 2021]{hataRiskSensitiveAssetManagement2021}
Hata, H. (2021).
\newblock Risk-{{Sensitive Asset Management}} with {{Lognormal Interest Rates}}.
\newblock {\em Asia-Pacific Financial Markets}, 28(2), 169--206.

\bibitem[Hata \& Iida, 2006]{hataRisksensitiveStochasticControl2006}
Hata, H. \& Iida, Y. (2006).
\newblock A risk-sensitive stochastic control approach to an optimal investment problem with partial information.
\newblock {\em Finance and Stochastics}, 10(3), 395--426.

\bibitem[Kuroda \& Nagai, 2002]{kuna02}
Kuroda, K. \& Nagai, H. (2002).
\newblock Risk-sensitive portfolio optimization on infinite time horizon.
\newblock {\em Stochastics and Stochastics Reports}, 73, 309--331.

\bibitem[Liptser \& Shiryaev, 2004]{LiShI04}
Liptser, R. \& Shiryaev, A. (2004).
\newblock {\em Statistics of Random Processes: I. General Theory}.
\newblock Probability and Its Applications. Springer-Verlag, 2 edition.

\bibitem[Lleo \& MacLean, 2025]{lleoDualDominanceHow2025}
Lleo, S. \& MacLean, L.~C. (2025).
\newblock Dual dominance: how {Harry} {Markowitz} and {William} {Ziemba} impacted portfolio management.
\newblock {\em Annals of Operations Research}, 346(1), 181--216.

\bibitem[Lleo \& Runggaldier, 2024]{LleoRunggaldierSeparation24}
Lleo, S. \& Runggaldier, W. (2024).
\newblock On the separation of estimation and control in risk-sensitive investment problems under incomplete observation.
\newblock {\em European Journal of Operational Research}, 316(1), 200--214.

\bibitem[Lleo \& Runggaldier, 2026]{LleoRunggaldier_EntropyRegularizationinRLandRSIM_2026}
Lleo, S. \& Runggaldier, W. (2026).
\newblock Reinforcement learning for risk-sensitive investment management: A free energy--entropy duality approach.
\newblock Preprint.

\bibitem[Meneghini, 1994]{Meneghini1994}
Meneghini, L. (1994).
\newblock {\em Modelli Risolvibili per Problemi di Controllo di Sistemi Dinamici Imprecisi Multivariati}.
\newblock PhD thesis, Universit\`{a} Degli Studi Di Padova.

\bibitem[Nagai \& Peng, 2002]{nape02}
Nagai, H. \& Peng, S. (2002).
\newblock Risk-sensitive dynamic portfolio optimization with partial information on infinite time horizon.
\newblock {\em The Annals of Applied Probability}, 12(1), 173--195.

\end{thebibliography}

\appendix

\section{Proofs}

\subsection{Recalling Lemma 5.3.1 from \citet{Meneghini1994} along with its proof}\label{app:Lemma:Meneghini}

\begin{lemma}\label{lem:infexp:app}
 Let $f: X\,\rightarrow\,{\mathbb{R}^+}$ be a function such that $\inf_{x\in X}f(x)>0$. Then
 $$\log\left(\inf_{x\in X}f(x)\right)=\inf_{x\in X}\log(f(x))$$
\end{lemma}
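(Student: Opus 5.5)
The plan is to reduce the claim to two elementary facts: $\log$ is continuous and strictly increasing on $(0,\infty)$, and the infimum involved is strictly positive. Set $m:=\inf_{x\in X}f(x)$. By hypothesis $m>0$, so $\log m$ is a finite real number and the left-hand side is well defined. Since $f(x)\ge m$ for every $x\in X$, all the values $\log f(x)$ are finite, and so is their infimum. I will prove the two inequalities $\inf_{x}\log f(x)\ge \log m$ and $\inf_x \log f(x)\le \log m$ separately.

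For the first inequality, I use monotonicity. For every $x\in X$ we have $f(x)\ge m>0$. Because $\log$ is increasing on $(0,\infty)$, this gives $\log f(x)\ge \log m$. The number $\log m$ is therefore a lower bound of $\{\log f(x):x\in X\}$, so $\inf_x\log f(x)\ge\log m$.

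For the reverse inequality, I use continuity together with the approximation property of the infimum. Choose a sequence $(x_k)\subset X$ with $f(x_k)\downarrow m$; this is possible because $m$ is the greatest lower bound and is finite. Each $f(x_k)$ and the limit $m$ all lie in $(0,\infty)$, where $\log$ is continuous. Hence $\log f(x_k)\to\log m$, and so $\inf_x\log f(x)\le\inf_k\log f(x_k)\le \log m$. Combining this with the previous paragraph yields the stated equality.

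There is no real obstacle here. The only delicate point is where the hypothesis $\inf f>0$ is used. Without it, $m$ could be $0$: the left-hand side would then be $-\infty$, and continuity of $\log$ at $m$ would fail. Positivity of $m$ is exactly what keeps the limit inside the domain where $\log$ is continuous. In the paper's applications, such as \eqref{eq:EEDuality:inf}, the lemma is applied to $f=I(\cdot,\theta)$ or $f=I^H(\cdot,\theta)$. These are expectations of exponentials, hence strictly positive, but one should still note that their infimum over $\mathcal{A}^H$ is positive. That follows, for instance, from the finiteness of the value $u$ obtained in Theorem \ref{theo:sol_IsaacPDE}.
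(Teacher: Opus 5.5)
Your proof is correct and follows essentially the same route as the paper's. Monotonicity of $\log$ gives $\log(\inf f)\le\inf\log f$, and an approximating sequence $f(x_k)\to\inf f$, combined with continuity of $\log$ at the positive point $\inf f$, gives the reverse inequality; the paper first bounds $\log f(x_n)\le\log(\inf f+1/n)$ before passing to the limit, which is the same idea.
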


\begin{proof}
{}
\begin{itemize}\item[i)] First we prove the one-sided inequality $(\le)$

For any $\bar x\in X$ we have $\inf_{x\in X}f(x)\le f(\bar x)$ which, by the monotonicity of $\log(\cdot)$, implies
$$\log\left(\inf_{x\in X}f(x)\right)\le \log\left(f(\bar x)\right)\quad\mbox{and, with it,}\quad \log\left(\inf_{x\in X}f(x)\right)\le \inf_{x\in X}\log\left(f(x)\right)$$
\item[ii)] Inverse inequality $(\ge)$

By the definition of $\inf$, for each $n\in\mathbb{N}$ there exists $x_n\in X$ such that
$$\inf_{x\in X}f(x)\le f(x_n)\le \inf_{x\in X}f(x)+\frac{1}{n}$$
By the monotonicity of $\log(\cdot)$ it then follows
$$\log f(x_n)\le \log\left( \inf_{x\in X}f(x)+\frac{1}{n}\right)\quad\mbox{and, with it,}\quad \inf_{x\in X}\log f(x)\le \log\left( \inf_{x\in X}f(x)+\frac{1}{n}\right)$$
Passing to the limit when $n\to\infty$ and thereby using the continuity of $\log(\cdot)$, 
$$\inf_{x\in X}\log f(x)\le \log\left( \inf_{x\in X}f(x)\right)$$
\end{itemize}
\end{proof}

\subsection{Proof of Proposition \ref{prop:candidate_controls}}\label{app:proof:prop:candidate_controls}

We express the Bellman--Isaacs equation \eqref{eq:BellmanIsaacs:u} as 
\begin{align}\label{eq:BellmanIsaacs:u:2}
    & \frac{\partial u(s,x)}{\partial s} + \inf_{h \in \mathbb{R}^m} \sup_{\gamma \in \mathbb{R}^d} \Bigg\{
    \left[b_s + B_s x + \Lambda_s \gamma \right]' Du(s,x) 
    + \frac{1}{2} \tr \left(\Lambda_s\Lambda_s' D^2u(s,x) \right)
    +\theta g(s,x,h,\gamma;\theta) \Bigg\}
                        \nonumber\\
    =& \frac{\partial u(s,x)}{\partial s}
        + \left[ \left(b_s + B_s x \right)' Du(s,x) 
        + \frac{1}{2} \tr \left(\Lambda_s\Lambda_s' D^2u(s,x) \right)
        + \theta \left(c_s + C_s x \right)
        - \frac{\theta}{2}\Xi_s'\Xi_s 
        \right]
                        \nonumber\\
    &+ \inf_{h \in \mathbb{R}^m} \sup_{\gamma \in \mathbb{R}^d} \theta F(s,x,h,\gamma;\theta,Du(s,x)),
\end{align}

where
\begin{align}\label{eq:proof:candidate_controls:F}
    F(s,x,h,\gamma;\theta,p)
    := \frac{1}{2} h'\Sigma_s\Sigma_s'h 
        - h' (a_s + A_s x) 
        - \frac{1}{2\theta} \gamma'\gamma
        - \gamma'\left(\Sigma_s'h - \Xi_s \right)
        + \frac{1}{\theta}\gamma'\Lambda_s' p.
\end{align}

To derive the two equivalent representations of the candidate policies, we optimize in the two possible orders over $h$ and $\gamma$.

For the first representation, \eqref{eq:hhat}--\eqref{eq:gammahat:1}, we first solve the supremum problem in $\gamma$. The function $F$ is quadratic in $\gamma$. Its unique maximizer corresponds to the candidate policy
\begin{align}\label{eq:gammahat:proof:1}
    \hat{\gamma}(s,x,h,Du(s,x))
    = \Lambda_s' Du(s,x) - \theta \left(\Sigma_s'h - \Xi_s \right).
\end{align}
Substituting into $F$, we get 

\begin{align}\label{eq:proof:candidate_controls:F:2}
    & F\left(s,x,h,\hat{\gamma}(s,x,h,Du(s,x));\theta,Du(s,x)\right)
                    \nonumber\\
    =&  \frac{\theta+ 1}{2} h'\Sigma_s\Sigma_s'h 
        - h' \left( a_s + A_s x + \Sigma_s\Lambda_s'Du(s,x)+ \theta \Sigma_s\Xi_s\right)
        + \frac{\theta}{2}\Xi_s'\Xi_s
        + \Xi_s'\Lambda_s'Du(s,x)
                     \nonumber\\  
    &   + \frac{1}{2\theta}     
            Du(s,x)'\Lambda_s\Lambda_s'Du(s,x).
                    \nonumber\\  
    =& \frac{\theta+ 1}{2} \left[
        h - 
        \frac{1}{\theta+1} \left(\Sigma_s\Sigma_s'\right)^{-1}
            \left( a_s + A_s x
            + \theta \Sigma_s\Xi_s + \Sigma_s\Lambda_s' Du(s,x)
        \right)
    \right]' \left(\Sigma_s\Sigma_s'\right)
                        \nonumber\\  
    & \cdot \left[
        h - 
        \frac{1}{\theta+1} \left(\Sigma_s\Sigma_s'\right)^{-1}
            \left( a_s + A_s x
            + \theta \Sigma_s\Xi_s + \Sigma_s\Lambda_s' Du(s,x)
        \right)
    \right]
                        \nonumber\\  
    &   - \frac{1}{2(\theta+ 1)} \left( 
            a_s + A_s x
            + \theta \Sigma_s\Xi_s \right)' \left(\Sigma_s\Sigma_s'\right)^{-1}\left(
            a_s + A_s x
            + \theta \Sigma_s\Xi_s
        \right)
                        \nonumber\\
    &   - \frac{1}{\theta+ 1} \left( 
            a_s + A_s x
            + \theta \Sigma_s\Xi_s \right)'\left(\Sigma_s\Sigma_s'\right)^{-1}\Sigma_s\Lambda_s' Du(s,x)
                        \nonumber\\
    &   + \frac{\theta}{2}\Xi_s'\Xi_s
        + \Xi_s'\Lambda_s'Du(s,x)
        + \frac{1}{2\theta} Du'(s,x)\Lambda_s \mathcal{P}_s^{-}(\theta) \Lambda_s' Du(s,x)
                        \nonumber\\ 
\end{align}

Next, we solve the infimum problem in $h$, given the maximizer in $\gamma$ obtained above. The function $F$ is quadratic in $h$. Its unique minimizer corresponds to the candidate policy
\begin{align}\label{eq:hhat:proof}
\hat{h}(s,x,Du(s,x))
    =  \frac{1}{\theta+1}\left(\Sigma_s\Sigma_s'\right)^{-1}
\left( a_s + A_s x
+ \theta \Sigma_s\Xi_s + \Sigma_s\Lambda_s' Du(s,x)
\right).
\end{align}
Hence, at the minimizer $\hat h$, $F$ takes the value
\begin{align}\label{eq:proof:candidate_controls:F:min:h} 
& F\left(s,x,\hat h(s,x,Du(s,x)),\hat \gamma(s,x,Du(s,x));\theta,Du(s,x)\right) 
                        \nonumber\\ 
=& - \frac{1}{2(\theta+ 1)} \left( a_s + A_s x + \theta \Sigma_s\Xi_s \right)' \left(\Sigma_s\Sigma_s'\right)^{-1}\left( a_s + A_s x + \theta \Sigma_s\Xi_s \right) 
                        \nonumber\\ 
    & - \frac{1}{\theta+ 1} \left( a_s + A_s x + \theta \Sigma_s\Xi_s \right)'\left(\Sigma_s\Sigma_s'\right)^{-1}\Sigma_s\Lambda_s' Du(s,x) 
                        \nonumber\\ 
    &   + \frac{\theta}{2}\Xi_s'\Xi_s
        + \Xi_s'\Lambda_s'Du(s,x)
        + \frac{1}{2\theta} Du'(s,x)\Lambda_s \mathcal{P}_s^{-}(\theta) \Lambda_s' Du(s,x)
\end{align}

Finally, we substitute \eqref{eq:hhat:proof} into \eqref{eq:gammahat:proof:1} to get
\begin{align}\label{eq:gammahat:proof:2}
    \hat{\gamma}(s,x, Du(s,x)) 
    =&  \Lambda_s' Du(s,x) - \theta \left[\frac{1}{\theta+1} \Sigma_s'\left(\Sigma_s\Sigma_s'\right)^{-1}
        \left( a_s + A_s x + \theta \Sigma_s\Xi_s + \Sigma_s\Lambda_s' Du(s,x) \right) - \Xi_s \right]
                                    \nonumber\\
    =&  \mathcal{P}_s^{-}(\theta)\Lambda_s' Du(s,x) 
    - \frac{\theta}{\theta+1} \Sigma_s'\left(\Sigma_s\Sigma_s'\right)^{-1}
        \left( a_s + A_s x \right) 
    + \theta \mathcal{P}_s^{-}(\theta) \Xi_s. 
\end{align}

Next, we reverse the order of optimization and begin with the infimum problem in $h$. The function $F$ is quadratic in $h$. Its unique minimizer corresponds to the candidate policy
\begin{align}\label{eq:hhat:proof:alt:1}
    \hat h(s,x,\gamma) 
    &= \left(\Sigma_s\Sigma_s'\right)^{-1}  
        \left[ (a_s + A_s x) + \Sigma_s \gamma\right].
\end{align}

Substituting into $F$, we get
\begin{align}\label{eq:proof:candidate_controls:F:alt:2}
    & F\left(s,x,\hat h(s,x,\gamma),\gamma;\theta,Du(s,x)\right)
                        \nonumber\\
    =& \frac{1}{2}  
        \left[ (a_s + A_s x) + \Sigma_s \gamma\right]'\underbrace{\left(\Sigma_s\Sigma_s'\right)^{-1}\Sigma_s\Sigma_s'}_{I_m}\left(\Sigma_s\Sigma_s'\right)^{-1}  
                        \nonumber\\
    &   \cdot \left[ (a_s + A_s x) + \Sigma_s \gamma\right] 
        - (a_s + A_s x)'\left(\Sigma_s\Sigma_s'\right)^{-1}  
        \left[ (a_s + A_s x) + \Sigma_s \gamma\right] 
                        \nonumber\\
    &   - \frac{1}{2\theta} \gamma'\gamma
        - \gamma'\Sigma_s'\left(\Sigma_s\Sigma_s'\right)^{-1}  
        \left[ (a_s + A_s x) + \Sigma_s \gamma\right]
         + \gamma'\Xi_s
         + \frac{1}{\theta}\gamma'\Lambda_s' Du(s,x)
                        \nonumber\\
    =& - \frac{1}{2\theta} \left\{ \gamma -\left[
            \mathcal{P}_s^{+}(\theta)
        \right]^{-1} 
        \left[ 
            - \theta \Sigma_s' \left(\Sigma_s\Sigma_s'\right)^{-1} (a_s + A_s x)
            + \theta \Xi_s
            + \Lambda_s'Du(s,x)
        \right]\right\}'
                        \nonumber\\
    & 
        \mathcal{P}_s^{+}(\theta)
        \left\{ \gamma -\left[
            \mathcal{P}_s^{+}(\theta)
        \right]^{-1} 
        \left[ 
            - \theta \Sigma_s' \left(\Sigma_s\Sigma_s'\right)^{-1} (a_s + A_s x)
            + \theta \Xi_s
            + \Lambda_s'Du(s,x)
        \right]\right\}
                        \nonumber\\
    &   + \frac{1}{2\theta}  
        \left[ 
            - \theta \Sigma_s' \left(\Sigma_s\Sigma_s'\right)^{-1} (a_s + A_s x)
            + \theta \Xi_s
            + \Lambda_s'Du(s,x)
        \right]'
                        \nonumber\\
    & 
        \left[
            \mathcal{P}_s^{+}(\theta)
        \right]^{-1}  
        \left[ 
            - \theta \Sigma_s' \left(\Sigma_s\Sigma_s'\right)^{-1} (a_s + A_s x)
            + \theta \Xi_s
            + \Lambda_s'Du(s,x)
        \right]
        - \frac{1}{2} (a_s + A_s x)' \left(\Sigma_s\Sigma_s'\right)^{-1}(a_s + A_s x)
\end{align}

We then solve the supremum for $\gamma$, given the value we found for $\hat{h}$. The function $F$ is quadratic and strictly concave in $\gamma$. Its unique maximizer corresponds to the candidate policy
\begin{align}\label{eq:gammahat:proof}
\hat \gamma (s,x,Du(s,x))
    =  \left[
            \mathcal{P}_s^{+}(\theta)
        \right]^{-1} 
        \left[ 
            - \theta \Sigma_s' \left(\Sigma_s\Sigma_s'\right)^{-1} (a_s + A_s x)
            + \theta \Xi_s
            + \Lambda_s'Du(s,x)
        \right]
\end{align}

Substituting the maximizer \eqref{eq:gammahat:proof} into \eqref{eq:hhat:proof:alt:1} recovers \eqref{eq:hhat:alt}, and hence, at the maximizer $\hat \gamma$, $F$ takes the value
\begin{align}\label{eq:proof:candidate_controls:F:max:gamma}
    & F\left(s,x,\hat h(s,x,Du(s,x)),\hat \gamma(s,x,Du(s,x));\theta,Du(s,x)\right)
                        \nonumber\\
    =& \frac{1}{2\theta}  
        \left[ 
            - \theta \Sigma_s' \left(\Sigma_s\Sigma_s'\right)^{-1} (a_s + A_s x)
            + \theta \Xi_s
            + \Lambda_s'Du(s,x)
        \right]'
        \left[
            \mathcal{P}_s^{+}(\theta)
        \right]^{-1}  
                        \nonumber\\
    &   \cdot \left[ 
            - \theta \Sigma_s' \left(\Sigma_s\Sigma_s'\right)^{-1} (a_s + A_s x)
            + \theta \Xi_s
            + \Lambda_s'Du(s,x)
        \right]
       - \frac{1}{2} (a_s + A_s x)' \left(\Sigma_s\Sigma_s'\right)^{-1}(a_s + A_s x)
                        \nonumber\\ 
\end{align}
The final step is to apply \eqref{eq:projinverse} to \eqref{eq:proof:candidate_controls:F:max:gamma} and to develop the expression to obtain \eqref{eq:proof:candidate_controls:F:min:h}. 

Thus, the same saddle value is obtained under either order of optimization. Since $F$ at \eqref{eq:proof:candidate_controls:F:min:h} and \eqref{eq:proof:candidate_controls:F:max:gamma} are equal, we conclude that $\mathcal{H}^+$ given at \eqref{eq:Hamiltonian:H+} and $\mathcal{H}^-$ given at \eqref{eq:Hamiltonian:H-} are equal,  $\mathcal{H}^+(s,x, p, M) = \mathcal{H}^-(s,x, p, M)$, so the stochastic differential game is well defined.

\subsection{Proof of Theorem \ref{theo:sol_IsaacPDE}}\label{app:proof:theo:sol_IsaacPDE}

With \eqref{eq:proof:candidate_controls:F:min:h}, the Bellman--Isaacs equation at \eqref{eq:BellmanIsaacs:u:2} becomes
\begin{align}
    &   \frac{\partial u(s,x)}{\partial s}
        + \left[ \left(b_s + B_s x \right)' Du(s,x) 
        + \frac{1}{2} \tr \left(\Lambda_s\Lambda_s' D^2u(s,x) \right)
        + \theta \left(c_s + C_s x \right)
        - \frac{\theta}{2}\Xi_s'\Xi_s 
        \right]   
                        \nonumber\\
    &   + \theta \Bigg[ - \frac{1}{2(\theta+ 1)} \left( a_s + A_s x + \theta \Sigma_s\Xi_s \right)' \left(\Sigma_s\Sigma_s'\right)^{-1}\left( a_s + A_s x + \theta \Sigma_s\Xi_s \right) 
                        \nonumber\\ 
    & - \frac{1}{\theta+ 1} \left( a_s + A_s x + \theta \Sigma_s\Xi_s \right)'\left(\Sigma_s\Sigma_s'\right)^{-1}\Sigma_s\Lambda_s' Du(s,x) 
                        \nonumber\\ 
    &   + \frac{\theta}{2}\Xi_s'\Xi_s
        + \Xi_s'\Lambda_s'Du(s,x)
        + \frac{1}{2\theta} Du'(s,x)\Lambda_s \mathcal{P}_s^{-}(\theta) \Lambda_s' Du(s,x) \Bigg]
    = 0.
\end{align}
Next, substitute the derivatives of the ansatz \eqref{eq:sol:u}, namely
\[
Du(s,x)=-\theta(Q_sx+q_s), \qquad D^2u(s,x)=-\theta Q_s,
\]
to obtain
\begin{align}
    &   -\theta\left(\frac{1}{2}x'\dot{Q}_s x + x' \dot{q}_s + \dot{k}_s\right)
        + \left[ -\theta \left(b_s + B_s x \right)' \left(Q_s x + q_s\right) 
        - \frac{\theta}{2} \tr \left(\Lambda_s\Lambda_s' Q_s \right)
        + \theta \left(c_s + C_s x \right)
        - \frac{\theta}{2}\Xi_s'\Xi_s 
        \right]   
                        \nonumber\\
    &   + \theta \Bigg[ - \frac{1}{2(\theta+ 1)} \left( a_s + A_s x + \theta \Sigma_s\Xi_s \right)' \left(\Sigma_s\Sigma_s'\right)^{-1}\left( a_s + A_s x + \theta \Sigma_s\Xi_s \right) 
                        \nonumber\\ 
    & + \frac{\theta}{\theta+ 1} \left( a_s + A_s x + \theta \Sigma_s\Xi_s \right)'\left(\Sigma_s\Sigma_s'\right)^{-1}\Sigma_s\Lambda_s' (Q_sx+q_s) 
                        \nonumber\\ 
    &   + \frac{\theta}{2}\Xi_s'\Xi_s
        - \theta \Xi_s'\Lambda_s'(Q_sx+q_s)
        + \frac{\theta}{2} (Q_sx+q_s)'\Lambda_s \mathcal{P}_s^{-}(\theta) \Lambda_s' (Q_sx+q_s) \Bigg]
    = 0.
\end{align}
where we used the notation $\dot{Q}_s = \frac{dQ}{ds}, \dot{q}_s = \frac{dq}{ds}, \dot{k}_s = \frac{dk}{ds}$.

Divide by $-\theta$, rearrange, and factor into a quadratic form in $x$:
\begin{align}\label{eq:proof:sol_IsaacPDE:interm_final}
    &   \frac{1}{2} x' \Bigg\{ 
            \dot{Q}_s
            + 2 B_s'Q_s
            + \frac{1}{(\theta+1)} A_s' (\Sigma_s\Sigma_s')^{-1} A_s
            - \frac{2\theta}{(\theta+1)} A_s'(\Sigma_s\Sigma_s')^{-1} \Sigma_s\Lambda_s'Q_s
            - \theta  Q_s'\Lambda_s \mathcal{P}_s^{-}(\theta) \Lambda_s'Q_s
        \Bigg\} x
                                        \nonumber\\
    &   + x' \Bigg\{
            \dot{q}_s
            + Q_s' b_s + B_s'q_s
            - C_s'
            + \frac{1}{\theta+1} A_s'(\Sigma_s\Sigma_s')^{-1}
        \left( a_s + \theta \Sigma_s\Xi_s \right)
            - \frac{\theta}{\theta+1} 
         A_s'(\Sigma_s\Sigma_s')^{-1}\Sigma_s\Lambda_s' q_s
                                        \nonumber\\
        &  - \theta Q_s' \Lambda_s\mathcal{P}_s^{-}(\theta)\Lambda_s' q_s
            - \frac{\theta}{\theta+1}  Q_s'\Lambda_s\Sigma_s'(\Sigma_s\Sigma_s')^{-1}
        \left( a_s + \theta \Sigma_s\Xi_s \right)
            + \theta Q_s'\Lambda_s\Xi_s 
        \Bigg\}
                                        \nonumber\\
      & + \Bigg\{
            \dot{k}_s
            + b_s'q_s 
            - \frac{\theta}{2} q_s' \Lambda_s\Lambda_s' q_s 
            + \theta \Xi_s'\Lambda_s'q_s
            + \frac{1}{2} \tr \left(\Lambda_s\Lambda_s' Q_s \right)
            - c_s
            - \frac{\theta-1}{2}\Xi_s'\Xi_s 
                                        \nonumber\\
    &       + \frac{1}{2(\theta+1)} \left(a_s + \theta \Sigma_s\Xi_s\right)'(\Sigma_s\Sigma_s')^{-1} \left(a_s + \theta \Sigma_s\Xi_s\right)  
            + \frac{\theta^2}{2(\theta+1)}  q_s'\Lambda_s\Sigma_s'(\Sigma_s\Sigma_s')^{-1}
    \Sigma_s\Lambda_s'q_s
                                        \nonumber\\
    &       - \frac{\theta}{\theta+1} \left( a_s + \theta \Sigma_s\Xi_s \right)'(\Sigma_s\Sigma_s')^{-1}\Sigma_s\Lambda_s'q_s
        \Bigg\}
    = 0.
\end{align}

Equation \eqref{eq:proof:sol_IsaacPDE:interm_final} holds if and only if the coefficients of the quadratic term in $x$, of the linear term in $x$, and the deterministic term all equal 0. We now address each of these terms to express them into a more convenient form, starting with the quadratic term in $x$: 
\begin{align}
    & \dot{Q}_s
    + 2 B_s'Q_s
    + \frac{1}{(\theta+1)} A_s' (\Sigma_s\Sigma_s')^{-1} A_s
    - \frac{2\theta}{(\theta+1)} A_s'(\Sigma_s\Sigma_s')^{-1} \Sigma_s\Lambda_s'Q_s
    - \theta  Q_s'\Lambda_s \mathcal{P}_s^{-}(\theta) \Lambda_s'Q_s
    = 0
                                            \nonumber\\
    \Leftrightarrow 
    &\dot{Q}_s
    - \theta  Q_s'\Lambda_s \mathcal{P}_s^{-}(\theta) \Lambda_s'Q_s
    + \left(B_s' - \frac{\theta}{(\theta+1)} A_s'(\Sigma_s\Sigma_s')^{-1} \Sigma_s\Lambda_s' 
    \right)Q_s 
                                            \nonumber\\
    &+ Q_s'\left(B_s - \frac{\theta}{(\theta+1)} \Lambda_s\Sigma_s'(\Sigma_s\Sigma_s')^{-1}A_s  
    \right)
    + \frac{1}{(\theta+1)} A_s' (\Sigma_s\Sigma_s')^{-1} A_s
    = 0.
\end{align}     
This final expression is a Riccati equation for the function $Q_s$. 

Moving on to the linear term in $x$, we get 
\begin{align}
    & \dot{q}_s
        + Q_s' b_s + B_s'q_s
        - C_s'
        + \frac{1}{\theta+1} A_s'(\Sigma_s\Sigma_s')^{-1}
    \left( a_s + \theta \Sigma_s\Xi_s \right)
        - \frac{\theta}{\theta+1} 
         A_s'(\Sigma_s\Sigma_s')^{-1}\Sigma_s\Lambda_s' q_s
                                        \nonumber\\
    &  - \theta Q_s' \Lambda_s\mathcal{P}_s^{-}(\theta)\Lambda_s' q_s
    - \frac{\theta}{\theta+1}  Q_s'\Lambda_s\Sigma_s'(\Sigma_s\Sigma_s')^{-1}
    \left( a_s + \theta \Sigma_s\Xi_s \right)
        + \theta Q_s'\Lambda_s\Xi_s 
       = 0 
                                        \nonumber\\
    \Leftrightarrow
    & \dot{q}_s
        + \left(B_s'- \frac{\theta}{\theta+1} A_s'(\Sigma_s\Sigma_s')^{-1}\Sigma_s\Lambda_s' \right)q_s
        - \theta Q_s' \Lambda_s\mathcal{P}_s^{-}(\theta)\Lambda_s' q_s
                                        \nonumber\\
    &   + Q_s' \left[b_s - \frac{\theta}{\theta+1}  \Lambda_s\Sigma_s'(\Sigma_s\Sigma_s')^{-1}
    \left( a_s + \theta \Sigma_s\Xi_s \right) + \theta \Lambda_s\Xi_s\right]
                                        \nonumber\\
    &   - C_s'
        + \frac{1}{\theta+1} A_s'(\Sigma_s\Sigma_s')^{-1}
    \left( a_s + \theta \Sigma_s\Xi_s \right)
        = 0,
\end{align}
which is a linear ODE for $q_s$.

Finally, 
\begin{align}
    k_t  
    =& \int_t^T \Bigg\{ 
        \frac{\theta}{2} q_s'\Lambda_s \mathcal{P}_s^{-}(\theta)\Lambda_s' q_s
       - \left[b_s'+ \theta \Xi_s'\Lambda_s' - \frac{\theta}{\theta+1} \left( a_s + \theta \Sigma_s\Xi_s \right)'(\Sigma_s\Sigma_s')^{-1}\Sigma_s\Lambda_s' \right]q_s 
                                        \nonumber\\
    &   - \frac{1}{2} \tr \left(\Lambda_s\Lambda_s' Q_s \right)
        - \frac{1}{2(\theta+1)} \left(a_s + \theta \Sigma_s\Xi_s\right)'(\Sigma_s\Sigma_s')^{-1} \left(a_s + \theta \Sigma_s\Xi_s\right)
        + c_s
        + \frac{\theta-1}{2}\Xi_s'\Xi_s \Bigg\} ds.      
\end{align}

This yields \eqref{eq:Q:Riccati}, \eqref{eq:q:ODE}, and \eqref{eq:k:integral}, and hence proves the theorem.

\subsection{Proof of Theorem \ref{theo:verification}}\label{app:proof:theo:verif}

1. We start by establishing that the candidate strategies $H^*$ and $\Gamma^*$ from Proposition \ref{prop:candidate_controls} are Borel-measurable and Markov.
For the value function $u(t,x)$ in Theorem \ref{theo:sol_IsaacPDE}, the candidate strategies $H^*$ and $\Gamma^*$ induced by \eqref{eq:hhat} and \eqref{eq:gammahat:2} simplify to $H^* = \left(h^*\left(s,X_s\right) \right)_{s\in [t,T]}$ and $\Gamma^* = \left(\gamma^*\left(s,X_s\right) \right)_{s\in [t,T]}$  with

\begin{align}
    h^*(s,x)
    =&  
    \frac{1}{\theta+1}\left(\Sigma_s\Sigma_s'\right)^{-1}\left[
        \left(A_s - \theta \Sigma_s\Lambda_s' Q_s \right)x
        + a_s
        + \theta \Sigma_s (\Xi_s - \Lambda_s' q_s)
        \right]
                                \label{eq:hhat:verif}\\
    \gamma^*(s,x) 
    =&  
    \left\{
        - \frac{\theta}{\theta+1} \Sigma_s'\left(\Sigma_s\Sigma_s'\right)^{-1}A_s 
        - \theta \mathcal{P}_s^{-}(\theta)\Lambda_s' Q_s
    \right\}x 
    - \frac{\theta}{\theta+1} \Sigma_s'\left(\Sigma_s\Sigma_s'\right)^{-1}a_s
     + \theta \mathcal{P}_s^{-}(\theta) ( \Xi_s- \Lambda_s' q_s).
                                \label{eq:gammahat:verif}
\end{align}  
    
Therefore, the maps defining $h^*(s,X_s)$ at \eqref{eq:hhat:verif} and $\gamma^*(s,X_s)$ at \eqref{eq:gammahat:verif} are Borel-measurable. Therefore, the defining policies $h^*(\cdot)$ and $\gamma^*(\cdot)$ are Borel-measurable, and the induced strategies $H^*$ and $\Gamma^*$ are Markov.

Moreover, under $\mathbb{P}^{\Gamma^*}$, the state process $X = \left(X_s\right)_{s \in [t,T]}$ remains Gaussian. To see this, substitute \eqref{eq:gammahat:verif} into \eqref{eq:state:Pgamma:FO} to obtain the following dynamics:
\begin{align}\label{eq:state:Pgamma:FO:verif}
    d X_s
    =& \Bigg\{ \left[
        b_s 
        - \frac{\theta}{\theta+1} \Lambda_s\Sigma_s'\left(\Sigma_s\Sigma_s'\right)^{-1}a_s
     + \theta \Lambda_s\mathcal{P}_s^{-}(\theta) ( \Xi_s- \Lambda_s' q_s)
        \right]
                            \nonumber\\
    &+ \left[ 
        B_s 
         - \frac{\theta}{\theta+1} \Lambda_s \Sigma_s'\left(\Sigma_s\Sigma_s'\right)^{-1}A_s 
        - \theta \Lambda_s\mathcal{P}_s^{-}(\theta)\Lambda_s' Q_s
        \right] X_s 
        \Bigg\}ds
        + \Lambda_s dW^{\Gamma^*}_s,
\end{align}

From \eqref{eq:hhat:verif} and \eqref{eq:state:Pgamma:FO:verif}, by which $h^*(s,x)$ is affine in $x$ and $X$ is a Gaussian process, we can conclude that the strategy $H^*$ is a Markov strategy in class $\mathcal{A}^H$. Since $h^*(s,X_s)$ is affine in the Gaussian state $X_s$, it is square-integrable on $[t,T]$. Hence $H^*\in\mathcal A^H$. To show that $\Gamma^*$ is Markov and in class
$\mathcal{A}^{\Gamma}$ recall that, by Theorem \ref{theo:sol_IsaacPDE}, $u(t,x)$ is quadratic in $x$ and so 
$\hat\gamma(s,x,Du(s,x))$ in \eqref{eq:gammahat:1} of Proposition \ref{prop:candidate_controls} is affine in $x$. Moreover, under $\mathbb{P}^{\Gamma^*}$, the state process $X$ given by \eqref{eq:state:Pgamma:FO:verif} is Gaussian.

For convenience, define the density increment process associated with $\Gamma^*$ on $[t,T]$ by
\begin{align}\label{eq:chi:GammaStar:ts}
\chi_{t,s}^{\Gamma^*}
:=
\exp\left\{
-\frac12\int_t^s \|\gamma^*(u,X_u)\|^2\,du
+\int_t^s \gamma^*(u,X_u)' \, dW_u
\right\},
\qquad s\in[t,T].
\end{align}
Equivalently, $\chi_{t,s}^{\Gamma^*}=\chi_s^{\Gamma^*}/\chi_t^{\Gamma^*}$, where $\chi^{\Gamma^*}$ is defined in \eqref{eq:RNderivative:gamma}.

Since $\gamma^*(s,X_s)$ is affine in the Gaussian state $X_s$, it is square-integrable on $[t,T]$. Under the standing $C^1$ assumptions on the deterministic coefficients and the finite time horizon, the coefficients are bounded and the Novikov condition is satisfied. Therefore, the process $\big(\chi_{t,s}^{\Gamma^*}\big)_{s\in[t,T]}$ is a true martingale. 
Hence, condition (iii) in Definition \ref{def:classAgammaT:fullobs} holds. Since $\gamma^*(s,x)$ is affine in $x$, conditions (i) and (ii) in Definition \ref{def:classAgammaT:fullobs} also hold. Therefore, $\Gamma^* \in \mathcal{A}^{\Gamma}$.
\medskip

2. We focus on the first inequality. For an admissible strategy $\Gamma = \left( \gamma_s\right)_{s \in [t,T]} \in \mathcal{A}^\Gamma$, 
Remark \ref{rk:candidate:saddle} implies that 
\begin{align}
        \frac{\partial u(s,x)}{\partial s} 
    + \left[b_s + B_s x + \Lambda_s \gamma_s \right]' Du(s,x) 
    + \frac{1}{2} \tr \left(\Lambda_s\Lambda_s' D^2u(s,x) \right)
    +\theta g\left(s,x,h^*(s,x),\gamma_s;\theta\right)
    \leq 0.
\end{align}
By Dynkin's formula,
\begin{align}
    & \mathbf{E}_{t,x}^{\mathbb{P}^\Gamma} \left[ u(T,X_T)\right]
                                    \nonumber\\
    =& u(t,x) + \mathbf{E}_{t,x}^{\mathbb{P}^\Gamma} \left[ \int_t^T \left\{       \frac{\partial u(s,X_s)}{\partial s} 
        + \left[b_s + B_s X_s + \Lambda_s \gamma_s \right]' Du(s,X_s) 
        + \frac{1}{2} \tr \left(\Lambda_s\Lambda_s' D^2u(s,X_s) \right)
        \right\} ds \right]
                                    \nonumber\\
    \leq& u(t,x) - \mathbf{E}_{t,x}^{\mathbb{P}^\Gamma} \left[ \int_t^T \theta g\left(s,X_s,h^*(s,X_s),\gamma_s;\theta\right) ds \right]
\end{align}
Recall that $u(T,y) = 0 \; \forall y \in \mathbb{R}^n$, hence
\begin{align}
    u(t,x) \geq \mathbf{E}_{t,x}^{\mathbb{P}^\Gamma} \left[ \int_t^T \theta g\left(s,X_s,h^*(s,X_s),\gamma_s;\theta\right) ds \right].
\end{align}
If $\Gamma = \Gamma^*$ then equality holds, that is,
\begin{align}
    u(t,x) = \mathbf{E}_{t,x}^{\mathbb{P}^{\Gamma^*}} \left[ \int_t^T \theta g\left(s,X_s,h^*(s,X_s),\gamma^*(s,X_s);\theta\right) ds \right].
\end{align}
Similarly, for any admissible strategy $H=(h_s)_{s\in[t,T]}\in\mathcal A^H$, the local saddle-point condition implies
\begin{align}
\frac{\partial u}{\partial s}(s,x)
+ \left[b_s+B_sx+\Lambda_s\gamma^*(s,x)\right]'Du(s,x)
+ \frac12\tr\!\left(\Lambda_s\Lambda_s'D^2u(s,x)\right)
+ \theta g\left(s,x,h_s,\gamma^*(s,x);\theta\right)\ge 0.
\end{align}
Applying Dynkin’s formula under $\mathbb P^{\Gamma^*}$ yields the second inequality in \eqref{eq:verification:saddle}.
\medskip

3. This result follows directly from 1. and 2. above.

\end{document}